\def\IZ{\mathbb {Z}}
\def\IR{\mathbb {R}}
\def\IC{\mathbb {C}}
\def\rank{\operatorname{rank}}
\def\bu{\boldsymbol{u}}
\def\bU{\boldsymbol{U}}
\def\bd{\boldsymbol{d}}
\def\bz{\boldsymbol{z}}
\def\bgamma{\boldsymbol{\gamma}}
\def\bsig{\boldsymbol{\sigma}}
\def\bxi{\boldsymbol{\xi}}
\def\bQ{\boldsymbol{Q}}
\def\bR{\boldsymbol{R}}
\def\bk{\boldsymbol{k}}
\def\hcZ{\widehat{\mathcal{Z}}}
\def\cK{\mathcal{K}}
\def\mfR{\mathfrak{R}}
\def\mfg{\mathfrak{g}}
\def\mfh{\mathfrak{h}}
\def\sfr{\mathsf{r}}
\def\inc#1{%
\setbox1\hbox{\includegraphics[scale=.200]{#1}}%
\vcenter{\box1}%
}
\def\incc#1{%
\setbox1\hbox{\includegraphics[scale=.240]{#1}}%
\vcenter{\box1}%
}
\newcommand{\qpoch}[3]{\left(#1;#2\right)_{#3}}
\newcommand{\qq}[2]{\left(#1\right)_{#2}}
\newcommand{\knot}[2]{\mathbf{#1}_{\mathbf{#2}}}
\theoremstyle{plain}
  \newtheorem{prob}{Problem}[section]
  \newtheorem{prop}[prob]{Proposition}
   \newtheorem{cor}[prob]{Corollary}
\theoremstyle{remark}
  \newtheorem{remark}[prob]{\bf Remark}
\renewcommand{\thefootnote}{\fnsymbol{footnote}}
\def\Left#1#2\Right{\begingroup%
   \def\ts@r{\nulldelimiterspace=0pt \mathsurround=0pt}%
   \let\@hat=#1%
   \def\sht@im{#2}%
   \def\@t{{\mathchoice{\def\@fen{\displaystyle}\k@fel}%
          {\def\@fen{\textstyle}\k@fel}%
          {\def\@fen{\scriptstyle}\k@fel}%
          {\def\@fen{\scriptscriptstyle}\k@fel}}}%
   \def\g@rin{\ts@r\left\@hat\vphantom{\sht@im}\right.}%
   \def\k@fel{\setbox0=\hbox{$\@fen\g@rin$}\hbox{%
      $\@fen \kern.3875\wd0 \copy0 \kern-.3875\wd0%
      \llap{\copy0}\kern.3875\wd0$}}%
      \def\pt@h{\mathopen\@t}\pt@h\sht@im%
      \Right}%
\def\Right#1{\let\@hat=#1%
   \def\st@m{\mathclose\@t}%
   \st@m\endgroup}
 \renewcommand{\theequation}{%
       \thesection.\arabic{equation}}
\def\eqnarray{%
 \stepcounter{equation}%
 \let\@currentlabel=\theequation
 \global\@eqnswtrue
 \global\@eqcnt\z@
 \tabskip\@centering
 \let\\=\@eqncr
 $$\halign to \displaywidth\bgroup\@eqnsel\hskip\@centering
 $\displaystyle\tabskip\z@{##}$&\global\@eqcnt\@ne
 \hfil$\displaystyle{{}##{}}$\hfil
 &\global\@eqcnt\tw@$\displaystyle\tabskip\z@{##}$\hfil
 \tabskip\@centering&\llap{##}\tabskip\z@\cr}
\begin{document}
\begin{titlepage}

\begin{flushright}
YITP-21-116
\end{flushright}

\begin{center}
\vspace*{1cm}
{\Large \bf
The colored Jones polynomials as vortex partition functions}
\vskip 1.5cm
{\large
Masahide Manabe${}^{a}$\footnote[2]{masahidemanabe@gmail.com}, 
Seiji Terashima${}^{b}$\footnote[3]{terasima@yukawa.kyoto-u.ac.jp}, and 
Yuji Terashima${}^{c}$\footnote[4]{yujiterashima@tohoku.ac.jp}}
\vskip 1.0em
{\it 
${}^a$%
School of Mathematics and Statistics, University of Melbourne
\footnote[1]{Affiliation until December 2020}, \\
Royal Parade, Parkville, VIC 3010, Australia \\

${}^b$%
Yukawa Institute for Theoretical Physics, Kyoto University, \\
Kyoto 606-8502, Japan \\

${}^c$%
Graduate School of Science, Tohoku University, \\
Aramaki-aza-Aoba 6-3, Aoba-ku, Sendai, 980-8578, Japan \\}
\end{center}
\vskip2.5cm

\begin{abstract}
We construct 3D $\mathcal{N}=2$ abelian gauge theories on $\mathbb{S}^2 \times \mathbb{S}^1$ labeled by knot diagrams whose K-theoretic vortex partition functions, 
each of which is a building block of twisted indices, 
give the colored Jones polynomials of knots in $\mathbb{S}^3$. 
The colored Jones polynomials are obtained as the Wilson loop expectation values along knots in $SU(2)$ Chern-Simons gauge theories on $\mathbb{S}^3$, 
and then our construction provides an explicit correspondence between 3D $\mathcal{N}=2$ abelian gauge theories and 3D $SU(2)$ Chern-Simons gauge theories. 
We verify, in particular, the applicability of our constructions 
to a class of tangle diagrams of 2-bridge knots with certain specific twists.
\end{abstract}
\end{titlepage}


\renewcommand{\thefootnote}{\arabic{footnote}} \setcounter{footnote}{0}

\section{Introduction}

In supersymmetric quantum field theories, recent localization techniques enable us to 
obtain various exact results which are made accessible to researchers 
for understanding or proposing conjectural dualities and 
for providing various mathematical conjectures, etc. See e.g. \cite{Pestun:2016zxk} 
for a review and references therein. 
In this paper, we focus on the A-twisted partition function of 3D $\mathcal{N}=2$ gauge theory on $\mathbb{S}^2 \times_{q} \mathbb{S}^1$,  
with the $\Omega$-deformation parameter $\hbar=-\log q$, 
known as the twisted index and 
obtained exactly by Benini and Zaffaroni in \cite{Benini:2015noa}, 
where the twisted partition function is factorized into 
the K-theoretic vortex partition functions 
which are considered as 
the partition functions on $\mathbb{D}^2 \times_{q} \mathbb{S}^1$ 
with appropriate boundary conditions
\cite{Pasquetti:2011fj, Dimofte:2011py, Beem:2012mb, Cecotti:2013mba, Fujitsuka:2013fga, Benini:2013yva, Benini:2015noa, Nieri:2015yia, Hwang:2017kmk, Crew:2020jyf}. 


The main object of this paper is to construct 
3D $\mathcal{N}=2$ abelian gauge theories $T[\cK]$ for knots $\cK$, 
and show that the colored Jones polynomials of $\cK$ in $\mathbb{S}^3$, 
which can be understood as the Wilson loop expectation values along $\cK$ 
in $SU(2)$ Chern-Simons gauge theories on $\mathbb{S}^3$ \cite{Witten:1988hf}, 
are obtained as the K-theoretic vortex partition functions of $T[\cK]$.
Therefore, our construction gives a new correspondence between 3D $\mathcal{N}=2$ abelian gauge theories and 3D $SU(2)$ Chern-Simons gauge theories:
$$
\textrm{vortex partition function of }T[\cK]=\textrm{Wilson loop along $\cK$ in $SU(2)$ Chern-Simons theory. }
$$
This reminds of the 3D-3D correspondence \cite{Terashima:2011qi, Terashima:2011xe, Dimofte:2011jd, Dimofte:2011ju, Cecotti:2011iy, Dimofte:2011py, Fuji:2012pi, Chung:2014qpa} (see also \cite{Dimofte:2010tz}) 
which is a 3D-3D analogue of the Alday-Gaiotto-Tachikawa
(4D-2D) correspondence \cite{Gaiotto:2009we, Alday:2009aq}, 
and says that 
the compactification of the 6D (2,0) theory of type $A_1$ twisted along a 3-manifold $M_3$ implies that 
a 3D $\mathcal{N}=2$ abelian gauge theory $T[M_3]$ labeled by $M_3$ 
is related to 
a 3D $SL(2,{\IC})$ Chern-Simons gauge theory on $M_3$.%
\footnote{Although the relation between the gauge theories $T[\cK]$ in this paper and the gauge theories $T[M_3]$ in the context of the 3D-3D correspondence is not clear, we use the same notation $T[\cK]$ to denote 
our gauge theories.} 
Remark that our correspondence treats Wilson loops in $SU(2)$ Chern-Simons theories which are
more manageable than $SL(2,{\IC})$ Chern-Simons theories.


More specifically, we propose how the colored Jones polynomials of knots, 
associated with the quantum group $U_q(\mathfrak{sl}_2)$, 
are constructed as K-theoretic vortex partition functions, 
obtained from a factorization of the twisted indices
of abelian gauge theories on $\mathbb{S}^2 \times_{q} \mathbb{S}^1$,
where the deformation parameter $q$ of $U_q(\mathfrak{sl}_2)$ 
is identified with the $\Omega$-deformation parameter.
Our strategy is to construct the building blocks of 
the colored Jones polynomial, given by the $R$-matrix etc. assigned to 
a tangle diagram of a knot \cite{Turaev:1988eb} 
(see \cite{Murakami:2010} for a survey), 
as building blocks of a K-theoretic vortex partition function. 
Then, for any knot diagram, we systematically associate 
a matter content and Chern-Simons couplings 
in 3D $\mathcal{N}=2$ abelian gauge theory 
whose K-theoretic vortex partition function, in a certain specific limit,  
gives the colored Jones polynomial of the knot. 
In this paper, we refer to the gauge theories $T[\cK]$ 
labeled by knot diagrams as \textit{knot-gauge theories}.
Here, for a knot, one can consider infinitely many tangles, 
related by the Reidemeister moves I, II and III, 
which provide the same colored Jones polynomial of the knot. 
As a result, for a knot $\cK$, we have infinitely many 3D $\mathcal{N}=2$ gauge theories $T[\cK]$ that are hopefully related to one another by some 3D dualities.

The localization formula in \cite{Benini:2015noa} 
of the A-twisted partition function is written as a middle-dimensional 
contour integral, in the space parametrized by the complex scalars 
(the real scalars in the vector multiplet and 
the holonomies of the gauge fields along $\mathbb{S}^1$), which is organized as 
the Jeffrey-Kirwan (JK) residue \cite{JeKi:1993} 
(see also \cite{BrVer:1999, SzVe:2003, Benini:2013xpa}). 
The JK residue depends on the choice of a vector (stability parameters), and 
the most technical part of our gauge theory constructions is how to choose 
the vector. 
In this paper, we show that there exists a choice of the vectors which gives
identifications between K-theoretic vortex partition functions and the
colored Jones polynomials for a class of tangle diagrams of 2-bridge knots with
certain specific twists.
We expect that, by a refinement of the choice, the identifications are also established for any knot diagram.%
\footnote{
The full twisted index with the boundary contribution in \eqref{3d_tw_pf} 
should not depend on the stability vector. 
In this paper, we ignore the boundary contribution by assuming that it is irrelevant to the vortex partition functions obtained for our choice of stability parameters. 
If the stability parameters are changed, the boundary contribution 
would need to be taken into account for obtaining the
colored Jones polynomials. 
This is a subtle point, and 
it may be interesting to investigate other choices of the 
stability vectors in the knot-gauge theories.}

It is expected that the K-theoretic vortex partition function of $T[\cK]$
is interpreted as a generating function of Euler characteristics for
moduli spaces of vortices \cite{Hwang:2017kmk, Bullimore:2018yyb, Bullimore:2018jlp, Crew:2020jyf}. 
Under our correspondence, it implies that we have a new
geometric interpretation of
the colored Jones polynomial. This will be an interesting mathematical
and physical problem.


This paper is organized as follows. In Section \ref{sec:k_v_pf}, the A-twisted partition function (twisted index) of 
3D $\mathcal{N}=2$ gauge theory on $\mathbb{S}^2 \times_{q} \mathbb{S}^1$ 
in \cite{Benini:2015noa} is recalled, 
and then the factorization into 
the K-theoretic vortex partition functions is provided. 
In Section \ref{sec:knot_building}, we first construct 
elementary constituents of knot-gauge theories which correspond to the building blocks of the colored Jones polynomials of knots as $R$-matrix. 
We then construct the knot-gauge theories $T[\cK]$, 
and discuss the JK residue procedure in detail. 
In particular, we show Proposition \ref{prop:bridge} for 
a class of 2-bridge knots, and exemplify the trefoil knot 
as well as the unknot. 
In Section \ref{sec:ded_gauge}, we construct yet another but simpler 
abelian gauge theories $T^{\text{red}}[\cK]$ labeled by knot diagrams referred to as \textit{reduced knot-gauge theories}, and exemplify the trefoil knot, 
the figure-eight knot and the 3-twist knot. 
In Appendix \ref{app:pochhammer}, properties of the $q$-Pochhammer symbol are summarized.

\section{K-theoretic vortex partition function}
\label{sec:k_v_pf}

In this section, we first recall the A-twisted partition function of 
3D $\mathcal{N}=2$ gauge theory on $\mathbb{S}^2 \times_{q} \mathbb{S}^1$ 
in \cite{Benini:2015noa},
with the $\Omega$-deformation parameter $\hbar=-\log q$, and 
then provide the building blocks of K-theoretic vortex partition function 
by factorizing the twisted partition function.%
\footnote{The $\mathcal{N}=2$ superconformal index on 
$\mathbb{S}^2 \times_{q} \mathbb{S}^1$ without the topological twist 
\cite{Kim:2009wb, Imamura:2011su, Kapustin:2011jm} also 
enjoys the similar factorization \cite{Beem:2012mb}.}

\subsection{Twisted partition function on $\mathbb{S}^2 \times_{q} \mathbb{S}^1$}

Consider a topologically twisted 3D $\mathcal{N}=2$ gauge theory 
on $\mathbb{S}^2 \times_{q} \mathbb{S}^1$ 
which consists of vector multiplet $V$ with $\rank(\mfg)$ Lie algebra $\mfg$ 
of a gauge group $G$ and 
$N$ chiral multiplets $\Phi_{\mfR_i}^{\sfr_i}$, 
$i=1,\ldots,N$, with representation $\mfR_i$ of $\mfg$, 
$U(1)_R$ charge $\sfr_i \in {\IZ}$, and 
(complexified) mass $\gamma_i=\bU_f^{\rho_{f,i}}$, 
where $\bU_f$ are the (complexified) holonomies 
(the real masses and the holonomies of the background gauge fields along $\mathbb{S}^1$) associated with a global symmetry $G_f$ and 
$\rho_{f,i}$ are the flavor weights.%
\footnote{Although, for 3D-3D correspondence discussed in this paper, 
it is enough to consider the cases with abelian symmetries, 
we will not assume it in this section.}
In \cite{Benini:2015noa}, by the supersymmetric localization \cite{Pestun:2016zxk}, 
the A-twisted partition function of the 3D gauge theory is obtained and 
written in terms of the following building blocks 
associated with the vector multiplet $V$ 
and each chiral multiplet $\Phi_{\mfR}^{\sfr}$ of 
mass $\gamma$ as
\begin{align}
&
\hcZ_{\bd}^{V}(\bU;q)=
\left(-q^{-\frac12}\right)^{\sum_{\alpha \in \Delta_+} \alpha(\bd)}\,
\prod_{\alpha \in \Delta_+} 
\left(1-\bU^{-\alpha} q^{\frac12 \alpha(\bd)}\right)
\left(1-\bU^{\alpha} q^{\frac12 \alpha(\bd)}\right),
\label{k_bb_1}
\\
&
\hcZ_{\bd}^{\Phi_{\mfR}^{\sfr}}(\bU;\gamma,q)=
\prod_{\rho \in \mfR}
\frac{\left(\gamma \bU^{\rho} \right)^{\frac{\rho(\bd)
+1-\sfr}{2}}}
{\qpoch{\gamma \bU^{\rho}q^{\frac{\sfr-\rho(\bd)}{2}}}{q}{\rho(\bd)+1-\sfr}},
\label{k_bb_2}
\end{align}
where $\Delta_+$ is the set of positive roots of $\mfg$, 
$\alpha(*)$ and $\rho(*)$ are the canonical pairings, 
and $\bU=(U_1,\ldots,U_{\rank(\mfg)})$, 
$U_a= \mathrm{e}^{-u_a}$, $\bU^{\alpha}= \mathrm{e}^{-\alpha(\bu)}$, 
$\bU^{\rho}= \mathrm{e}^{-\rho(\bu)}$. 
Here $\bu=(u_1,\ldots,u_{\rank(\mfg)})$, 
$u_a \in \mfh \otimes_{\IR}{\IC}$, are the complex scalars 
(the real scalars in the vector multiplet $V$ and 
the holonomies of the gauge fields along $\mathbb{S}^1$), 
where $\mfh$ is the Cartan subalgebra of $\mfg$.
These building blocks are indexed by 
the magnetic fluxes $\bd=(d_1,\ldots,d_{\rank(\mfg)})$, $d_a\in {\IZ}$, 
associated with $U(1)^{\rank(\mfg)} (\subset G)$ gauge fields on $\mathbb{S}^2$. 
The $q$-Pochhammer symbol $\qpoch{x}{q}{d}$ is defined in \eqref{qPoch}. 
When the gauge group contains central $U(1)^{\mathsf{c}} \subset G$ factors, 
the gauge theories admit deformations with the 3D complexified 
Fayet-Iliopoulos (FI) parameters $\tau^a$, $a=1,\ldots,\mathsf{c}$. 
Then the A-twisted partition function on $\mathbb{S}^2 \times_{q} \mathbb{S}^1$ 
without Chern-Simons factors 
(see Remark \ref{rem:cs_factor} for Chern-Simons factors) 
is given by \cite{Benini:2015noa}
\begin{align}
\begin{split}
&
Z_{\mathbb{S}^2 \times_{q} \mathbb{S}^1}(\bz,\bgamma,q) = \frac{1}{|\mathcal{W}|}
\sum_{\bd \in {\IZ}^{\rank(\mfg)}} 
\oint_{\Gamma} d^{\, \rank(\mfg)} u \, 
\hcZ_{\bd}^{\textrm{total}}(\bU;\bz,\bgamma,q)
+\textrm{boundary contribution},
\\
&
\hcZ_{\bd}^{\textrm{total}}(\bU;\bz,\bgamma,q):=
\bz^{\bd}\, \hcZ_{\bd}^{V}(\bU;q)\,
\prod_{i=1}^N \hcZ_{\bd}^{\Phi_{\mfR_i}^{\sfr_i}}(\bU;\gamma_i,q).
\label{3d_tw_pf}
\end{split}
\end{align}
Here $|\mathcal{W}|$ is the order of the Weyl group of $G$, and 
$\bz^{\bd}=\mathrm{e}^{2\pi \mathrm{i}\, \tau(\bd)}$, where 
the pairing $\tau(\bd)=\sum_a \tau^a d_a$ is defined by the embedding 
${\boldsymbol\tau} \hookrightarrow \mfh^* \otimes_{\IR}{\IC}$. 
The middle-dimensional contour integral along $\Gamma$ is defined by 
the JK residue \cite{JeKi:1993} 
(see also \cite{BrVer:1999, SzVe:2003, Benini:2013xpa}) which 
picks up relevant poles in $\hcZ_{\bd}^{\Phi_{\mfR_i}^{\sfr_i}}$ 
depending on the choice of a vector (stability parameters). 
In this paper, for simplicity we identify the stability parameters 
with the FI parameters $\xi_a=\textrm{Im}(\tau^a)$ 
(see \cite{Benini:2013xpa} for the difference between them). 
Although the partition function has a boundary contribution at 
$u_a = \pm \infty$, we assume that the boundary contribution is irrelevant to 
the vortex partition functions in this paper 
and only focus on the bulk (first) factor 
(see \cite{Bullimore:2020nhv} for an interpretation of the boundary contribution 
as \textit{topological saddles} of an effective supersymmetric quantum mechanics).

Note that, the background magnetic fluxes $\bd_f$ 
associated with the global symmetry $G_f$ are also introduced by the shifts 
in \eqref{k_bb_2} as
\begin{align}
\rho(\bd) \to \rho(\bd) + d_{f,i}, \qquad
d_{f,i}=\rho_{f,i}(\bd_f).
\label{bg_flux}
\end{align}

\begin{remark}\label{rem:cs_factor}
Assume that the gauge symmetry $G$ to be abelian 
(or consider abelian factors in $G$). 
The gauge/flavor-gauge/flavor/R Chern-Simons factors 
associated with $G$ and $G_f$ are introduced by
\cite{Benini:2015noa, Benini:2016hjo, Closset:2016arn} 
(see also \cite{Ueda:2019qhg})
\begin{align}
\begin{split}
&
\hcZ_{\bd}^{\text{g-g}}(\bU;\bk)=
\prod_{a,b} U_a^{k_{ab} d_b},
\quad
\hcZ_{\bd, \bd_f}^{\text{g-f}}(\bU;\bgamma,\bk^{\text{g-f}})=
\prod_{i, a}
\left(\gamma_i^{d_a} U_a^{d_{f,i}}\right)^{k_{a i}^{\text{g-f}}},
\\
&
\hcZ_{\bd_f}^{\text{f-f}}(\bgamma,\bk^{\text{f-f}})=
\prod_{i, j}
\gamma_i^{k_{ij}^{\text{f-f}} d_{f,j}},
\quad
\hcZ^{\text{g-R}}(\bU;\bk^{\text{g-R}})=
\prod_{a} U_a^{k_a^{\text{g-R}}},
\quad
\hcZ^{\text{f-R}}(\bgamma,\bk^{\text{f-R}})=
\prod_{i} \gamma_{i}^{k_{i}^{\text{f-R}}}.
\label{cs_factor}
\end{split}
\end{align}
Here $\bk$, $\bk^{\text{g-f}}$, $\bk^{\text{f-f}}$, 
$\bk^{\text{g-R}}$ and $\bk^{\text{f-R}}$ 
are, respectively, the set of relevant Chern-Simons couplings 
$k_{ab}=k_{ba}$, $k_{a i}^{\text{g-f}}$, $k_{ij}^{\text{f-f}}=k_{ji}^{\text{f-f}}$, 
$k_a^{\text{g-R}}$ and $k_{i}^{\text{f-R}}$.
\end{remark}

\subsection{K-theoretic vortex partition function by factorization}

Let us recall how the A-twisted partition function is factorized into 
K-theoretic vortex partition functions 
\cite{Benini:2015noa, Nieri:2015yia, Hwang:2017kmk, Crew:2020jyf}. 
For a choice of FI parameters, 
the contour $\Gamma$ in \eqref{3d_tw_pf} encloses the poles in 
$\hcZ_{\bd}^{\Phi_{\mfR}^{\sfr}}$, with the inclusion of 
the background magnetic fluxes by \eqref{bg_flux}, as 
\begin{align}
\gamma \bU^{\rho}=q^{-\frac12 \rho(\bd)+\frac12 \sfr + p},
\quad
p=-\frac12 \rho_f(\bd_f),
-\frac12 \rho_f(\bd_f)+1,\ldots,
\rho(\bd)+\frac12 \rho_f(\bd_f)-\sfr,
\label{poles_factor}
\end{align}
with $\rho(\bd)+\rho_f(\bd_f)-\sfr \ge 0$. By a change of variables
\begin{align}
U_a=q^{\frac{d'_a-d''_a}{2}}\sigma_a,\qquad
\bd=\bd' + \bd'',\qquad
\gamma=q^{\frac{\rho_f(\bd_f')-\rho_f(\bd_f'')}{2}}\widetilde{\gamma},\qquad
\bd_f=\bd_f'+\bd_f'',
\label{repara_vortex}
\end{align}
with
$\rho(\bd')=p+\sfr/2-\rho_f(\bd_f')/2+\rho_f(\bd_f'')/2$,
the poles \eqref{poles_factor} yield
\begin{align}
\widetilde{\gamma} \bsig^{\rho}=1,
\label{poles_factor_ref}
\end{align}
and the sum over the magnetic fluxes and the integration 
in \eqref{3d_tw_pf} for $\Phi_{\mfR}^{\sfr}$ are written as
\begin{align}
\sum_{\rho(\bd)+\rho_f(\bd_f) \ge \sfr} 
\sum_{p=-\frac12 \rho_f(\bd_f)}^{\rho(\bd)+\frac12\rho_f(\bd_f)-\sfr} 
\oint_{\gamma \bU^{\rho}=q^{-\frac12 \rho(\bd)+\frac12 \sfr + p}}
&=
\sum_{\rho(\bd)+\rho_f(\bd_f) \ge \sfr} \sum_{\rho(\bd')=\frac{\sfr}{2}-\rho_f(\bd_f')}^{\rho(\bd)+\rho_f(\bd_f'')-\frac{\sfr}{2}}
\oint_{\gamma \bsig^{\rho}=1}
\nonumber
\\
&=
\sum_{\rho(\bd')+\rho_f(\bd_f'), \rho(\bd'')+\rho_f(\bd_f'') \ge \frac{\sfr}{2}}
\oint_{\gamma \bsig^{\rho}=1}.
\label{pole_sum}
\end{align}
Here,  we can take $\bd_f'$ as an arbitrary integer satisfying $0 \leq
\bd_f' \leq \bd_f$.
Because the twisted index and the discussion below do not depend on this choice,
we will take, for simplicity, $\bd_f'=\bd_f''$ below, and then
$\widetilde{\gamma}=\gamma$.
Under the reparametrization \eqref{repara_vortex}, 
by using $\qpoch{x}{q}{d}=\qpoch{q^{d-1}x}{q^{-1}}{d}$ in \eqref{qPoch_prop1} 
and $\qpoch{x}{q}{d_1+d_2}
=\qpoch{x}{q}{d_1}\qpoch{q^{d_1}x}{q}{d_2}$ in \eqref{qPoch_prop2}, 
the building blocks \eqref{k_bb_1} and \eqref{k_bb_2} are factorized, respectively, as
\begin{align}
\hcZ_{\bd}^{V}(\bU;q)=
I_{\textrm{1-loop}}^{V}(\bsig)\,
I_{\bd'}^{V}(\bsig;q)\,
I_{\bd''}^{V}(\bsig;q^{-1}),
\end{align}
where
\begin{align}
\begin{split}
&
I_{\textrm{1-loop}}^{V}(\bsig)=
\prod_{\alpha \in \Delta_+} 
(-1)\left(\bsig^{\frac{\alpha}{2}}-\bsig^{-\frac{\alpha}{2}}
\right)^2,\\
&
I_{\bd}^{V}(\bsig;q)=
\prod_{\alpha \in \Delta_+} 
\left(-q^{-\frac12}\right)^{\alpha(\bd)}
\frac{1-\bsig^{\alpha}q^{\alpha(\bd)}}
{1-\bsig^{\alpha}}
=
\prod_{\alpha \in \Delta_+} 
\left(-q^{-\frac12}\right)^{\alpha(\bd)}
\frac{\qpoch{q \bsig^{\alpha}}{q}{\alpha(\bd)}}
{\qpoch{\bsig^{\alpha}}{q}{\alpha(\bd)}},
\label{bk_vector}
\end{split}
\end{align}
and
\begin{align}
\hcZ_{\bd}^{\Phi_{\mfR}^{\sfr}}(\bU;\gamma,q)=
I_{\textrm{1-loop}}^{\Phi_{\mfR}^{\sfr}}(\bsig;\gamma,q)\,
I_{\bd'}^{\Phi_{\mfR}^{\sfr}}(\bsig;\gamma,q)\,
I_{\bd''}^{\Phi_{\mfR}^{\sfr}}(\bsig;\gamma,q^{-1}),
\label{matter_factor}
\end{align}
where
\begin{align}
\begin{split}
&
I_{\textrm{1-loop}}^{\Phi_{\mfR}^{\sfr}}(\bsig;\gamma,q)=
\prod_{\rho \in \mfR} 
\frac{\left(\gamma \bsig^{\rho} \right)^{\frac{1-\sfr}{2}}}
{\qpoch{\gamma \bsig^{\rho} q^{\frac{\sfr}{2}}}{q}{1-\sfr}},
\\
&
I_{\bd}^{\Phi_{\mfR}^{\sfr}}(\bsig;\gamma,q)=
\prod_{\rho \in \mfR}
\frac{q^{\frac14 \rho(\bd) \left(\rho(\bd)+1-\sfr\right)}
\left(\gamma \bsig^{\rho} \right)^{\frac12 \rho(\bd)}}
{\qpoch{\gamma \bsig^{\rho} q^{1-\frac{\sfr}{2}}}{q}{\rho(\bd)}},
\label{bk_hyper}
\end{split}
\end{align}
and the background magnetic fluxes are now introduced 
by the similar shift to \eqref{bg_flux} 
(i.e. $\rho(\bd) \to \rho(\bd) + 2\rho_f(\bd_f)$ for $\hcZ_{\bd}^{\Phi_{\mfR}^{\sfr}}$ and 
$\rho(\bd) \to \rho(\bd) + \rho_f(\bd_f)$ for $I_{\bd}^{\Phi_{\mfR}^{\sfr}}$).
The twisted partition function \eqref{3d_tw_pf} 
is then factorized into
\begin{align}
I_{\textrm{1-loop}}(\bsig;\bgamma,q)=
I_{\textrm{1-loop}}^{V}(\bsig)\,
\prod_{i=1}^N I_{\textrm{1-loop}}^{\Phi_{\mfR_i}^{\sfr_i}}(\bsig;\gamma_i,q),
\end{align}
and the K-theoretic vortex partition function
\begin{align}
I_{\textrm{vortex}}(\bsig;\bz,\bgamma,q)=
\sum_{\bd} I_{\bd}(\bsig;\bz,\bgamma,q)=
\sum_{\bd} \bz^{\bd}\, I_{\bd}^{V}(\bsig;q)\,
\prod_{i=1}^N I_{\bd}^{\Phi_{\mfR_i}^{\sfr_i}}(\bsig;\gamma_i,q),
\label{vortex_I_def_K}
\end{align}
as \cite{Benini:2015noa, Nieri:2015yia, Hwang:2017kmk, Crew:2020jyf} 
(see also \cite{Fujitsuka:2013fga, Benini:2013yva})
\footnote{The factorization into the ``1-loop factor'' and the ``vortex factor'' has ambiguities, and so it is desirable to prescribe how to fix them. 
A decomposition of $\mathbb{S}^2 \times_{q} \mathbb{S}^1$ into two 
$\mathbb{D}^2 \times_{q} \mathbb{S}^1$ is known to lead to the factorization \cite{Beem:2012mb}, and 
the partition functions on $\mathbb{D}^2 \times_{q} \mathbb{S}^1$ with appropriate boundary conditions are expected to unambiguously provide 
the K-theoretic vortex partition functions 
\cite{Yoshida:2014ssa, Crew:2020psc, Bullimore:2020jdq}.
}
\begin{align}
Z_{\mathbb{S}^2 \times_{q} \mathbb{S}^1}(\bz,\bgamma,q) \sim 
\sum_{\bsig^*}
I_{\textrm{1-loop}}(\bsig^*;\bgamma,q)\,
I_{\textrm{vortex}}(\bsig^*;\bz,\bgamma,q)\,
I_{\textrm{vortex}}(\bsig^*;\bz,\bgamma,q^{-1}),
\label{z_factorization}
\end{align}
up to an overall normalization and the boundary contribution, where 
the shift \eqref{bg_flux} for 
each $I_{\bd}^{\Phi_{\mfR_i}^{\sfr_i}}$ introduces 
background magnetic fluxes. 
Here the domain of $\bd$ is determined as \eqref{pole_sum} 
for a choice of the FI parameters. 
The domain of $\bsig^*=\bsig(\bgamma)$ is determined as well.
Note that when $\rho(\bd)>0$, the poles \eqref{poles_factor_ref} 
for an $\sfr=0$ chiral multiplet $\Phi_{\mfR}^{0}$ are 
in the ``1-loop factor'' $I_{\textrm{1-loop}}^{\Phi_{\mfR}^{0}}$ for $\sfr=0$, 
whereas the poles \eqref{poles_factor_ref} 
for an $\sfr=2$ chiral multiplet $\Phi_{\mfR}^{2}$ are in 
the ``vortex factor'' $I_{\bd}^{\Phi_{\mfR}^{2}}$ for $\sfr=2$ 
(one zeros from $I_{\textrm{1-loop}}^{\Phi_{\mfR}^{2}}$ and 
two poles from $I_{\bd'}^{\Phi_{\mfR}^{2}}$ and $I_{\bd''}^{\Phi_{\mfR}^{2}}$ 
in \eqref{matter_factor}).

\begin{remark}
By \eqref{qPoch_prop1}, 
the factors in \eqref{bk_hyper} are rewritten as
\begin{align}
\begin{split}
&
I_{\textrm{1-loop}}^{\Phi_{\mfR}^{\sfr}}(\bsig;\gamma,q)=
\prod_{\rho \in \mfR} 
(-1)^{1-\sfr}
\left(\gamma \bsig^{\rho} \right)^{-\frac{1-\sfr}{2}}
\qpoch{\gamma^{-1} \bsig^{-\rho} q^{1-\frac{\sfr}{2}}}{q}{\sfr-1},
\\
&
I_{\bd}^{\Phi_{\mfR}^{\sfr}}(\bsig;\gamma,q)=
\prod_{\rho \in \mfR}
(-1)^{\rho(\bd)}
q^{-\frac14 \rho(\bd) \left(\rho(\bd)+1-\sfr\right)}
\left(\gamma \bsig^{\rho} \right)^{-\frac12 \rho(\bd)}
\qpoch{\gamma^{-1} \bsig^{-\rho} q^{\frac{\sfr}{2}}}{q}{-\rho(\bd)},
\label{bk_hyper_inv}
\end{split}
\end{align}
where $\bsig^{-1}=(\sigma_1^{-1},\ldots,\sigma_{\rank(\mfg)}^{-1})$. 
\end{remark}


\begin{remark}\label{rem:cs_vortex}
Similarly, we factorize the Chern-Simons factors in \eqref{cs_factor} as
\begin{align}
\begin{split}
&
\hcZ_{\bd}^{\text{g-g}}(\bU;\bk)=
I_{\bd'}^{\text{g-g}}(\bsig;q, \bk)\,
I_{\bd''}^{\text{g-g}}(\bsig;q^{-1}, \bk),
\\
&
\hcZ_{\bd, 2\bd_f}^{\text{g-f}}(\bU;\bgamma, \bk^{\text{g-f}})=
I_{\bd', \bd_f}^{\text{g-f}}(\bsig;\bgamma, q, \bk^{\text{g-f}})\,
I_{\bd'', \bd_f}^{\text{g-f}}(\bsig;\bgamma, q^{-1}, \bk^{\text{g-f}}),
\\
&
\hcZ_{2\bd_f}^{\text{f-f}}(\bgamma,\bk^{\text{f-f}})=
I_{\bd_f}^{\text{f-f}}(q, \bk^{\text{f-f}})\,
I_{\bd_f}^{\text{f-f}}(q^{-1}, \bk^{\text{f-f}}),
\\
&
\hcZ^{\text{g-R}}(\bU;\bk^{\text{g-R}})=
I_{\textrm{1-loop}}^{\text{g-R}}(\bsig;\bk^{\text{g-R}})\,
I_{\bd'}^{\text{g-R}}(q, \bk^{\text{g-R}})\,
I_{\bd''}^{\text{g-R}}(q^{-1}, \bk^{\text{g-R}}),
\\
&
\hcZ^{\text{f-R}}(\bgamma, \bk^{\text{f-R}})=
I_{\textrm{1-loop}}^{\text{f-R}}(\bgamma, \bk^{\text{f-R}})\,
I_{\bd_f}^{\text{f-R}}(q, \bk^{\text{f-R}})\,
I_{\bd_f}^{\text{f-R}}(q^{-1}, \bk^{\text{f-R}}),
\label{cs_factor_factor}
\end{split}
\end{align}
where
\begin{align}
I_{\textrm{1-loop}}^{\text{g-R}}(\bsig;\bk^{\text{g-R}})=
\prod_{a} \sigma_a^{k_a^{\text{g-R}}},
\quad
I_{\textrm{1-loop}}^{\text{f-R}}(\bgamma, \bk^{\text{f-R}})=
\prod_{i} \gamma_i^{k_i^{\text{f-R}}},
\label{cs_factor_1loop}
\end{align}
are considered to be normalization (1-loop) factors 
which are irrelevant to $d_a$ and $d_{f,i}$, and
\begin{align}
\begin{split}
&
I_{\bd}^{\text{g-g}}(\bsig;q, \bk)=
\prod_{a,b}
\left(\sigma_a^{d_b} q^{\frac12 d_a d_b}\right)^{k_{ab}},
\quad
I_{\bd, \bd_f}^{\text{g-f}}(\bsig;\bgamma, q, \bk^{\text{g-f}})=
\prod_{i, a}
\left(\gamma_i^{d_a} \sigma_a^{d_{f,i}} 
q^{d_{f,i} d_a}\right)^{k_{a i}^{\text{g-f}}},
\\
&
I_{\bd_f}^{\text{f-f}}(q, \bk^{\text{f-f}})=
\prod_{i,j} 
\left(\gamma_i^{d_{f,j}} q^{\frac12 d_{f,i} d_{f,j}}\right)^{k_{ij}^{\text{f-f}}},
\\
&
I_{\bd}^{\text{g-R}}(q, \bk^{\text{g-R}})=
\prod_{a} q^{\frac12 k_a^{\text{g-R}} d_a},
\quad
I_{\bd_f}^{\text{f-R}}(q, \bk^{\text{f-R}})=
\prod_{i} q^{\frac12 k_i^{\text{f-R}} d_{f,i}}.
\label{cs_factor_vortex}
\end{split}
\end{align}
Here the background magnetic fluxes are taken to be $2\bd_f$ 
for the factorizations of the mass parameters $\gamma_i$ as 
$\gamma_i = q^{(d_{f,i}-d_{f,i})/2} \gamma_i$ 
following \eqref{repara_vortex} with $\bd_f'=\bd_f''$.
\end{remark}

\section{Knot-gauge theory}
\label{sec:knot_building}

We construct 3D $\mathcal{N}=2$ abelian gauge theories 
labeled by knot diagrams, referred to as knot-gauge theories, 
whose K-theoretic vortex partition functions give 
the colored Jones polynomials of knots. 
For that purpose, we first recall how the colored Jones polynomials are 
obtained for tangle diagrams of knots from elementary building blocks as $R$-matrix, 
and, in Section \ref{subsec:gauge_knot_build}, construct 
corresponding constituents of knot-gauge theories for the elementary building blocks. 
We then discuss the JK residue procedure in the knot-gauge theories, and 
in particular show that, for a class of knot diagrams in Proposition \ref{prop:bridge}, the K-theoretic vortex partition functions actually give the colored Jones polynomials.

\subsection{A brief summary of colored Jones polynomials from tangles}

Let us recall the building blocks $R$, $R^{-1}$, $\mu$ and $\mu^{-1}$, 
associated with the quantum group $U_q(\mathfrak{sl}_2)$, 
which give the $n$-colored Jones polynomials of knots colored by symmetric representations $S^n$ \cite{Turaev:1988eb} (see also \cite{Murakami:2010}).  
The $(n+1)^2\times (n+1)^2$ $R$-matrix 
$R=R(q)$ for $n$-colored Jones polynomial 
of a knot $\cK$, 
assigned to each positive crossing in a tangle diagram of $\cK$, 
is given by \cite{KirbyMelvin}
\begin{align}
R^{d_{12}d_{31}}_{d_{42}d_{34}}
\quad &= \quad 
\inc{R_matrix}
\nonumber
\\
&=\quad
(-1)^{d_{31}-d_{42}}
q^{\frac12 (d_{31}-d_{42})(d_{31}-d_{42}-1) + d_{42}d_{34}
-\frac12 n(d_{31}+d_{34}) + \frac14 n^2}
\nonumber
\\
&\hspace{15em}\times
\frac{\qq{q}{n-d_{42}} \qq{q}{d_{34}}}
{\qq{q}{d_{12}} \qq{q}{n-d_{31}} \qq{q}{d_{31}-d_{42}}},
\label{R_matrix}
\end{align}
where $\qq{q}{d}=\qpoch{q}{q}{d}$. 
Here, following \cite{ChoMurakami}, the variables 
$d_i$, $i=1,2,3,4$, are assigned to regions around the crossing, 
$d_{ij}=d_i-d_j$ are assigned to arcs, and 
\begin{align}
d_{12},\ \ 
d_{31},\ \ 
d_{42},\ \ 
d_{34},\ \ 
d_{31}-d_{42} \in \{0,1,\ldots,n\}.
\label{R_charge_c}
\end{align}
For the variables $d_i$ that do not satisfy the conditions \eqref{R_charge_c}, 
$R^{d_{12}d_{31}}_{d_{42}d_{34}}=0$ is defined. 
Note that the arrows in \eqref{R_matrix} are promised to 
be in the downward directions (see \eqref{twist_R} for local deformations 
of tangle).
The inverse $R$-matrix $R^{-1}=R(q)^{-1}$, 
assigned to each negative crossing, is similarly given by
\begin{align}
\left(R^{-1}\right)^{d_{12}d_{31}}_{d_{42}d_{34}}
\quad &= \quad 
\inc{inv_R_matrix} 
\quad
=\quad
R(q^{-1})^{d_{31}d_{12}}_{d_{34}d_{42}}
\nonumber
\\
&=\quad
q^{ - d_{12}d_{31} + \frac12 n(d_{31}+d_{34}) - \frac14 n^2}\,
\frac{\qq{q}{d_{42}} \qq{q}{n-d_{34}}}
{\qq{q}{n-d_{12}} \qq{q}{d_{31}} \qq{q}{d_{12}-d_{34}}},
\label{R_matrix_m}
\end{align}
where
\begin{align}
d_{12},\ \ 
d_{31},\ \ 
d_{42},\ \ 
d_{34},\ \ 
d_{12}-d_{34} \in \{0,1,\ldots,n\}.
\label{invR_charge_c}
\end{align} 
To each local minimum and maximum in the tangle diagram
\begin{align}
\begin{split}
&
\mu_{d_{12}}\quad
=\quad \inc{minimum} \quad
=\quad
q^{d_{12}-\frac12n},
\\
&
\mu^{-1}_{d_{12}}\quad
=\quad \inc{maximum} \quad
=\quad
q^{-d_{12}+\frac12n},
\label{loc_min_max}
\end{split}
\end{align}
are assigned, respectively, where $0 \le d_{12} \le n$. 
We also set
\begin{align}
\inc{minimum_triv} \quad
=\quad \inc{maximum_triv} \quad
=\quad
1.
\end{align}

Using the above building blocks, 
the $n$-colored Jones polynomial of a knot $\cK$ is given by
\begin{align}
\overline{J}_n^{\cK}(q)=
\sum_{\bd} 
\left(\prod_{i \in P(\bd)} q^{-\frac14 n(n+2)}R_i \right)
\left(\prod_{i \in N(\bd)} q^{\frac14 n(n+2)}R_i^{-1} \right)
\left(\prod_{i \in \textrm{min}(\bd)} \mu_i \right)
\left(\prod_{i \in \textrm{max}(\bd)} \mu_i^{-1} \right).
\label{c_jones_knot}
\end{align}
Here, for a $(0,0)$-tangle (the closure of a $(1,1)$-tangle) diagram of $\cK$, 
$P(\bd)$, $N(\bd)$, $\textrm{min}(\bd)$ and $\textrm{max}(\bd)$ 
are the set of positive crossings, negative crossings, local minima and 
local maxima with the variables $\bd$, respectively, and 
$R_i$, $R_i^{-1}$, $\mu_i$ and $\mu_i^{-1}$ denote 
the $i$-th $R$-matrix, the $i$-th inverse $R$-matrix, 
the quantity assigned to the $i$-th local minimum and 
the quantity assigned to the $i$-th local maximum, respectively. 
The domain of $\bd$ is determined for the given $(0,0)$-tangle diagram of $\cK$ by \eqref{R_charge_c} and \eqref{invR_charge_c}. 
The normalized colored Jones polynomial is also introduced by
\begin{align}
J_n^{\cK}(q)=
\frac{\overline{J}_n^{\cK}(q)}{\overline{J}_n^{\mathbf{0}}(q)},
\end{align}
where 
\begin{align}
\overline{J}_n^{\mathbf{0}}(q)
=\sum_{d=0}^{n} \ \inc{unknot}
=\sum_{d=0}^{n} \mu_{d}=
\frac{q^{\frac12 (n+1)}-q^{-\frac12 (n+1)}}
{q^{\frac12}-q^{-\frac12}}=
q^{-\frac12 n} \frac{\qpoch{q^2}{q}{n}}{\qpoch{q}{q}{n}},
\label{unknot_jones}
\end{align}
is the (unnormalized) colored Jones polynomial of unknot.
The normalized colored Jones polynomial 
is shown to be given for the $(1,1)$-tangle diagram of $\cK$ 
\cite[Lemma 3.9]{KirbyMelvin} (see also \cite[Section 2.5]{Murakami:2010}) 
with an incoming (outgoing) constant $f \in \{0,1,\ldots,n\}$ 
assigned to the external arcs as in Figure \ref{fig:tangle_31}, 
which specifies a basis of the vector space attached to the tangle diagram,
\begin{align}
J_n^{\cK}(q)=
\sum_{\bd} 
\left(\prod_{i \in P(\bd,f)} q^{-\frac14 n(n+2)}R_i \right)
\left(\prod_{i \in N(\bd,f)} q^{\frac14 n(n+2)}R_i^{-1} \right)
\left(\prod_{i \in \textrm{min}(\bd,f)} \mu_i \right)
\left(\prod_{i \in \textrm{max}(\bd,f)} \mu_i^{-1} \right).
\label{c_norm_jones_knot}
\end{align}
Remark that the normalized colored Jones polynomial does not depend on the constant $f$.

\begin{figure}[t]
\centering
\includegraphics[width=90mm]{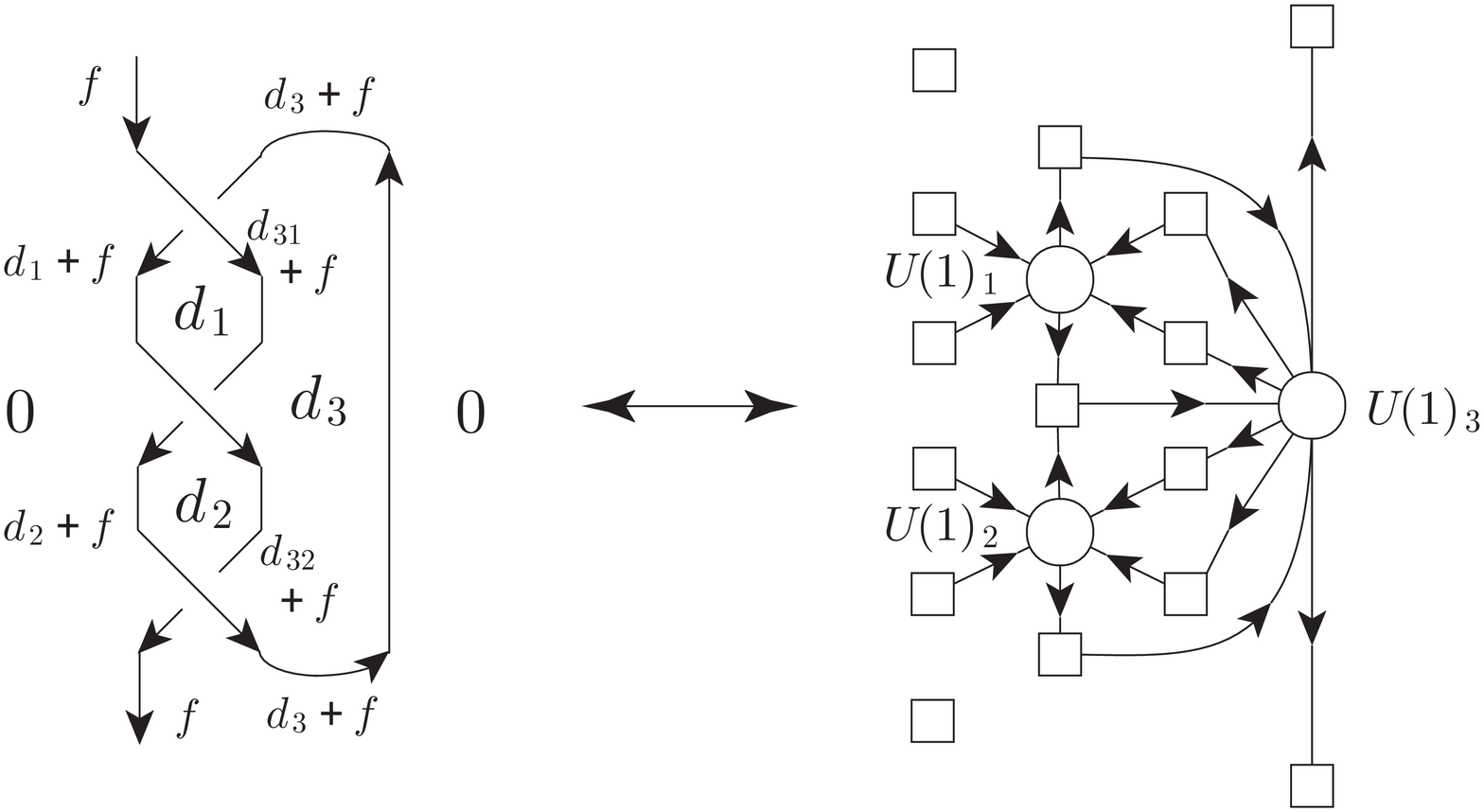}
\caption{The left figure is a $(1,1)$-tangle diagram of 
the trefoil knot $\knot{3}{1}$, and 
the right figure, 
which is obtained by the associations as in Figure \ref{fig:r_quiver} 
for all the crossings, 
represents the associated quiver-like diagram of 
$U(1)^3$ gauge theory $T[\knot{3}{1}]$ with 15 chiral fields 
(in Table \ref{31_matter}) and extra Chern-Simons couplings. 
The variables $d_1, d_2, d_3$ are assigned to the bounded regions, and 
an incoming (outgoing) constant $f \in \{0,1,\ldots,n\}$ 
is identified with the background magnetic flux for 
the global symmetry $U(1)_{ext}$.}
\label{fig:tangle_31}
\end{figure}

\subsection{Building blocks of knot-gauge theories}
\label{subsec:gauge_knot_build}

We recalled that the normalized colored Jones polynomials are obtained 
from the building blocks $R$, $R^{-1}$, $\mu$ and $\mu^{-1}$ by \eqref{c_norm_jones_knot}. Based on the formulation, 
we propose a construction of 3D $\mathcal{N}=2$ abelian gauge theories 
labeled by $(1,1)$-tangle diagrams.

Consider a (1,1)-tangle diagram of a knot $\cK$ 
with the number of loops (regions) $N_v$ and 
an incoming (outgoing) constant $f \in \{0,1,\ldots,n\}$ 
(see Figure \ref{fig:tangle_31} for an example of $N_v=3$).
We assign variables $d_I$, $I=0, 1, \ldots, N_v$, to the bounded and unbounded regions, and take $d_0=0$ without loss of generality by a shift of variables. 
By associating a $U(1)_I$ gauge symmetry to the region with the non-zero variable $d_I$, 
we construct a $U(1)^{N_v}=U(1)_1 \times \cdots \times U(1)_{N_v}$ gauge theory $T[\cK]$, where the variables $d_I$ are identified with the magnetic fluxes.
Here the number of crossings is also $N_v$, and 
a matter content for the $R$- or inverse $R$-matrix assigned at each crossing 
is constructed in the following, 
where each matter content provides a building block of 
the 3D $\mathcal{N}=2$ $U(1)^{N_v}$ gauge theory $T[\cK]$, that we call 
the knot-gauge theory. 
We will see that 
the knot-gauge theory has a global symmetry $U(1)_F^{2N_v}$, and 
the color $n$ and the constant $f$ are, respectively, identified with 
the background magnetic fluxes for 
global symmetries $U(1)_c$ and $U(1)_{ext}$ in $U(1)_F^{2N_v}$:
\begin{align}
U(1)_c, \ U(1)_{ext} \subset U(1)_F^{2N_v}.
\label{sub_ext}
\end{align}
In the following, we construct the building blocks of $T[\cK]$. 
Once they are constructed, it is straightforward to construct 
the gauge theory $T[\cK]$ from them.

\subsubsection{$R$-matrix}\label{subsubsec:R_gauge}

\begin{figure}[t]
\centering
\includegraphics[width=120mm]{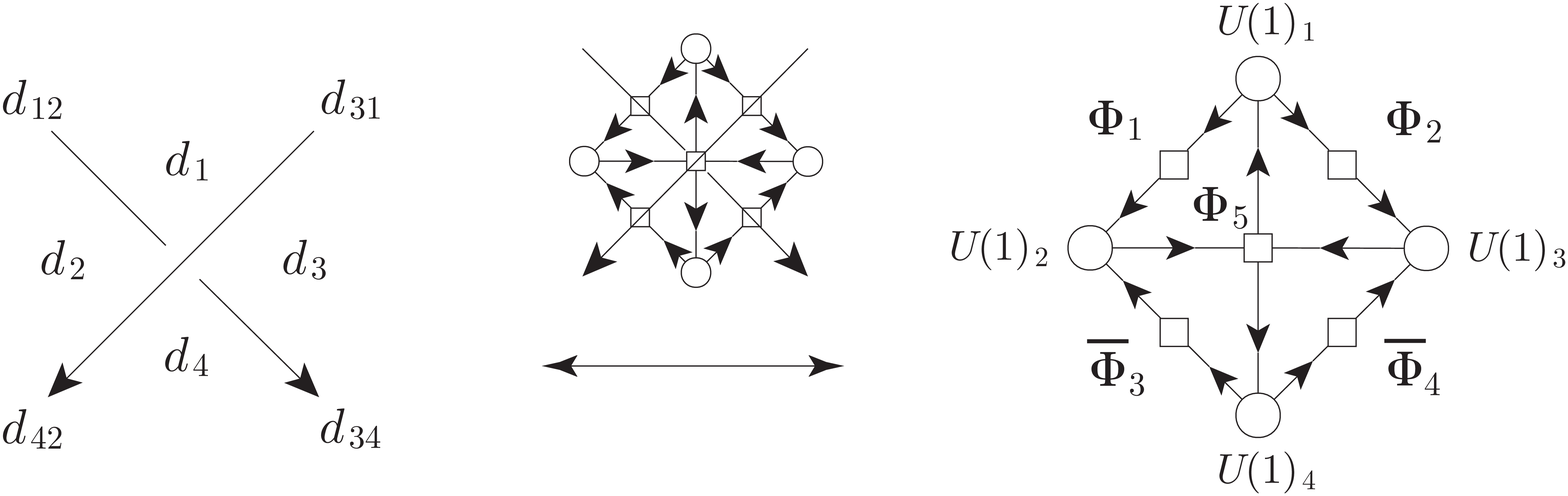}
\caption{For the $R$-matrix, 
the quiver-like diagram on the right is associated, where 
the circles represent the $U(1)_i$ gauge nodes and 
the squares denote five chiral fields given in Table \ref{R_mat_tb}.
The same applies to the inverse $R$-matrix in Section \ref{subsubsec:invR_gauge}.}
\label{fig:r_quiver}
\end{figure}

\begin{table}[t]
\begin{center}
\begin{tabular}{|c|c|c|c|c||c|c|c||c|}
\hline
Field & $U(1)_{1}$ & $U(1)_{2}$ 
& $U(1)_{3}$ & $U(1)_{4}$ & $U(1)_{c}$ & $U(1)_{ext}$ & mass & $U(1)_R$ 
\\ 
\hline
$\Phi_{1}$ & $1$ & $-1$ & $0$ & $0$ & $0$ & $1$ & $\gamma_1$ & $0$ \\
$\Phi_{2}$ & $1$ & $0$ & $-1$ & $0$ & $1$ & $-1$ & $\gamma_2$ & $0$ \\
$\overline{\Phi}_{3}$ & $0$ & $-1$ & $0$ & $1$ & $-1$ & $1$ & $\gamma_2^{-1}$ & $2$ \\
$\overline{\Phi}_{4}$ & $0$ & $0$ & $-1$ & $1$ & $0$ & $-1$ & $\gamma_1^{-1}$  & $2$ \\
$\Phi_{5}$ & $-1$ & $1$ & $1$ & $-1$ & $0$ & $0$ & $1$ & $0$ \\
\hline
\end{tabular}
\caption{
Matter content for the $R$-matrix $R^{d_{12}d_{31}}_{d_{42}d_{34}}$ 
in \eqref{R_matrix} which is denoted by $T[R^{d_{12}d_{31}}_{d_{42}d_{34}}(\bgamma)]$ 
including extra Chern-Simons couplings \eqref{R_cs_couple}, 
where $U(1)_R$ denotes the $U(1)_R$ charge.}
\label{R_mat_tb}
\end{center}
\end{table}

We focus on a crossing, with an assigned $R$-matrix, for a (1,1)-tangle diagram 
with $N_v$ crossings. 
Let $U(1)_i$, $i=1,2,3,4$, be gauge symmetries with 
associated complex scalars $\sigma_i$ and magnetic fluxes $d_i$, 
and associate the quiver-like diagram of Figure \ref{fig:r_quiver} with 
the five chiral fields (multiplets) in Table \ref{R_mat_tb}. 
Here, note that the gauge symmetries act, in general, on not only
the chiral fields at the crossing but also the chiral fields at other crossings 
which share the same regions in the tangle diagram. 
The chiral fields interact through a superpotential 
$W=\Phi_{1} \overline{\Phi}_{4} \Phi_{5} + 
\Phi_{2} \overline{\Phi}_{3} \Phi_{5}$ 
(see also Remark \ref{rem:octahedron}), and 
two mass parameters $\bgamma=(\gamma_1, \gamma_2)$, 
associated with a global symmetry $U(1)_F^2$, are introduced. 
Overall, $2N_v$ mass parameters, associated with a global symmetry $U(1)_F^{2N_v}$, are introduced for the tangle diagram, and 
the global symmetries $U(1)_c$ and $U(1)_{ext}$, 
which associates non-negative background magnetic fluxes $2n$ and $2f$, 
are two of $U(1)_F^{2N_v}$ as \eqref{sub_ext}.
Note that the $3 N_v$ $U(1)$ symmetries associated with the $5 N_v$
chiral multiplets with the superpotential are  generated by
$U(1)^{N_v}$ gauge symmetries and $U(1)_F^{2 N_v}$ flavor symmetries.
We now construct the $R$-matrix \eqref{R_matrix} as 
a building block of K-theoretic vortex partition functions. 


From \eqref{bk_hyper} and \eqref{bk_hyper_inv}, 
the five chiral fields lead to a 1-loop building block
\begin{align}
I_{\textrm{1-loop}}^{R}
(\bsig;\bgamma)=
\left(\frac{\sigma_1\sigma_4}{\sigma_2\sigma_3}\right)^{\frac12}
\frac{
\left(1-\gamma_2\frac{\sigma_2}{\sigma_4}\right)
\left(1-\gamma_1\frac{\sigma_3}{\sigma_4}\right)}
{
\left(1-\gamma_1\frac{\sigma_1}{\sigma_2}\right)
\left(1-\gamma_2\frac{\sigma_1}{\sigma_3}\right)
\left(1-\frac{\sigma_2\sigma_3}{\sigma_1\sigma_4}\right)},
\label{R_1loop}
\end{align}
and a building block of K-theoretic vortex partition functions
\begin{align}
I_{\bd,n,f}^R(\bsig;\bgamma,q)&=
(-1)^{d_{23}+n}\,
q^{\frac14 (d_{12}-d_{34})(3d_1-d_2-d_3-d_4+2n+1)}
\left(\gamma_1\gamma_2\right)^{\frac12 (d_{12}-d_{34})}
\nonumber
\\
&\quad\times
\left(\frac{\sigma_1^3 \sigma_4}
{\sigma_2^2 \sigma_3^2}\right)^{\frac12 d_1}
\left(\frac{\sigma_2 \sigma_3}
{\sigma_1^2}\right)^{\frac12 (d_2+d_3)}
\left(\frac{\sigma_1}
{\sigma_4}\right)^{\frac12 d_4}
\left(\frac{\sigma_1 \sigma_4}
{\sigma_2 \sigma_3}\right)^{\frac12 n}
\nonumber
\\
&\quad\times
\frac{\qpoch{q \gamma_2\frac{\sigma_2}{\sigma_4}}{q}{n-d_{42}-f}
\qpoch{q \gamma_1 \frac{\sigma_3}{\sigma_4}}{q}{d_{34}+f}}
{\qpoch{q \gamma_1 \frac{\sigma_1}{\sigma_2}}{q}{d_{12}+f}
\qpoch{q \gamma_2\frac{\sigma_1}{\sigma_3}}{q}{n-d_{31}-f}
\qpoch{q \frac{\sigma_2\sigma_3}{\sigma_1\sigma_4}}{q}{d_{31}-d_{42}}},
\label{vortex_pf_R}
\end{align}
where $\bsig=(\sigma_1, \sigma_2, \sigma_3, \sigma_4)$, 
$d_{ij}=d_i-d_j$, and the background magnetic fluxes $n$ and $f$ for 
the global symmetries $U(1)_{c}$ and $U(1)_{ext}$ are introduced by the shift \eqref{bg_flux}.
In addition to the matter content in Table \ref{R_mat_tb} we also 
introduce Chern-Simons couplings by \eqref{cs_factor_1loop} and \eqref{cs_factor_vortex} as
\begin{align}
I_{\textrm{1-loop}}^{\textrm{CS}}(\bsig)=
\left(\frac{\sigma_1\sigma_4}{\sigma_2\sigma_3}\right)^{\frac12},
\end{align}
and
\begin{align}
I_{\bd,n,f}^{\textrm{CS}}(\bsig;q)&=
q^{-\frac14 (d_1^2-d_2^2-d_3^2+d_4^2)+\frac12 (d_1d_4-d_2d_3+ n d_{23})
+\frac14 (d_{12}-d_{34}) - \frac12 n + f^2 - d_{23}f - nf}
\nonumber
\\
&\quad\times
\left(\frac{\sigma_4}{\sigma_1}\right)^{\frac12 d_{14}}
\left(\frac{\sigma_2}{\sigma_3}\right)^{\frac12 d_{23}+\frac12 n-f},
\label{R_cs_factor_off}
\end{align}
for non-zero Chern-Simons couplings
\begin{align}
\begin{split}
k_{11}=k_{44}=k_{23}=k_{32}=-\frac12,\quad
k_{22}=k_{33}=k_{14}=k_{41}=\frac12,
\\
k_{2c}^{\text{g-f}}=\frac12,\quad
k_{3c}^{\text{g-f}}=-\frac12,\quad
k_{2f}^{\text{g-f}}=-1,\quad
k_{3f}^{\text{g-f}}=1,
\\
k_1^{\text{g-R}}=k_4^{\text{g-R}}=\frac12,\quad
k_2^{\text{g-R}}=k_3^{\text{g-R}}=-\frac12,\quad
k_{c}^{\text{f-R}}=-1,\quad
k_{cf}^{\text{f-f}}=k_{fc}^{\text{f-f}}=-1,\quad
k_{ff}^{\text{f-f}}=2,
\label{R_cs_couple}
\end{split}
\end{align}
where the subscripts $i$, $c$ and $f$ denote the indices for 
the gauge symmetries $U(1)_i$, the global symmetries $U(1)_c$ and $U(1)_{ext}$, 
respectively, and the mass parameters for $U(1)_c$ and $U(1)_{ext}$ 
are abbreviated. 
By combining the block \eqref{vortex_pf_R} with the Chern-Simons factor \eqref{R_cs_factor_off}, we define a building block 
for the $R$-matrix:
\begin{align}
I_{n,f}^{T[R^{d_{12}d_{31}}_{d_{42}d_{34}}(\bgamma)]}(\bsig;q):=
(-1)^n\,
I_{\bd,n,f}^{\textrm{CS}}(\bsig;q)\, I_{\bd,n,f}^R(\bsig;\bgamma,q),
\label{vortex_pf_TR_cs}
\end{align}
where the label $T[R^{d_{12}d_{31}}_{d_{42}d_{34}}(\bgamma)]$ is introduced.%
\footnote{\label{footnote:p_anomaly}
The building block 
$$
I_{\textrm{1-loop}}^{\textrm{CS}}(\bsig)\,
I_{\textrm{1-loop}}^{R}(\bsig;\bgamma)
\big(I_{\bd',n,f}^{\textrm{CS}}(\bsig;q)\,
I_{\bd',n,f}^R(\bsig;\bgamma,q)\big)
\big(I_{\bd'',n,f}^{\textrm{CS}}(\bsig;q^{-1})\,
I_{\bd'',n,f}^R(\bsig;\bgamma,q^{-1})\big)
$$
of twisted partition functions, 
or Table \ref{R_mat_tb} with the Chern-Simons couplings \eqref{R_cs_couple}, 
shows the absence of the parity anomalies except for 
the $U(1)_1$-$U(1)_c$ and $U(1)_4$-$U(1)_c$ parity anomalies. 
These parity anomalies just exist for each crossing, and 
one can show that they are canceled out for each loop in (1,1)-tangle diagrams 
and actually absent. 
Here, in general, the $U(1)_a$-$U(1)_b$ parity anomalies are absent when 
$k_{ab}+N/2$ is integer-valued, where 
$k_{ab}$ is the $U(1)_a$-$U(1)_b$ Chern-Simons coupling and 
$N$ is the number of chirals charged under both $U(1)_a$ and $U(1)_b$.
} 
If the complex scalars are specialized as
\begin{align}
\sigma_1, \sigma_2, \sigma_3, \sigma_4 \to 1,
\label{R_sp_1}
\end{align}
then under massless limit 
\begin{align}
\gamma_1, \gamma_2 \to 1,
\label{R_massless}
\end{align}
the building block \eqref{vortex_pf_TR_cs} yields 
the $R$-matrix \eqref{R_matrix} with the constant shift $f$:
\begin{align}
(-1)^{d_{14}}\,
I_{n,f}^{T[R^{d_{12}d_{31}}_{d_{42}d_{34}}(\bgamma)]}(\bsig;q) 
\quad \to \quad
q^{-\frac14 n(n+2)}\,
R^{d_{12}+f\, d_{31}+f}_{d_{42}+f\, d_{34}+f}
\,.
\label{R_limit}
\end{align}
Here the prefactor $(-1)^{d_{14}}$ on the left side is 
introduced by specializing the complexified FI parameters, 
and the prefactor $q^{-n(n+2)/4}$ on the right side corresponds to 
a normalization factor appeared in \eqref{c_norm_jones_knot}. 
The conditions \eqref{R_charge_c} for the magnetic fluxes and 
the specializations \eqref{R_sp_1} of the complex scalars 
should be obtained by means of the JK residue in \eqref{3d_tw_pf} 
for a choice of the FI parameters. 
In Section \ref{sec:gluing} we will discuss the justification of these.


The non-trivial degenerate $R$-matrices
\begin{align}
&
R^{\ 0\ d_{32}}_{d_{42}d_{34}}
=
(-1)^{d_{34}}
q^{\frac12 d_{34}(d_{32}+d_{42}-1)
-\frac12 n(d_{32}+d_{34}) + \frac14 n^2}\,
\frac{\qq{q}{n-d_{42}}}{\qq{q}{n-d_{32}}},
\label{deg_R_matrix_1}
\\
&
R^{d_{12}d_{31}}_{\ 0\ d_{32}}=
(-1)^{d_{31}}
q^{\frac12 d_{31}(d_{31}-1) - \frac12 n(d_{31}+d_{32}) + \frac14 n^2}\,
\frac{\qq{q}{n} \qq{q}{d_{32}}}
{\qq{q}{d_{12}} \qq{q}{n-d_{31}} \qq{q}{d_{31}}},
\label{deg_R_matrix_2}
\\
&
R^{d_{12}\ 0\ }_{\ 0\ d_{12}}=R^{\ 0\ d_{12}}_{d_{12} \ 0\ }=
q^{-\frac12 n d_{12} + \frac14 n^2},
\label{deg_R_matrix_3}
\end{align}
can be also constructed by identifying some $U(1)$ gauge symmetries, 
by identifying $U(1)_1$ with $U(1)_2$ for \eqref{deg_R_matrix_1} 
and by identifying $U(1)_4$ with $U(1)_3$ for \eqref{deg_R_matrix_2}. 
Here, by identifying $U(1)_1$ with $U(1)_2$, 
the $U(1)_1 \times U(1)_2$ charges $(a_1, a_2)$ of a chiral field 
are changed to the $U(1)_2$ charge $a_1+a_2$. 
For the degenerate $R$-matrix \eqref{deg_R_matrix_3}, 
which is used in Section \ref{sec:ded_gauge}, 
we will construct it just as a Chern-Simons factor as \eqref{vortex_pf_R_deg_a}.

\begin{figure}[t]
\centering
\includegraphics[width=90mm]{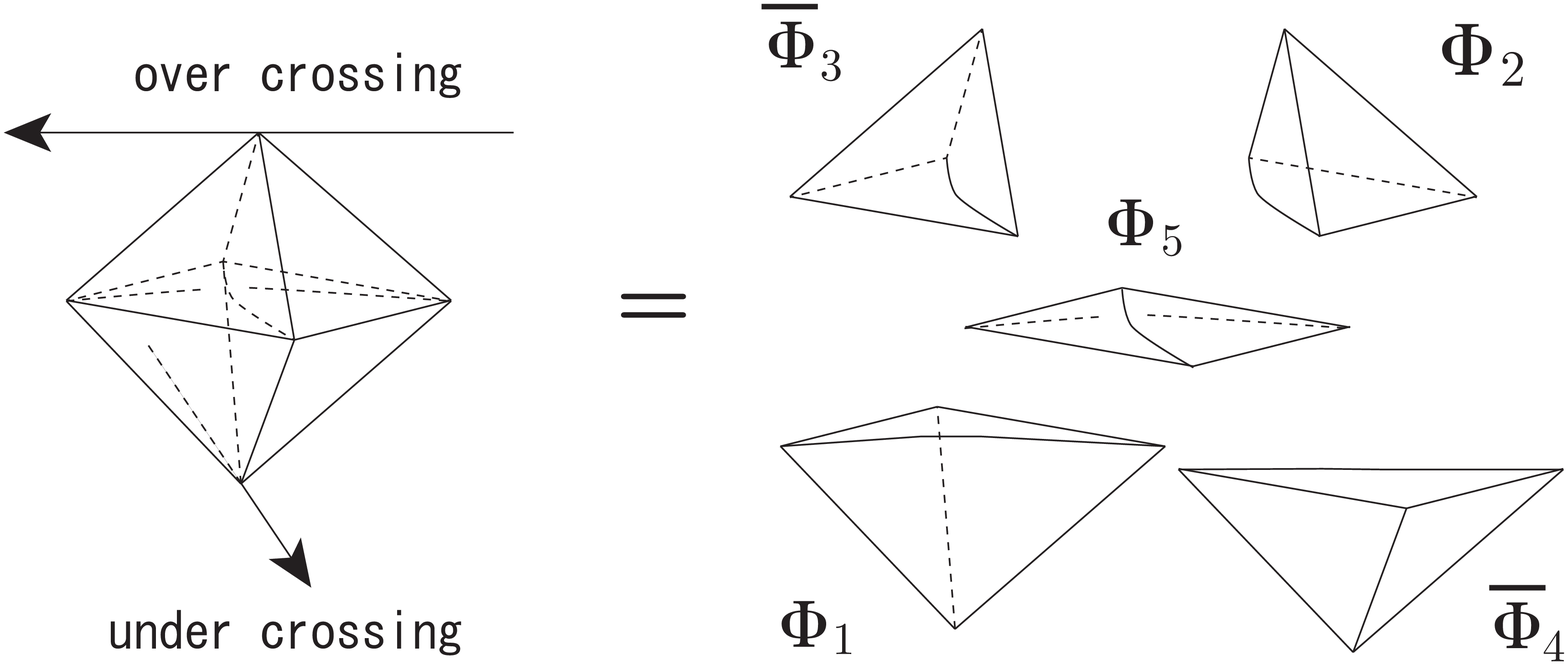}
\caption{Geometric interpretation of the $R$-matrix by 
D. Thurston \cite{Thurston:99}. 
An octahedron is associated for the $R$-matrix 
and decomposed into five tetrahedra which correspond to 
the five chiral fields in Table \ref{R_mat_tb}.}
\label{fig:octahedron}
\end{figure}

\begin{remark}\label{rem:octahedron}
Following D. Thurston, to each crossing, with an assigned $R$-matrix, 
an octahedron, which can be decomposed into five tetrahedra, is attached 
(see Figure \ref{fig:octahedron}) \cite{Thurston:99}. 
This provides a geometric interpretation of the $R$-matrix and 
is utilized to prove the volume conjecture 
\cite{Kashaev:1996kc, MurakamiMurakami} for some specific hyperbolic knots 
in \cite{Yokota:2011, Murakami:2001}. 
In our gauge theory construction, 
each decomposed tetrahedron corresponds to 
a chiral field in Table \ref{R_mat_tb} or \ref{R_mat_tb_m} 
(see \cite{Dimofte:2011ju} for a similar gauge theory construction of 
the octahedron by Dimofte, Gaiotto and Gukov).
\end{remark}

\subsubsection{Inverse $R$-matrix}\label{subsubsec:invR_gauge}

\begin{table}[t]
\begin{center}
\begin{tabular}{|c|c|c|c|c||c|c|c||c|}
\hline
Field & $U(1)_{1}$ & $U(1)_{2}$ 
& $U(1)_{3}$ & $U(1)_{4}$ & $U(1)_{c}$ & $U(1)_{ext}$ & mass & $U(1)_R$ 
\\ 
\hline
$\Phi_{1}'$ & $-1$ & $1$ & $0$ & $0$ & $1$ & $-1$ & $\gamma_1$ & $0$ \\
$\Phi_{2}'$ & $-1$ & $0$ & $1$ & $0$ & $0$ & $1$ & $\gamma_2$ & $0$ \\
$\overline{\Phi}_{3}'$ & $0$ & $1$ & $0$ & $-1$ & $0$ & $-1$ & $\gamma_2^{-1}$ & $2$ \\
$\overline{\Phi}_{4}'$ & $0$ & $0$ & $1$ & $-1$ & $-1$ & $1$ & $\gamma_1^{-1}$ & $2$ \\
$\Phi_{5}'$ & $1$ & $-1$ & $-1$ & $1$ & $0$ & $0$ & $1$ & $0$ \\
\hline
\end{tabular}
\caption{
Matter content for the inverse $R$-matrix 
$\left(R^{-1}\right)^{d_{12}d_{31}}_{d_{42}d_{34}}$ in \eqref{R_matrix_m} 
which is denoted by $T[\overline{R}^{d_{12}d_{31}}_{d_{42}d_{34}}(\bgamma)]$ 
including extra Chern-Simons couplings \eqref{invR_cs_couple}.}
\label{R_mat_tb_m}
\end{center}
\end{table}

For the inverse $R$-matrix \eqref{R_matrix_m}, we consider 
the mater content in Table \ref{R_mat_tb_m} with mass parameters 
$\bgamma=(\gamma_1, \gamma_2)$ associated with a global symmetry $U(1)_F^2$. 
The chiral fields interact through a superpotential
$W=\Phi_{1}' \overline{\Phi}_{4}' \Phi_{5}' +
\Phi_{2}' \overline{\Phi}_{3}' \Phi_{5}'$. 
The associated building blocks of the 1-loop factors and the 
K-theoretic vortex partition functions are, respectively, given by
\begin{align}
I_{\textrm{1-loop}}^{\overline{R}}
(\bsig;\bgamma)=
\left(\frac{\sigma_2\sigma_3}{\sigma_1\sigma_4}\right)^{\frac12}
\frac{
\left(1-\gamma_2\frac{\sigma_4}{\sigma_2}\right)
\left(1-\gamma_1\frac{\sigma_4}{\sigma_3}\right)}
{
\left(1-\gamma_1\frac{\sigma_2}{\sigma_1}\right)
\left(1-\gamma_2\frac{\sigma_3}{\sigma_1}\right)
\left(1-\frac{\sigma_1\sigma_4}{\sigma_2\sigma_3}\right)},
\label{invR_1loop}
\end{align}
and
\begin{align}
I_{\bd,n,f}^{\overline{R}}(\bsig;\bgamma, q)&=
(-1)^{d_{23}+n}\,
q^{\frac14 (d_{12}-d_{34})(3d_1-d_2-d_3-d_4-2n-1)}
\left(\gamma_1\gamma_2\right)^{-\frac12 (d_{12}-d_{34})}
\nonumber
\\
&\quad\times
\left(\frac{\sigma_1^3 \sigma_4}
{\sigma_2^2 \sigma_3^2}\right)^{\frac12 d_1}
\left(\frac{\sigma_2 \sigma_3}
{\sigma_1^2}\right)^{\frac12 (d_2+d_3)}
\left(\frac{\sigma_1}
{\sigma_4}\right)^{\frac12 d_4}
\left(\frac{\sigma_2 \sigma_3}
{\sigma_1 \sigma_4}\right)^{\frac12 n}
\nonumber
\\
&\quad\times
\frac{\qpoch{q \gamma_2\frac{\sigma_4}{\sigma_2}}{q}{d_{42}+f}
\qpoch{q \gamma_1\frac{\sigma_4}{\sigma_3}}{q}{n-d_{34}-f}}
{\qpoch{q \gamma_1\frac{\sigma_2}{\sigma_1}}{q}{n-d_{12}-f}
\qpoch{q \gamma_2\frac{\sigma_3}{\sigma_1}}{q}{d_{31}+f}
\qpoch{q \frac{\sigma_1\sigma_4}{\sigma_2\sigma_3}}{q}{d_{12}-d_{34}}}.
\label{vortex_pf_R_m}
\end{align}
By combining this block with Chern-Simons factors
\begin{align}
I_{\textrm{1-loop}}^{\overline{\textrm{CS}}}(\bsig)=
\left(\frac{\sigma_1\sigma_4}{\sigma_2\sigma_3}\right)^{\frac12},
\end{align}
and
\begin{align}
I_{\bd,n,f}^{\overline{\textrm{CS}}}(\bsig;q)&=
q^{\frac14 (d_1^2-d_2^2-d_3^2+d_4^2)-\frac12 (d_1d_4-d_2d_3+ n d_{23})
+\frac14 (d_{12}-d_{34}) + \frac12 n -f^2 + d_{23}f + nf}
\nonumber
\\
&\quad\times
\left(\frac{\sigma_1}{\sigma_4}\right)^{\frac12 d_{14}}
\left(\frac{\sigma_3}{\sigma_2}\right)^{\frac12 d_{23} + \frac12 n-f},
\label{invR_cs_factor_off}
\end{align}
for non-zero Chern-Simons couplings
\begin{align}
\begin{split}
k_{11}=k_{44}=k_{23}=k_{32}=\frac12,\quad
k_{22}=k_{33}=k_{14}=k_{41}=-\frac12,
\\
k_{2c}^{\text{g-f}}=-\frac12,\quad
k_{3c}^{\text{g-f}}=\frac12,\quad
k_{2f}^{\text{g-f}}=1,\quad
k_{3f}^{\text{g-f}}=-1,
\\
k_1^{\text{g-R}}=k_4^{\text{g-R}}=\frac12,\quad
k_2^{\text{g-R}}=k_3^{\text{g-R}}=-\frac12,\quad
k_{c}^{\text{f-R}}=1,\quad
k_{cf}^{\text{f-f}}=k_{fc}^{\text{f-f}}=1,\quad
k_{ff}^{\text{f-f}}=-2,
\label{invR_cs_couple}
\end{split}
\end{align}
a building block for the inverse $R$-matrix is introduced by
\begin{align}
I_{n,f}^{T[\overline{R}^{d_{12}d_{31}}_{d_{42}d_{34}}(\bgamma)]}(\bsig;q):=
(-1)^n\,
I_{\bd,n,f}^{\overline{\textrm{CS}}}(\bsig;q)\, 
I_{\bd,n,f}^{\overline{R}}(\bsig;\bgamma,q),
\label{vortex_pf_invTR_cs}
\end{align}
and labeled by $T[\overline{R}^{d_{12}d_{31}}_{d_{42}d_{34}}(\bgamma)]$. 
By the specializations \eqref{R_sp_1}, 
the building block \eqref{vortex_pf_invTR_cs}, 
in the massless limit as \eqref{R_massless}, 
yields the inverse $R$-matrix \eqref{R_matrix_m} 
with a normalization factor in \eqref{c_norm_jones_knot}:
\begin{align}
(-1)^{d_{23}}\,
I_{n,f}^{T[\overline{R}^{d_{12}d_{31}}_{d_{42}d_{34}}(\bgamma)]}(\bsig;q) 
\quad \to \quad
q^{\frac14 n(n+2)}
\left(R^{-1}\right)^{d_{12}+f\, d_{31}+f}_{d_{42}+f\, d_{34}+f}
\,,
\label{invR_limit}
\end{align}
where the prefactor $(-1)^{d_{23}}$ is introduced by the complexified FI parameters.


From above, the non-trivial degenerate inverse $R$-matrices
\begin{align}
&
\left(R^{-1}\right)^{d_{32}\ 0}_{d_{42}d_{34}}=
q^{\frac12 n d_{34} - \frac14 n^2}\,
\frac{\qq{q}{n-d_{34}}}{\qq{q}{n-d_{32}}},
\label{deg_R_matrix_m1}
\\
&
\left(R^{-1}\right)^{d_{12}d_{31}}_{d_{32}\ 0}=
q^{ - d_{12}d_{31} + \frac12 n d_{31} - \frac14 n^2}\,
\frac{\qq{q}{d_{32}} \qq{q}{n}}
{\qq{q}{n-d_{12}} \qq{q}{d_{12}} \qq{q}{d_{31}}},
\label{deg_R_matrix_m2}
\\
&
\left(R^{-1}\right)^{d_{12} \ 0\ }_{\ 0\ d_{12}}
=\left(R^{-1}\right)^{\ 0\ d_{12}}_{d_{12} \ 0\ }=
q^{\frac12 n d_{12} - \frac14 n^2},
\label{deg_R_matrix_m3}
\end{align}
are also constructed by identifying some $U(1)$ gauge symmetries, 
by identifying $U(1)_1$ with $U(1)_3$ for \eqref{deg_R_matrix_m1} 
and by identifying $U(1)_4$ with $U(1)_3$ for \eqref{deg_R_matrix_m2}. 
The degenerate inverse $R$-matrix \eqref{deg_R_matrix_m3} is 
constructed in \eqref{vortex_pf_R_deg_b} just as a Chern-Simons factor.

\subsubsection{Local minimum and maximum}

The quantities $\mu_{d_{12}}$ and $\mu^{-1}_{d_{12}}$ for 
local minimum and maximum in \eqref{loc_min_max} 
are constructed, by introducing gauge/flavor-R 
Chern-Simons couplings 
$k_1^{\text{g-R}}=\pm 2$, $k_2^{\text{g-R}}=\mp 2$, 
$k_c^{\text{f-R}}=\mp 1$ and $k_f^{\text{f-R}}=\pm 2$ 
from \eqref{cs_factor_vortex} as
\begin{align}
I_{n,f}^{T[\mu_{d_{12}}]}(q)=
q^{d_{12}+f-\frac12n}=
\mu_{d_{12}+f},
\qquad
I_{n,f}^{T[\overline{\mu}_{d_{12}}]}(q)=
q^{-d_{12}-f+\frac12n}= 
\mu^{-1}_{d_{12}+f},
\label{vortex_pf_min_max}
\end{align}
where $d_1$ and $d_2$ are the associated magnetic fluxes 
for a $U(1)_1\times U(1)_2$ gauge symmetry, and 
they are labeled by $T[\mu_{d_{12}}]$ and $T[\overline{\mu}_{d_{12}}]$.

\subsection{K-theoretic vortex partitions in the knot-gauge theories}\label{sec:gluing}

\subsubsection{Summary}\label{subsec:summary_jk}

Because this section contains some technical details for the JK residue procedure and flux conditions, we first summarize what will be discussed.

From the building blocks in Section \ref{subsec:gauge_knot_build}, we can construct 
a $U(1)^{N_v}$ knot-gauge theory $T[\cK]$ 
labeled by a (1,1)-tangle diagram, with $N_v$ crossings, of knot $\cK$, 
and obtain the K-theoretic vortex partition%
\footnote{
By the Reidemeister moves I, II and III for a tangle diagram of knot $\cK$, 
infinitely many knot-gauge theories for $\cK$ are constructed, 
and they are expected to be related to one another by some 3D dualities.}
\begin{align}
I^{T[\cK]}_{\textrm{vortex}}(\bsig;\bz,\bgamma,q)=
\sum_{\bd} \left(\prod_{I=1}^{N_v} z_I^{d_I}\right)
\prod_{i} I_{n,f}^{T_i}(\bsig;q),
\label{k_vpf_gen}
\end{align}
where $\bsig=(\sigma_1, \ldots, \sigma_{N_v})$, 
$\bz=(z_1, \ldots, z_{N_v})$ are the exponentiated FI parameters 
associated with the $U(1)^{N_v}$ gauge symmetry, and
$\bgamma=(\gamma_1, \ldots, \gamma_{2N_v})$ are mass parameters. 
Here $i$ runs over all the building blocks labeled by $T_i$ in 
\eqref{vortex_pf_TR_cs}, \eqref{vortex_pf_invTR_cs} and \eqref{vortex_pf_min_max}. 
By construction, 
if the conditions \eqref{R_charge_c} and \eqref{invR_charge_c} 
for the magnetic fluxes $\bd$ are satisfied, 
the K-theoretic vortex partition yields the normalized colored Jones polynomial $J_n^{\cK}(q)$ of $\cK$ under 
$\sigma_I \to \sigma_I^*=1$, $\gamma_i \to \gamma_i^*=1$, and 
$z_I \to z_I^*=+1$ or $-1$ depending on the sign factors in \eqref{R_limit} and \eqref{invR_limit}:
\begin{align}
I^{T[\cK]}_{\textrm{vortex}}(\bsig^*;\bz^*,\bgamma^*,q)=J_n^{\cK}(q).
\label{vortex_jones}
\end{align}

Therefore, for the relation \eqref{vortex_jones}, we need to choose the stability parameters so that the JK residue in \eqref{3d_tw_pf} is taken at the locus $\sigma_I = \sigma_I^*=1$ and the domain of the magnetic fluxes $\bd$ is restricted by the conditions \eqref{R_charge_c} and \eqref{invR_charge_c}. 
In this paper, we consider a cone in \eqref{xi_cone} as a rough choice, 
where not only any choices of stability parameters inside the cone give 
the locus $\sigma_I = \sigma_I^*=1$, but also, in general, 
some choices inside the cone may also give other loci 
(see Proposition \ref{prop:other_pole} and \eqref{other_sq_31} for 
an example). 
Therefore, for establishing the relation \eqref{vortex_jones} 
we have to show that 
such other loci do not contribute to the twisted partition functions 
in the massless limit $\gamma_i \to \gamma_i^*=1$. 
In Section \ref{subsec:charge_c}, 
we discuss conditions for the magnetic fluxes $\bd$ 
which give non-zero contributions to the twisted partition functions 
in the massless limit. We first show Proposition \ref{prop:flux_positive} which implies the conditions \eqref{R_charge_c} and \eqref{invR_charge_c} for 
the JK residue at the locus $\sigma_I = \sigma_I^*=1$, and then discuss 
the contributions coming from the other loci. 
For our rough choice of the stability parameters, 
in Proposition \ref{prop:bridge} we find a class of knot diagrams 
such that the other loci do not contribute to the twisted partition functions in 
the massless limit and the relation \eqref{vortex_jones} is established. 
We expect that, by carefully choosing the stability parameters, 
the relation \eqref{vortex_jones} is, in general, 
established for any knot diagram.

\begin{remark}
In the context of knots-quivers correspondence \cite{Kucharski:2017poe, Kucharski:2017ogk}, 
the K-theoretic vortex partitions of abelian Chern-Simons-matter theories $T[Q_K]$ associated with quivers $Q_K$, which provide 
generating functions of $S^n$-colored HOMFLY-PT polynomials of knots $K$, 
and a class of 3D $\mathcal{N}=2$ dualities associated with quivers, 
are discussed in \cite{Dimofte:2010tz, Ekholm:2018eee, Ekholm:2019lmb, Jankowski:2021flt, Ekholm:2021gyu} 
(see also \cite{Gorsky:2015toa} for a different proposal of 
the relation between the K-theoretic vortex partitions and the HOMFLY-PT polynomials of torus knots).
The 3D $\mathcal{N}=2$ gauge theories $T[Q_K]$ seem to be quite different from 
the knot-gauge theories $T[\cK]$ in this paper, and 
it would be interesting to clarify the relation between them.
\end{remark}

\subsubsection{JK residue procedure}\label{subsec:jk_res}

Consider a (1,1)-tangle diagram with $N_v$ crossings. 
As we constructed in the previous section, 
the matter content at the $I$-th crossing is composed of five chiral fields 
$\Phi_1^{(I)}$, $\Phi_2^{(I)}$, $\overline{\Phi}_3^{(I)}$, 
$\overline{\Phi}_4^{(I)}$ and $\Phi_5^{(I)}$ with 
the superpotential 
$W^{(I)}=\Phi_1^{(I)}\overline{\Phi}_4^{(I)}\Phi_5^{(I)}+
\Phi_2^{(I)}\overline{\Phi}_3^{(I)}\Phi_5^{(I)}$, where 
we assume that $\Phi_1^{(I)}$, $\Phi_2^{(I)}$, 
$\overline{\Phi}_3^{(I)}$ and $\overline{\Phi}_4^{(I)}$ have 
generic masses $\gamma_1^{(I)}$, $\gamma_2^{(I)}$, 
$(\gamma_2^{(I)})^{-1}$ and $(\gamma_1^{(I)})^{-1}$, respectively, 
whereas $\Phi_5^{(I)}$ is massless. 
Let $Q_i^{(I)}$ be the $U(1)^{N_v}$ gauge charge vectors 
of $\Phi_i^{(I)}$ and $\overline{\Phi}_i^{(I)}$, 
where 
$Q_5^{(I)}$, $I=1,\ldots,N_v$, form a basis in ${\IR}^{N_v}$, 
and have relations $-Q_5^{(I)}=Q_1^{(I)}+Q_4^{(I)}=Q_2^{(I)}+Q_3^{(I)}$. 
Here we take the gauge charge vectors to be zero for 
the incoming and outgoing chiral fields of the $(1,1)$-tangle diagram. 
For the JK residue 
\cite[Theorem 2.6]{SzVe:2003} (see also \cite{Benini:2013xpa}), 
we choose the stability parameters (identified, in this paper, 
with the FI parameters) 
$\bxi=(\xi_1,\ldots,\xi_{N_v})$  
inside $\mathrm{Cone} (\bQ_5)$ which is the cone spanned by 
$\bQ_5=\big\{Q_5^{(I)}\big\}_{I=1,\ldots,N_v}$, i.e., 
\begin{align}
\bxi=\sum_{I=1}^{N_v}c_{I}\, Q_5^{(I)} \in \mathrm{Cone} (\bQ_5),
\qquad
c_{I}>0.
\label{xi_cone}
\end{align}
We now need to consider the sets of the charge vectors for the poles, 
of the integrand in the twisted partition function, 
whose cones contain the vector \eqref{xi_cone} inside.

From $I_{\textrm{1-loop}}^{R}$ in \eqref{R_1loop} or 
$I_{\textrm{1-loop}}^{\overline{R}}$ in \eqref{invR_1loop}, 
once the residue at the (referred to as \textit{massless}) pole 
``$\sigma_1\sigma_4=\sigma_2\sigma_3$'' 
 relevant to $Q_5^{(I)}$ is taken, 
the (referred to as \textit{massive}) poles relevant to 
$Q_i^{(I)}$, $i=1,2,3,4$, are moved away for generic masses. 
The charge vectors $\bQ_5$ form a basis of ${\IR}^{N_v}$, 
and the residues relevant to them 
boil down to the specializations 
$\sigma_1 = \sigma_2 = \ldots = \sigma_{N_v}=1$ 
(i.e. \eqref{R_sp_1}) of the complex scalars. 
Furthermore, as a corollary of Proposition \ref{prop:flux_positive} 
in Section \ref{subsec:charge_c}, 
when the residues at the poles relevant to $\bQ_5$ are taken, 
the flux conditions \eqref{R_charge_c} and \eqref{invR_charge_c} 
are also satisfied in the massless limit \eqref{R_massless}, 
where note that the poles relevant to $\bQ_5$ imply the non-negativity of 
the magnetic fluxes for $\Phi_5^{(I)}$ at the crossings. 
As a result, if the other contributions in the JK residue are absent, 
the K-theoretic vortex partition function yields, in the massless limit, 
the normalized colored Jones polynomial as \eqref{vortex_jones}.

Therefore, the remaining problem is, for the choice of 
the stability parameters \eqref{xi_cone}, 
whether other poles 
contribute to the twisted partition function in the massless limit.

Let $\Phi_i^{(I)}$ (resp. $\overline{\Phi}_i^{(I)}$) be 
an incoming (resp. outgoing) chiral field 
with $U(1)_R$ charge $\sfr=0$ (resp. $\sfr=2$) assigned to the $I$-th crossing, 
where $i=1$ or $2$ (resp. $3$ or $4$). 
The chiral fields associated with an arc between the over 
$I$-th crossing and the over $J$-th crossing 
(or the under $I$-th crossing and the under $J$-th crossing) as
\begin{align}
\inc{pair_annihilate}
\qquad,
\label{pair_annihilate}
\end{align}
have opposite $U(1)$ gauge charges, i.e. 
$Q^{(I)}_{i}=-Q^{(J)}_{j}$, where $i=3$ or $4$ and $j=1$ or $2$. 
On the other hand, the chiral fields associated with an arc between 
the over $I$-th crossing and the under $J$-th crossing 
(or the under $I$-th crossing and the over $J$-th crossing) as 
\begin{align}
\inc{pair_chiral}
\qquad,
\label{pair_chiral}
\end{align}
have same $U(1)$ gauge charges, i.e. 
$Q^{(I)}_{i}=Q^{(J)}_{j}$, where $i=3$ or $4$ and $j=1$ or $2$. 
Therefore, by the relations $-Q_5^{(I)}=Q_1^{(I)}+Q_4^{(I)}=Q_2^{(I)}+Q_3^{(I)}$, 
the charge vector $Q_5^{(I)}$ is expressed as 
$Q_5^{(I)}=-Q_i^{(I)} + Q_j^{(J)}=Q_k^{(I)} - Q_{\ell}^{(J)}$ 
for \eqref{pair_annihilate} or 
$Q_5^{(I)}=-Q_i^{(I)} - Q_j^{(J)}=-Q_k^{(I)} - Q_{\ell}^{(J)}$ 
for \eqref{pair_chiral}, 
where $i,j=1$ or $2$ and $k,\ell=3$ or $4$.

For the over (resp. under) $I$-th crossing, 
let $P_+^{(I)}$ (resp. $P_-^{(I)}$), 
$\overline{P}_+^{(I)}$ (resp. $\overline{P}_-^{(I)}$), 
and $Q_+^{(I)}$ (resp. $Q_-^{(I)}$) be 
the charge vectors of $\Phi_1^{(I)}$ or $\Phi_2^{(I)}$,
$\Phi_3^{(I)}$ or $\Phi_4^{(I)}$, and 
$\Phi_5^{(I)}$, respectively, where 
$P_{\pm}^{(I)}=Q_i^{(I)}$ for $i=1$ or $2$, 
$\overline{P}_{\pm}^{(I)}=Q_i^{(I)}$ for $i=3$ or $4$, 
and $Q_{\pm}^{(I)}=Q_5^{(I)}$.
We now have 
$Q_{\pm}^{(I)}=-P_{\pm}^{(I)}+P_{\pm}^{(J)}
=\overline{P}_{\pm}^{(I)}-\overline{P}_{\pm}^{(J)}$
for \eqref{pair_annihilate} and 
$Q_{\pm}^{(I)}=-P_{\pm}^{(I)}-P_{\mp}^{(J)}
=-\overline{P}_{\pm}^{(I)}-\overline{P}_{\mp}^{(J)}$ 
for \eqref{pair_chiral}. 
The charge vectors $P_{\pm}^{(I)}$ are then expressed as 
\begin{align}
P_{+}^{(I)}=-\sum_{J \ge I} Q_{+}^{(J)}+\sum_{K \ge I} Q_{-}^{(K)},
\qquad
P_{-}^{(I)}=-\sum_{J \ge I} Q_{-}^{(J)}+\sum_{K \ge I} Q_{+}^{(K)},
\label{charge_p}
\end{align} 
in terms of the basis $\bQ_5=\big\{Q_5^{(I)}\big\}_{I=1,\ldots,N_v}$, 
where the sum means that starting from 
the over (resp. under) $I$-th crossing, 
the over (resp. under) $J$-th crossings and 
the under (resp. over) $K$-th crossings pass through along 
the (1,1)-tangle diagram, 
and end at the last crossing with the bounded outgoing arc. 
Similarly, the charge vectors $\overline{P}_{\pm}^{(I)}$ are expressed as 
\begin{align}
\overline{P}_{+}^{(I)}=-\mathop{\overline{\sum}}_{J \le I} Q_{+}^{(J)}+
\mathop{\overline{\sum}}_{K \le I} Q_{-}^{(K)},
\qquad
\overline{P}_{-}^{(I)}=-\mathop{\overline{\sum}}_{J \le I} Q_{-}^{(J)}+
\mathop{\overline{\sum}}_{K \le I} Q_{+}^{(K)},
\label{charge_p_bar}
\end{align} 
where the sum means that starting from 
the over (resp. under) $I$-th crossing, 
the over (resp. under) $J$-th crossings and 
the under (resp. over) $K$-th crossings pass through, 
along the (1,1)-tangle diagram, backward, 
and end at the first crossing with the bounded incoming arc.

We now describe the (1,1)-tangle diagram by an ordered sequence of 
the charge vectors $Q_{\pm}^{(I)}$ by aligning them from 
the first crossing with the bounded incoming arc to 
the last crossing with the bounded outgoing arc along the tangle diagram. 
For convenience, we refer to the sequence as \textit{original sequence}. 
For example, the $(1,1)$-tangle diagram of the trefoil knot $\knot{3}{1}$ in 
Figure \ref{fig:tangle_31} is described by a sequence
\begin{align}
Q_+^{(1)},\ Q_-^{(2)},\ Q_+^{(3)},\ 
Q_-^{(1)},\ Q_+^{(2)},\ Q_-^{(3)}.
\label{sq_31}
\end{align}
The following proposition is then proved.

\begin{prop}\label{prop:sequence_adjacent}
For the stability parameters \eqref{xi_cone}, 
if, in the original sequence, 
there exists a cyclic sequence 
$\{Q_+^{(I_1)}, Q_-^{(I_2)}\}$, $\{Q_+^{(I_2)}, Q_-^{(I_3)}\}$, $\ldots$, 
$\{Q_+^{(I_{M-1})}, Q_-^{(I_{M})}\}$, $\{Q_+^{(I_{M})}, Q_-^{(I_1)}\}$ 
consisted of adjacent pairs $\{Q_+^{(I)}, Q_-^{(J)}\}$, 
the residue at a massless pole relevant to 
one of the charge vectors $Q_5^{(I_k)}$, $k=1,\ldots,M$, should be taken.%
\footnote{The order of the adjacent charge vectors $Q_+^{(I)}$ and $Q_-^{(J)}$ is not assumed.}
\end{prop}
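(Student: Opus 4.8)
The plan is to turn the statement into a purely convex‑geometric claim about which families of charge vectors can carry a Jeffrey--Kirwan (JK) residue for the stability vector \eqref{xi_cone}, and then to separate $\bxi$ from every ``bad'' cone by one explicit covector built from the cyclic chain. Recall the JK criterion used in \eqref{3d_tw_pf} (the one recalled before \eqref{xi_cone}): a pole of the integrand contributes only if it is an isolated intersection of $N_v$ pole hyperplanes whose charge vectors $v_1,\dots,v_{N_v}$ form a basis of $\IR^{N_v}$ with $\bxi\in\mathrm{Cone}(v_1,\dots,v_{N_v})$. At the $I$‑th crossing the available pole charge vectors are the massless one $Q_5^{(I)}$ and the four massive ones $P_\pm^{(I)},\overline{P}_\pm^{(I)}$ (the gauge charges $Q_1^{(I)},\dots,Q_4^{(I)}$). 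Hence it suffices to prove: every basis $\mathcal{C}$ of pole charge vectors with $\bxi\in\mathrm{Cone}(\mathcal{C})$ contains $Q_5^{(I_k)}$ for at least one $k\in\{1,\dots,M\}$.

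Next I fix the combinatorial model. Traversing the $(1,1)$‑tangle from the incoming to the outgoing external arc yields the original sequence, a word of length $2N_v$ in which each crossing $I$ occurs exactly twice, at a position $o(I)$ (as an overcrossing, contributing $Q_+^{(I)}$) and a position $u(I)$ (as an undercrossing, contributing $Q_-^{(I)}$). Put $\epsilon_p=-1$ if $p$ is an over‑position and $\epsilon_p=+1$ if it is an under‑position, and let $c(p)$ denote the crossing at position $p$; then $\sum_p\epsilon_p\,Q_5^{(c(p))}=0$. Expanded in the basis $\bQ_5$, the identities \eqref{charge_p} and \eqref{charge_p_bar} say precisely that $P_+^{(I)}$ and $\overline{P}_+^{(I)}$ are the tail‑sum $\sum_{p\ge o(I)}\epsilon_p Q_5^{(c(p))}$ and the head‑sum $\sum_{p\le o(I)}\epsilon_p Q_5^{(c(p))}$, while $P_-^{(I)},\overline{P}_-^{(I)}$ are the negatives of the corresponding tail‑ and head‑sums taken about $u(I)$; the vanishing of the full sum reproduces $P_\pm^{(I)}+\overline{P}_\pm^{(I)}=-Q_5^{(I)}$.

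Now the heart of the argument. Suppose, toward a contradiction, that an admissible basis $\mathcal{C}$ avoids all $Q_5^{(I_k)}$. Set $S=\{I_1,\dots,I_M\}$ (if the cyclic chain revisits a crossing, pass first to the shorter subchain through that crossing, so we may assume the $I_k$ distinct), and let $\ell=\sum_{k=1}^{M}\bigl(Q_5^{(I_k)}\bigr)^{\vee}$ be the sum of the corresponding dual‑basis covectors, so that $\ell(\bxi)=\sum_k c_{I_k}>0$. It is enough to show $\ell\le 0$ on every member of $\mathcal{C}$, which is immediate since $\bxi\in\mathrm{Cone}(\mathcal{C})$ would then force $\ell(\bxi)\le 0$. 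For $Q_5^{(J)}$ with $J\notin S$ one has $\ell(Q_5^{(J)})=0$. For the massive vectors the cyclic hypothesis is used: the $M$ adjacent pairs $\{Q_+^{(I_k)},Q_-^{(I_{k+1})}\}$ (indices mod $M$) occupy $M$ two‑element blocks of consecutive positions that are pairwise disjoint and whose union is exactly $\{o(I_k),u(I_k)\}_{k=1}^{M}$, i.e.\ all positions $p$ with $c(p)\in S$; inside each block one position is an over‑position ($\epsilon=-1$) and the other an under‑position ($\epsilon=+1$). Therefore $\ell$ applied to a tail‑ or head‑sum of the $\epsilon_p Q_5^{(c(p))}$ only sees the $S$‑positions, hence equals a sum of whole‑block contributions $(-1)+(+1)=0$ plus at most the single $\epsilon$‑value of one block straddled by the reference position $o(J)$ or $u(J)$ — and that reference position, being an over‑ or under‑visit of $J$ of exactly the type that is retained, always pins the surviving $\epsilon$ to $-1$ (or the block is not straddled at all). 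Thus $\ell(P_\pm^{(J)}),\ \ell(\overline{P}_\pm^{(J)})\in\{0,-1\}$ for all $J$, the required contradiction.

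The main obstacle is this last sign bookkeeping: one must check, uniformly in $J$ and in both possible orderings inside each pair $\{Q_+^{(I_k)},Q_-^{(I_{k+1})}\}$, that a straddled block can only subtract from $\ell(P_\pm^{(J)})$ or $\ell(\overline{P}_\pm^{(J)})$ and never add — equivalently that an over‑position is never also an under‑position — which is exactly the point at which the coherence of the chain (over‑visit of $I_k$ consecutive with under‑visit of $I_{k+1}$) is genuinely used; for a less coherent configuration the tail/head sums of $\epsilon$ restricted to $S$‑positions need not be $\le 0$, and the separating covector $\ell$ would fail.
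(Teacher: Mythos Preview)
Your argument is correct and is essentially the paper's proof, carried out in more detail. The paper simply asserts, citing \eqref{charge_p} and \eqref{charge_p_bar}, that every charge vector $Q_i^{(I)}$ other than the $Q_5^{(I_k)}$ has the sum of its $\{Q_5^{(I_k)}\}_{k=1}^{M}$ coordinates equal to $-1$ or $0$, and concludes immediately; your covector $\ell=\sum_k (Q_5^{(I_k)})^\vee$ is exactly this coordinate sum, and your block analysis is the verification the paper leaves implicit. The reduction to distinct $I_k$ and the observation that a block of two consecutive integer positions can only be split by a reference position lying inside it (hence at most one block is ever straddled, and then the retained $\epsilon$ is forced to $-1$ by the over/under type of the reference) are the missing steps the paper skips, and you handle them correctly.
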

\begin{proof}
Consider the cyclic sequence in the assertion. 
By \eqref{charge_p} and \eqref{charge_p_bar}, 
all the charge vectors 
$Q_5^{(I)}$, $P_{\pm}^{(I)}$, $\overline{P}_{\pm}^{(I)}$ 
(i.e. $Q_i^{(I)}$) other than $Q_5^{(I_k)}$ satisfy 
$\sum_{k=1}^{M} Q_5^{(I_k)}=-1$ or $0$ in terms of the basis $\bQ_5$.
This means that any cones consist of the charge vectors $Q_i^{(I)}$ 
without $Q_5^{(I_k)}$ do not contain the vector \eqref{xi_cone} inside, 
and the residues at a massless pole relevant to 
one of the charge vectors $Q_5^{(I_k)}$ should be taken.
\end{proof}

Following Proposition \ref{prop:sequence_adjacent} we should take the residue at 
the massless pole relevant to a charge vector $Q_5^{(I)}$ 
in each cyclic sequence, 
and then the massive pole around the $I$-th crossing is moved away. 
Therefore, for discussing cones 
which contain the vector \eqref{xi_cone} inside, 
it is enough to consider the subsequences 
(referred to as \textit{reduced sequences}) extracted 
by removing the charge vectors $Q_5^{(I)}$ from the original sequence.%
\footnote{By definition, the reduced sequences do not contain 
cyclic sequences in Proposition \ref{prop:sequence_adjacent}.} 
For example, for the original sequence \eqref{sq_31} of $\knot{3}{1}$ 
a reduced sequence
\begin{align}
Q_+^{(1)},\ Q_+^{(3)},\ Q_-^{(1)},\ Q_-^{(3)},
\label{red_sq_31}
\end{align}
is found by removing $Q_5^{(2)}$. 
When no reduced sequences are found, it is clear that, 
for the stability parameters \eqref{xi_cone}, 
there are no cones, other than $\mathrm{Cone} (\bQ_5)$, 
which contain the vector \eqref{xi_cone} inside. 
If such cones, other than $\mathrm{Cone} (\bQ_5)$, exist, 
some reduced sequences should be found, and 
the following proposition is proved.

\begin{prop}\label{prop:other_pole}
If there exist cones, other than $\mathrm{Cone} (\bQ_5)$, 
which contain the vector \eqref{xi_cone} inside, 
the spans of the cones should contain 
both some $P_{\pm}^{(I)}$ and some $\overline{P}_{\pm}^{(J)}$.
\end{prop}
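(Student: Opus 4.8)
The plan is to argue by contraposition on the structure of the reduced sequences. Suppose there is a cone, other than $\mathrm{Cone}(\bQ_5)$, containing the stability vector $\bxi$ in \eqref{xi_cone} inside. Such a cone is spanned by some collection of gauge charge vectors $Q_i^{(I)}$ with at least one of them \emph{not} of the form $Q_5^{(I)}$; by Proposition \ref{prop:sequence_adjacent} and the discussion following it, the only charge vectors of the other chiral fields that survive after removing the $Q_5^{(I)}$ associated with the cyclic sequences are precisely those appearing in the reduced sequences, i.e. the vectors $P_{\pm}^{(I)}$ and $\overline{P}_{\pm}^{(I)}$. So the span of the cone contains at least one $P_{\pm}^{(I)}$ or at least one $\overline{P}_{\pm}^{(I)}$.

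First I would observe, using the expansions \eqref{charge_p} and \eqref{charge_p_bar} in terms of the basis $\bQ_5$, that each $P_{\pm}^{(I)}$ is a difference of partial sums of the $Q_5^{(J)}$ running \emph{forward} from crossing $I$ to the final crossing, while each $\overline{P}_{\pm}^{(I)}$ is a difference of partial sums running \emph{backward} from crossing $I$ to the first crossing. Concretely, in coordinates dual to $\bQ_5$, the support of $P_{\pm}^{(I)}$ lies among the indices $J \geq I$ (in the order along the tangle), whereas the support of $\overline{P}_{\pm}^{(I)}$ lies among the indices $J \leq I$. The key point is a sign/orientation fact: the coefficients appearing in \eqref{charge_p} come with a single overall minus on the ``forward-of-the-same-type'' terms and plus on the ``forward-of-the-opposite-type'' terms, and dually for \eqref{charge_p_bar} — so that any nonnegative combination of $P_{\pm}^{(I)}$'s alone produces a vector whose $\bQ_5$-coordinates cannot all be strictly positive (the ``tail'' coordinate near the final crossing is forced nonpositive), and symmetrically any nonnegative combination of $\overline{P}_{\pm}^{(I)}$'s alone cannot have all coordinates strictly positive (the ``head'' coordinate near the first crossing is forced nonpositive).

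Next I would combine this with the defining property $\bxi = \sum_I c_I Q_5^{(I)}$ with all $c_I > 0$: the vector $\bxi$ has \emph{all} strictly positive $\bQ_5$-coordinates. Therefore a cone whose spanning set is contained entirely in $\{P_{\pm}^{(I)}\} \cup \{Q_5^{(I)}\}$ (no $\overline{P}$'s) cannot contain $\bxi$ in its interior, because interior points of that cone must be nonnegative combinations with the ``head''-type obstruction inherited from the forward structure — more carefully, one checks that every generator has a nonpositive coordinate in some fixed dual direction determined by the first crossing, so no positive combination reaches the all-positive orthant point $\bxi$; symmetrically for a cone using no $P$'s. Hence any cone other than $\mathrm{Cone}(\bQ_5)$ containing $\bxi$ inside must use at least one $P_{\pm}^{(I)}$ \emph{and} at least one $\overline{P}_{\pm}^{(J)}$ among its generators, which is exactly the assertion.

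The main obstacle I anticipate is making precise the ``fixed dual direction'' obstruction — i.e.\ pinning down which linear functional on $\IR^{N_v}$ is nonnegative on $\bxi$ but has a definite sign on all the $P_{\pm}^{(I)}$ (resp.\ $\overline{P}_{\pm}^{(I)}$), and checking it is not killed by the $Q_5^{(I)}$ generators that may also be present in the cone. Concretely this amounts to a careful bookkeeping of the forward/backward partial sums in \eqref{charge_p}--\eqref{charge_p_bar} along the $(1,1)$-tangle, keeping track of the distinction between the ``$+$'' and ``$-$'' (over/under) strands and the two cases \eqref{pair_annihilate} versus \eqref{pair_chiral}; the closure into a $(0,0)$-tangle and the bounded incoming/outgoing arcs are what make the head and tail coordinates behave asymmetrically, and that asymmetry is the crux of the argument. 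Everything else is elementary cone geometry (a point is interior to a cone iff it is a strictly positive combination of the generators, together with linear-functional separation), applied to the explicit $\bQ_5$-coordinate descriptions already derived in the excerpt.
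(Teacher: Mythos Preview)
Your overall strategy (contraposition via a separating linear functional) matches the paper's, but the specific functional you propose cannot exist, and the obstacle you flag at the end is in fact the whole content of the proof. A single ``fixed dual direction determined by the first crossing'' cannot separate $\bxi$ from a cone of the form $\bR=\{P_{\pm}^{(I_\ell)}\}_\ell\cup(\bQ_5\setminus\{Q_5^{(I_\ell)}\}_\ell)$: any functional nonpositive on every surviving $Q_5^{(I)}$ generator must vanish on those coordinates, hence be supported on the \emph{replaced} indices $\{I_\ell\}$ --- a set that depends on the particular cone, not on the first crossing. Your head/tail support picture is also inaccurate: the sums in \eqref{charge_p}--\eqref{charge_p_bar} are along the tangle, and since every crossing appears twice in the original sequence, a single $P_{+}^{(I)}$ can hit $Q_5^{(J)}$ for arbitrary $J$ with either sign (e.g.\ $P_+^{(3)}=Q_5^{(1)}-Q_5^{(2)}$ in \eqref{other_sq_31}). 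So no coordinate is uniformly ``tail-nonpositive''.

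What the paper actually does is construct the separating functional from a \emph{cyclic sequence} among the replaced indices. Given $\bR$, one extracts the ordered subsequence of $Q_{\pm}^{(I_\ell)}$ along the tangle, groups each $Q_+^{(J)}$ with the subsequent run $Q_-^{(J_1)},\ldots,Q_-^{(J_N)}$ (forward subsequences), and argues by finiteness and the two-passes-per-crossing property that these pairs must contain a cycle $\{Q_+^{(I_1)},Q_-^{(I_2)}\},\ldots,\{Q_+^{(I_M)},Q_-^{(I_1)}\}$. The functional $\varphi=\sum_{k=1}^{M}(\text{$I_k$-th coordinate})$ is then strictly positive on $\bxi$, zero on every $Q_5^{(I)}$ remaining in $\bR$ (since those $I\notin\{I_k\}$), and, by the forward-sum structure of \eqref{charge_p}, equals $-1$ or $0$ on each $P_{\pm}^{(I)}$ according to whether that $P$ sits between a pair of the cycle or not. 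The dual argument with backward subsequences handles $\overline{\bR}$. Finding this cycle is the missing combinatorial step; without it your separating-functional idea does not go through.
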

\begin{proof}
To show the assertion, consider a cone consisted of
$\bR=\{R_I\}_{I=1,\ldots,N_v}=\{P_{\pm}^{(I_{\ell})}\}_{\ell=1,\ldots,L} \cup \bQ_5\backslash \{Q_5^{(I_{\ell})}\}_{\ell=1,\ldots,L}$ or 
$\overline{\bR}=\{\overline{R}_I\}_{I=1,\ldots,N_v}=\{\overline{P}_{\pm}^{(I_{\ell})}\}_{\ell=1,\ldots,L} \cup \bQ_5\backslash \{Q_5^{(I_{\ell})}\}_{\ell=1,\ldots,L}$.

For the cone spanned by $\bR$, starting from the first incoming arc, 
along the tangle in order, consider a part of 
the ordered sequence of charge vectors in 
$\{Q_5^{(I_{\ell})}\}_{\ell=1,\ldots,L}$,
\begin{align}
\begin{split}
&
\ldots, Q_+^{(J)}, \ Q_-^{(J_1)}, \ Q_-^{(J_2)}, \ \ldots, \ Q_-^{(J_N)}, \ \ldots,
\\
\textrm{or}\quad
&
\ldots, Q_-^{(J)}, \ Q_+^{(J_1)}, \ Q_+^{(J_2)}, \ \ldots, \ Q_+^{(J_N)}, \ \ldots,
\label{charge_sq}
\end{split}
\end{align}
which is referred to as a \textit{forward subsequence}.
Similarly, for the cone spanned by $\overline{\bR}$, consider
\begin{align}
\begin{split}
&
\ldots, Q_-^{(J_1)}, \ Q_-^{(J_2)}, \ \ldots, \ Q_-^{(J_N)}, \ Q_+^{(J)}, \ \ldots,
\\
\textrm{or}\quad
&
\ldots, Q_+^{(J_1)}, \ Q_+^{(J_2)}, \ \ldots, \ Q_+^{(J_N)}, \ Q_-^{(J)}, \ \ldots,
\label{charge_sq_bw}
\end{split}
\end{align}
which is referred to as a \textit{backward subsequence}.
The number of forward and backward subsequences is finite, 
and for all such subsequences as \eqref{charge_sq} and \eqref{charge_sq_bw} 
we consider pairs $\{Q_{+}^{(J)}, Q_{-}^{(J_{\ell})}\}$ 
or $\{Q_{+}^{(J_{\ell})}, Q_{-}^{(J)}\}$, $\ell=1,\ldots,N$. 
The tangle passes through each crossing twice, and 
we then find at least one unbounded sequence of pairs as 
$\{Q_+^{(K_2)}, Q_-^{(K_1)}\}$, $\{Q_+^{(K_3)}, Q_-^{(K_2)}\}$, 
$\{Q_+^{(K_4)}, Q_-^{(K_3)}\}$, $\ldots$ or
$\{Q_+^{(K_1)}, Q_-^{(K_2)}\}$, $\{Q_+^{(K_2)}, Q_-^{(K_3)}\}$, 
$\{Q_+^{(K_3)}, Q_-^{(K_4)}\}$, $\ldots$.%
\footnote{As an example, if the ordered sequence starts as 
$Q_+^{(J_1)}, Q_+^{(J_2)}, Q_-^{(J_3)}, \ldots$, 
the charge vectors except $Q_{\pm}^{(J_1)}$ can have pairs. 
In this case, by taking a first pair 
$\{Q_+^{(J_2)}, Q_-^{(J_3)}\}$, it is possible subsequently to find 
unbounded sequence of pairs as 
$\{Q_+^{(J_2)}, Q_-^{(J_3)}\}$, $\{Q_+^{(J_m)}, Q_-^{(J_2)}\}$, 
$\{Q_+^{(J_n)}, Q_-^{(J_m)}\}, \ldots$, $m, n\ne 1$.}
Furthermore, since this sequence is finite, 
as a subsequence of it, a cyclic sequence 
$\{Q_+^{(I_1)}, Q_-^{(I_2)}\}$, $\{Q_+^{(I_2)}, Q_-^{(I_3)}\}, \ldots$, 
$\{Q_+^{(I_{M-1})}, Q_-^{(I_{M})}\}$, $\{Q_+^{(I_{M})}, Q_-^{(I_1)}\}$ 
should be obtained. 
Because any charge vectors $P_{\pm}^{(I)}$ in the set $\bR$ 
(resp. $\overline{P}_{\pm}^{(I)}$ in the set $\overline{\bR}$),
\begin{align}
P_{\pm}^{(I)}\ (\textrm{resp.}\ \overline{P}_{\pm}^{(I)})=
\sum_{\ell=1}^{M}\left(\alpha_{\pm, \ell}^{(I)}Q_+^{(I_{\ell})}+
\beta_{\pm, \ell}^{(I)}Q_-^{(I_{\ell})}\right) + \cdots
=
\sum_{\ell=1}^{M}\left(\alpha_{\pm, \ell}^{(I)}+
\beta_{\pm, \ell}^{(I)}\right)Q_5^{(I_{\ell})} + \cdots,
\label{p_charge_rel}
\end{align} 
are expressed as \eqref{charge_p} (resp. \eqref{charge_p_bar}) 
in terms of the basis $\bQ_5$, we see that
$\sum_{\ell=1}^{M}\big(\alpha_{\pm, \ell}^{(I)}+\beta_{\pm, \ell}^{(I)}\big)=-1$
if $P_{\pm}^{(I)}$ (resp. $\overline{P}_{\pm}^{(I)}$) 
is in between a pair in the cyclic sequence or 
$\sum_{\ell=1}^{M}\big(\alpha_{\pm, \ell}^{(I)}+\beta_{\pm, \ell}^{(I)}\big)=0$ 
if otherwise, where $\cdots$ does not contain 
$Q_5^{(I_{\ell})}$, $\ell=1,\ldots,M$. 
Therefore, the cones spanned by $\bR$ and $\overline{\bR}$ do not contain 
the vector \eqref{xi_cone} inside, and our assertion follows.
\end{proof}

The forward and backward subsequences in reduced sequences 
can be used to find cones other than $\mathrm{Cone} (\bQ_5)$. 
As an example, 
for the reduced sequence \eqref{red_sq_31} of $\knot{3}{1}$ we find 
a cyclic sequence $\{Q_+^{(3)}, Q_-^{(3)}\}$ from a forward subsequence and 
a cyclic sequence $\{Q_+^{(1)}, Q_-^{(1)}\}$ from a backward subsequence. 
Then, from the proof of Proposition \ref{prop:other_pole}, 
by considering $\overline{P}_-^{(1)}$ for $\{Q_+^{(3)}, Q_-^{(3)}\}$ and 
$P_+^{(3)}$ for $\{Q_+^{(1)}, Q_-^{(1)}\}$, 
we find a cone, which contains the vector \eqref{xi_cone} inside, 
consisted of
\begin{align}
\overline{P}_-^{(1)}=Q_5^{(3)}-Q_5^{(2)},\quad 
P_+^{(3)}=Q_5^{(1)}-Q_5^{(2)},\quad
Q_5^{(2)},
\label{other_sq_31}
\end{align}
for the original sequence \eqref{sq_31} of $\knot{3}{1}$. 
Therefore, we need to exclude this type of possibility
\begin{itemize}
\item[{\bf 1)}]
by showing that the contributions like \eqref{other_sq_31} 
coming from cones other than $\mathrm{Cone} (\bQ_5)$ vanish, or;

\item[{\bf 2)}]
if they do not vanish, 
by refining the choice of the stability parameters \eqref{xi_cone}.
\end{itemize}
In the next subsection, we discuss 
flux conditions for the non-zero contributions to 
the twisted partition function, and then consider 
the first option.
Actually, we will see that the cone \eqref{other_sq_31} for $\knot{3}{1}$ 
does not contribute in the massless limit (see Proposition \ref{prop:bridge}).

\subsubsection{Flux conditions}\label{subsec:charge_c}

For flux conditions, the following proposition is proved.

\begin{prop}\label{prop:flux_positive}
When the magnetic fluxes for $\Phi_5^{(I)}$ at all the crossings are non-negative, 
the flux conditions \eqref{R_charge_c} and \eqref{invR_charge_c} 
are satisfied in the massless limit.
\end{prop}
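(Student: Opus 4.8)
The plan is to recover the membership conditions \eqref{R_charge_c} at a positive crossing and \eqref{invR_charge_c} at a negative one by inspecting, crossing by crossing, which $q$-Pochhammer symbols survive in the vortex building blocks \eqref{vortex_pf_R} and \eqref{vortex_pf_R_m} once the massless limit $\gamma_i\to 1$ is taken on the locus $\sigma_1=\cdots=\sigma_{N_v}=1$ cut out by the $\bQ_5$-residues. The first observation is that $\Phi_5^{(I)}$ carries gauge flux $\rho_5(\bd)=d_{31}-d_{42}$ at a positive crossing and $\rho_{5}(\bd)=d_{12}-d_{34}$ at a negative one, so the hypothesis is literally the fifth condition of \eqref{R_charge_c} (resp.\ \eqref{invR_charge_c}); moreover $d_{31}-d_{42}=-(d_{12}-d_{34})$ and $d_{31}-d_{42}=(d_{31}+f)-(d_{42}+f)$, so the upper bound $d_{31}-d_{42}\le n$ (resp.\ $d_{12}-d_{34}\le n$) will come for free once $d_{31}+f\le n$ and $d_{42}+f\ge 0$ are in hand.

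Next I would read off the constraints on the four arc indices. In the massless limit the denominator of \eqref{vortex_pf_R} becomes $\qpoch{q}{q}{d_{12}+f}\,\qpoch{q}{q}{n-d_{31}-f}\,\qpoch{q}{q}{d_{31}-d_{42}}$, the three factors coming from $\Phi_1$, $\Phi_2$ and $\Phi_5$; since $\qpoch{q}{q}{m}$ develops a pole — the vanishing factor $1-q^{0}$ occurs in its denominator — for every negative integer $m$, the building block vanishes unless $d_{12}+f\ge 0$ and $d_{31}+f\le n$, and the same reading of \eqref{vortex_pf_R_m} at a negative crossing gives $d_{12}+f\le n$ and $d_{31}+f\ge 0$. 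These are precisely the constraints attached to the two \emph{incoming} arcs of each crossing. For the two \emph{outgoing} arcs I would use that, following the strand, every arc is the incoming arc of the crossing at its head end (or else an external arc, on which the flux equals the fixed basepoint label $f\in\{0,\dots,n\}$, this being the background flux of $U(1)_{ext}$, with the color $n$ entering through that of $U(1)_c$), and that the flux $d_{ij}$ carried by an arc agrees when read at either endpoint — the built-in sign in $d_{ij}=-\rho_{3,4}^{(I)}(\bd)$ absorbing the two gluing rules $Q^{(I)}_i=\mp Q^{(J)}_j$, cf.\ \eqref{pair_annihilate}--\eqref{pair_chiral} and \eqref{charge_p}--\eqref{charge_p_bar}. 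Thus the inequalities just obtained, imposed at every crossing and propagated along each strand until it hits an external arc, together with the already-proven $\Phi_5$-inequalities $d_{12}\le d_{34}$ (positive) / $d_{12}\ge d_{34}$ (negative), close the chain and yield $0\le d_{ij}+f\le n$ for all four arc indices.

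The delicate point — and the one I expect to be the real obstacle — is that the $R$-charge $\sfr=2$ chirals $\overline{\Phi}_3^{(I)},\overline{\Phi}_4^{(I)}$ contribute their $q$-Pochhammer factors $\qpoch{\,\cdot\,}{q}{n-d_{42}-f}$ and $\qpoch{\,\cdot\,}{q}{d_{34}+f}$ to the \emph{numerator} of \eqref{vortex_pf_R} (and analogously in \eqref{vortex_pf_R_m}), so a negative argument there does not force the block to vanish; the bounds $d_{42}+f\le n$ and $d_{34}+f\ge 0$ must instead be extracted from the range of the magnetic-flux sum itself. Here I would invoke the pole structure \eqref{poles_factor}: the requirement $\rho(\bd)+\rho_f(\bd_f)-\sfr\ge 0$ together with \eqref{pole_sum} restricts, for an $\sfr=2$ multiplet, the summation over $\bd$ to exactly those fluxes for which the relevant numerator $q$-Pochhammer is a genuine finite $q$-factorial; imposing these restrictions at every crossing and feeding them into the propagation argument of the previous paragraph completes \eqref{R_charge_c} and \eqref{invR_charge_c}. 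The only laborious ingredient is the over/under bookkeeping and the correct placement of the $n$- and $f$-shifts in this last step; everything else is a direct inspection of \eqref{vortex_pf_R}--\eqref{vortex_pf_R_m}.
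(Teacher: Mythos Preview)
Your proposal has a genuine gap at the point you yourself flag as delicate. Your invocation of \eqref{poles_factor}--\eqref{pole_sum} to bound the fluxes of the $\sfr=2$ multiplets $\overline{\Phi}_3^{(I)},\overline{\Phi}_4^{(I)}$ is not valid: those formulas describe the pole locus and the associated flux range \emph{for the chiral whose pole is being picked up by the JK residue}. In the situation of the proposition the residues are taken only at the massless $\Phi_5^{(I)}$ poles; the $\sfr=2$ multiplets carry generic masses and contribute no poles on the locus $\bsig=1$, so the restriction $\rho(\bd)+\rho_f(\bd_f)-\sfr\ge 0$ simply does not apply to them. There is no a~priori cutoff on the flux sum coming from the numerator factors.

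This also undermines your step~2. The claim ``the building block vanishes unless $d_{12}+f\ge 0$ and $d_{31}+f\le n$'' is a local statement about one crossing, but the summand is a product over all crossings: a negative-subscript denominator $\qq{q}{d_{12}+f}$ (which gives a simple zero as $\gamma\to 1$) can be compensated by a negative-subscript numerator $\qq{q}{n-d_{42}-f}$ or $\qq{q}{d_{34}+f}$ (which gives a simple pole) at the same or another crossing. Without first controlling the numerator side you cannot conclude vanishing from the denominator side alone.

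The paper closes this gap by a \emph{global} zero/pole count. The charge relations $-Q_5^{(I)}=Q_1^{(I)}+Q_4^{(I)}=Q_2^{(I)}+Q_3^{(I)}$ (which hold for the $U(1)_c$ and $U(1)_{ext}$ charges as well) force, under the hypothesis $\rho_5(\bd)\ge 0$, that at every crossing at least one of $\{\Phi_1,\overline{\Phi}_4\}$ and at least one of $\{\Phi_2,\overline{\Phi}_3\}$ gives a zero; hence $Z\ge 2N_v$ and $P\le 3N_v$. Since exactly $N_v$ residues are taken, the contribution vanishes whenever $Z>2N_v$, so for a nonzero term one must have \emph{exactly} one zero in each pair. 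A propagation along the tangle to the external arc (with fixed flux $f\in\{0,\dots,n\}$) then rules out the possibility that $\Phi_1$ or $\Phi_2$ is the zero, yielding \eqref{gauge_R_flux}: all $\Phi_1,\Phi_2$ are poles, all $\overline{\Phi}_3,\overline{\Phi}_4$ are zeros. A second propagation---closer to your step~3---then gives $0\le d_{ij}+f\le n$ on every arc. Your propagation idea is essentially the paper's, but it only works once the global count has determined \emph{which} fields contribute zeros and which poles; that is the missing ingredient.
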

\begin{proof}
Assume that the magnetic fluxes for $\Phi_5^{(I)}$ 
at all the crossings are non-negative. 
When the magnetic flux at the $I$-th crossing is non-negative, 
i.e. the chiral field $\Phi_5^{(I)}$ gives a pole, at least, 
either $\Phi_1^{(I)}$ or $\Phi_4^{(I)}$ and 
either $\Phi_2^{(I)}$ or $\Phi_3^{(I)}$ give zeros as implied by 
the relations $-Q_5^{(I)}=Q_1^{(I)}+Q_4^{(I)}=Q_2^{(I)}+Q_3^{(I)}$ 
(the charges for $U(1)_{c}$ and $U(1)_{ext}$ also satisfy 
the same relations). 
Therefore, under the non-negative flux assumption, 
the number of zeros $Z$ satisfies $Z \ge 2N_v$ and 
the number of poles $P$ satisfies $P \le 3N_v$, where 
$N_v=(Z+P)/5$ is the number of crossings. 
Because we take the residues at $N_v$ poles, 
the contributions to the twisted partition function vanish, 
in the massless limit, if $Z > 2N_v$ ($P < 3N_v$), i.e. 
there exists a crossing such that both $\Phi_1^{(I)}$ and $\Phi_4^{(I)}$ 
or both $\Phi_2^{(I)}$ and $\Phi_3^{(I)}$ give zeros.

Let $d_{in}$ and $d_{out}$ be the magnetic fluxes 
for an incoming $\sfr=0$ chiral field $\Phi$ and an outgoing $\sfr=2$ chiral 
field $\overline{\Phi}$ for an (inverse) $R$-matrix 
assigned to a crossing, respectively. 
The non-negative flux assumption implies that 
the magnetic fluxes $d_{in}$ and $d_{out}$ 
for the chiral fields along an under (resp. over) crossing arc 
satisfy $d_{in} \le d_{out}$ (resp. $d_{out} \le d_{in}$):
\begin{align}
\incc{charge_c}
\quad.
\label{charge_c}
\end{align}
For the under ($I$-th) crossing, if 
$\Phi^{(I)}$ gives a zero and $\overline{\Phi}^{(I)}$ gives a pole, 
i.e. $d_{in}^{(I)} \le d_{out}^{(I)} < 0$, then 
$d_{in}^{(I+1)}(=d_{out}^{(I)}) < 0$ at the next ($(I+1)$-th) crossing 
which results in $d_{out}^{(I+1)} < 0$ for the non-zero contributions 
to the twisted partition function in the massless limit.
Since the last outgoing magnetic flux at the last crossing 
is given by a background magnetic flux 
$f \in \{0,1,\ldots,n\}$ for the global symmetry $U(1)_{ext}$
(e.g. see Figure \ref{fig:tangle_31}), 
the gluing procedure excludes the case 
$d_{in}^{(I)} \le d_{out}^{(I)} < 0$ for the under crossing arc. 
Similarly, the case $n< d_{out} \le d_{in}$ 
for an over crossing arc is also excluded. 
As a result, the magnetic fluxes are constrained by the conditions
\begin{align}
0 \le d_{in} \le d_{out}\ \
\textrm{(under crossing)},
\qquad
d_{out} \le d_{in} \le n\ \
\textrm{(over crossing)},
\label{gauge_R_flux}
\end{align}
which mean all $\Phi^{(I)}$ give poles and all $\overline{\Phi}^{(I)}$ give zeros.

Starting from a crossing with the magnetic fluxes $d_{in}$ and $d_{out}$ 
with the conditions \eqref{gauge_R_flux}, 
in the sequel of gluing procedure along the tangle in order, 
if the outgoing under (resp. over) crossing arc is 
glued with an over (resp. under) crossing incoming arc first at 
the $K$-th crossing 
we find $0 \le d_{in} \le d_{out} \le \ldots \le d_K \le n$ 
(resp. $0 \le d_K \le \ldots \le d_{out} \le d_{in} \le n$), 
where $d_K$ is the magnetic flux for the incoming chiral field 
assigned to the $K$-th crossing. 
Even if the outgoing under (resp. over) crossing arcs are only  
glued with under (resp. over) crossing incoming arcs, 
because the gluing procedure ends up with
the background magnetic flux $f \in \{0,1,\ldots,n\}$, 
we find $0 \le d_{in},  d_{out} \le n$ anyway. 
This gives the desired conditions in \eqref{R_charge_c} and \eqref{invR_charge_c}.
\end{proof}

Proposition \ref{prop:other_pole} and \ref{prop:flux_positive} imply 
the following corollary.

\begin{cor}\label{cor:flux_negative_0}
For the stability parameters \eqref{xi_cone}, 
the cones other than $\mathrm{Cone} (\bQ_5)$ do not contribute to 
the twisted partition function in the massless limit 
when the magnetic fluxes for $\Phi_5^{(I)}$ at all the crossings are non-negative. 
\end{cor}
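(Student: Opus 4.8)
The plan is to derive the corollary by combining Propositions~\ref{prop:other_pole} and~\ref{prop:flux_positive}, which constrain complementary features of the relevant cones. Fix a summand $\bd$ in \eqref{k_vpf_gen} for which the magnetic fluxes of all the $\Phi_5^{(I)}$ are non-negative, so that $\mathrm{Cone}(\bQ_5)$ is the pole configuration sitting at the locus $\sigma_I=1$ for all $I$. The first step is to extract from the proof of Proposition~\ref{prop:flux_positive} the stronger statement it actually establishes: under this hypothesis a non-vanishing contribution of $\bd$ in the massless limit forces the flux constraints \eqref{gauge_R_flux}, so that at every crossing each incoming $\sfr=0$ chiral supplies a pole of the factorized integrand while each outgoing $\sfr=2$ chiral supplies a zero. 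Hence on such a $\bd$ the number of zeros is exactly $2N_v$, the number of poles is exactly $3N_v$, and the only charge vectors labelling pole hyperplanes are those in $\bQ_5$ together with the charge vectors $P_{\pm}^{(I)}$ of the incoming chirals, none of which is a charge vector $\overline{P}_{\pm}^{(J)}$ of an outgoing chiral.

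Next I would invoke Proposition~\ref{prop:other_pole}: for the stability choice \eqref{xi_cone}, any cone other than $\mathrm{Cone}(\bQ_5)$ that contains $\bxi$ must include, among the charge vectors generating it, some $\overline{P}_{\pm}^{(J)}$. If the JK residue along such a cone is to be non-trivial, the hyperplane labelled by that $\overline{P}_{\pm}^{(J)}$ must be a genuine pole hyperplane. When this charge vector is carried only by the outgoing chiral $\overline{\Phi}^{(J)}$, the previous paragraph forbids this, and the cone gives nothing. The remaining case is when $\overline{P}_{\pm}^{(J)}$ coincides as a vector with $P_{\pm}^{(K)}$ for the crossing $K$ reached along an arc of the type \eqref{pair_chiral}, so that the incoming chiral at $K$ does carry a pole along that hyperplane; one then takes the residue there, but since $\overline{\Phi}^{(J)}$ still supplies a zero one has removed a pole without cancelling the accompanying zero. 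Rerunning the zero/pole count of Proposition~\ref{prop:flux_positive} for this cone leaves a strict excess of zeros over poles after the $N_v$ residues, so the contribution again vanishes in the massless limit. Therefore $\mathrm{Cone}(\bQ_5)$ is the only cone containing $\bxi$ that gives a non-zero contribution on any $\bd$ with all $\Phi_5^{(I)}$-fluxes non-negative, and summing over such $\bd$ yields the assertion.

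The step I expect to be the main obstacle is this last piece of book-keeping: showing that replacing one of the $Q_5^{(I)}$ in the residue cone by a charge vector $\overline{P}_{\pm}^{(J)}$ — directly, or through a coinciding $P_{\pm}^{(K)}$ — always destroys the delicate zero/pole balance that gives $\mathrm{Cone}(\bQ_5)$ a finite massless limit. Everything else is a direct quotation of the two propositions: Proposition~\ref{prop:flux_positive} pins down the pole structure on each contributing summand, Proposition~\ref{prop:other_pole} pins down the generators any competing cone would have to use, and the two requirements cannot be met simultaneously.
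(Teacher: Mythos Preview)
Your proposal is on the right track and uses the same two ingredients the paper invokes — the paper itself gives no proof beyond stating that the corollary follows from Propositions~\ref{prop:other_pole} and~\ref{prop:flux_positive}. However, the detour you take in the case $\overline{P}_{\pm}^{(J)} = P_{\pm}^{(K)}$ is both unnecessary and not quite justified as written. Your zero/pole recount does not actually produce an excess: replacing one $Q_5$-residue by a $P^{(K)}$-residue still leaves exactly $2N_v$ zeros and $2N_v$ remaining poles (the $\Phi_5$ pole you no longer residue at becomes a remaining pole), so a bare count does not force vanishing.

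The cleaner route makes this case disappear. Once you have extracted from the proof of Proposition~\ref{prop:flux_positive} that, on any summand contributing in the massless limit under the hypothesis, every $\overline{\Phi}^{(J)}$ supplies a zero while every $\Phi^{(K)}$ and $\Phi_5^{(I)}$ supplies a pole, the pole hyperplanes carry only the charge vectors $P_{\pm}^{(K)}$ and $Q_5^{(I)}$. The proof of Proposition~\ref{prop:other_pole} shows directly that any cone built from such vectors — the cones of type $\bR$ there — fails to contain $\bxi$ unless it is $\mathrm{Cone}(\bQ_5)$ itself; that argument only uses the expansion \eqref{charge_p} and therefore applies to each $P^{(K)}$ regardless of whether it happens to coincide with some $\overline{P}^{(J)}$. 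So on these summands the competing cones simply do not contain $\bxi$, and there is nothing further to check. For the remaining summands, where the flux constraints \eqref{gauge_R_flux} fail, your own first step already gives $Z>2N_v$ and hence vanishing at every cone. This is the argument the paper has in mind.
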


Next, we consider the cases with negative fluxes for some $\Phi_5^{(I)}$. 
Let $E_v$ be the number of negative fluxes for $\Phi_5^{(I)}$, and then 
the number of non-negative fluxes for $\Phi_5^{(I)}$ is $N_v-E_v$, 
where $N_v$ is the number of crossings. 
As the proof of Proposition \ref{prop:flux_positive}, we see that 
the number of zeros $Z$ satisfies $Z \ge 2(N_v-E_v)+E_v=2N_v-E_v$. 
Therefore, for the condition $Z > 2N_v$ of the vanishing contributions to 
the twisted partition function in the massless limit, 
it needs, at least, $E_v+1$ extra zeros in addition to 
the minimal number of zeros.
From a reduced sequence, let us extract the subsequence composed of 
all of the $E_v$ charge vectors with negative fluxes for $\Phi_5^{(I)}$, 
and, in what follows, we refer to it as \textit{negative sequence}:
\begin{align}
\textrm{original sequence}\ \supset\
\textrm{reduced sequence}\ \supset\
\textrm{negative sequence}.
\end{align}
For negative sequences, the following proposition is proved.

\begin{prop}\label{prop:flux_negative}
For a negative sequence, in addition to the minimal number of zeros 
$2N_v-E_v$, there is at least one extra zero
\begin{itemize}
\item[(1)] 
after the last charge vector $Q_+^{(I)}$ or $Q_-^{(I)}$ 
in the negative sequence;

\item[(2)] 
between adjacent charge vectors $Q_+^{(I)}$ and $Q_+^{(J)}$ or 
$Q_-^{(I)}$ and $Q_-^{(J)}$ in the negative sequence;

\item[(3)] 
for each cyclic sequence 
$\{Q_+^{(I_1)}, Q_-^{(I_2)}\}$, $\{Q_+^{(I_2)}, Q_-^{(I_3)}\}$, $\ldots$, 
$\{Q_+^{(I_{M-1})}, Q_-^{(I_{M})}\}$, $\{Q_+^{(I_{M})}, Q_-^{(I_1)}\}$ 
consisted of adjacent pairs $\{Q_+^{(I)}, Q_-^{(J)}\}$ in the negative sequence, 
when the stability parameters \eqref{xi_cone} are assumed.
\end{itemize}
\end{prop}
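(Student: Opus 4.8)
The plan is to deduce all three statements from one flux-propagation principle, after recasting them as a per-crossing count of zeros. For each crossing $I$ let $z^{(I)}$ denote the number of the five fields $\Phi_1^{(I)},\Phi_2^{(I)},\overline{\Phi}_3^{(I)},\overline{\Phi}_4^{(I)},\Phi_5^{(I)}$ that give a zero, in the sense used in the proof of Proposition \ref{prop:flux_positive}. Using $-Q_5^{(I)}=Q_1^{(I)}+Q_4^{(I)}=Q_2^{(I)}+Q_3^{(I)}$ one reads off the local minimum $z^{(I)}_{\min}=2$ when $\Phi_5^{(I)}$ gives a pole (one zero from each of the pairs $\{\Phi_1^{(I)},\overline{\Phi}_4^{(I)}\}$ and $\{\Phi_2^{(I)},\overline{\Phi}_3^{(I)}\}$) and $z^{(I)}_{\min}=1$ when $\Phi_5^{(I)}$ gives a zero. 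Summing over the $N_v-E_v$ pole-crossings and the $E_v$ zero-crossings recovers $Z=\sum_I z^{(I)}\ge 2N_v-E_v$, so the number of extra zeros is $\sum_I\big(z^{(I)}-z^{(I)}_{\min}\big)$, and it suffices to produce, for each of the features in (1)--(3), a crossing with $z^{(I)}>z^{(I)}_{\min}$, with the crossings in the three cases lying in disjoint portions of the tangle.

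Next I would isolate the propagation principle already implicit in the proof of Proposition \ref{prop:flux_positive}. Writing $d_{in}^{(I)}$ and $d_{out}^{(I)}=d_{in}^{(I+1)}$ for the incoming- and outgoing-arc fluxes of a strand at a crossing $I$ it threads, the flux relation at that crossing gives $d_{out}^{(I)}=d_{in}^{(I)}+\varepsilon^{(I)}$ with $\varepsilon^{(I)}$ equal to $\pm$ the flux of $\Phi_5^{(I)}$; hence along any strand the arc flux is weakly monotone while the crossings it threads have non-negative $\Phi_5$-flux, and its monotonicity strictly reverses at each crossing belonging to the negative sequence. Moreover, again as in that proof, once $d_{out}^{(I)}<0$ the non-vanishing of the twisted partition function in the massless limit forces $d_{out}^{(I+1)}<0$ unless crossing $I+1$ carries an extra zero; and the last arc of the $(1,1)$-tangle carries the background flux $f\in\{0,\dots,n\}$ while the first arc carries $0$.

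With this in hand, case (1) follows by starting from the last negative-sequence crossing $I_\ast$: the strict reversal there, combined with the absence of further negative crossings downstream, forces a negative outgoing flux to propagate toward the terminal arc, and since that arc carries $f\ge 0$ the propagation must be broken by an extra zero, which occurs at a crossing strictly after $Q_{\pm}^{(I_\ast)}$. Case (2) is the same argument run between two negative-sequence charge vectors of the same type $Q_+^{(I)},Q_+^{(J)}$ (resp. $Q_-^{(I)},Q_-^{(J)}$): the two strict reversals at $I$ and $J$ cannot both be absorbed without an extra vanishing on the arc joining them, because the same-type condition fixes how the strand enters and leaves and thereby pins down the sign of the net monotonicity change. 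Case (3) uses Proposition \ref{prop:sequence_adjacent}: a cyclic sequence of adjacent pairs built out of negative-sequence vectors forces, for the stability choice \eqref{xi_cone}, the JK residue to be taken at a massless pole relevant to some $Q_5^{(I_k)}$ with $I_k$ in the negative sequence; but $\Phi_5^{(I_k)}$ then has a negative flux and cannot supply that pole, so a pole of one of $\Phi_1^{(I_k)},\overline{\Phi}_4^{(I_k)},\Phi_2^{(I_k)},\overline{\Phi}_3^{(I_k)}$ must be used instead, and the flux relation at $I_k$ forces one of the remaining four fields to be a zero, so that $z^{(I_k)}>z^{(I_k)}_{\min}$; this is why the stability hypothesis appears only in (3).

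The main obstacle is the local bookkeeping in cases (1) and (2): one must track, crossing by crossing, which of the four non-$\Phi_5$ fields plays the role of ``incoming $\sfr=0$'' and ``outgoing $\sfr=2$'' along the over- and under-strands, how the opposite-/same-charge identifications of \eqref{pair_annihilate} and \eqref{pair_chiral} interact with the sign reversals, and the boundary behaviour \eqref{charge_c} near the external arcs; this is a finite but somewhat intricate case analysis. A secondary point is to check that the extra zeros produced in (1), (2) and (3) genuinely lie in disjoint parts of the tangle, so that the three contributions add, and that no cancellation sneaks in when the massless limit is taken, the latter being controlled by the zero count set up at the start.
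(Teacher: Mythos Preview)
Your treatment of (1) and (2) is essentially the paper's argument, phrased in the language of a flux-propagation principle; the paper is slightly more careful in that it distinguishes the two cases $d_{out}>n$ (coming from $Q_+$) and $d_{out}<0$ (coming from $Q_-$), whereas you speak only of ``a negative outgoing flux'', but the core idea and the location of the extra zero match.

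Your argument for (3), however, has a genuine gap. First, Proposition~\ref{prop:sequence_adjacent} concerns cyclic sequences of pairs that are adjacent in the \emph{original} sequence, while the cyclic sequence in (3) consists of pairs adjacent only in the \emph{negative} sequence; between any two of these there may sit many charge vectors with non-negative $\Phi_5$-flux, so Proposition~\ref{prop:sequence_adjacent} does not apply directly to force $Q_5^{(I_k)}$ into the JK cone. Second, even if some $Q_5^{(I_k)}$ were in the cone, your inference ``$\Phi_5^{(I_k)}$ has a negative flux and cannot supply that pole, so a pole of one of $\Phi_1^{(I_k)},\overline{\Phi}_4^{(I_k)},\Phi_2^{(I_k)},\overline{\Phi}_3^{(I_k)}$ must be used instead'' conflates two different notions of pole. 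The ``massless pole'' relevant to the JK residue is a pole in the complex scalars $\sigma$, coming from the $1$-loop factor of $\Phi_5^{(I_k)}$; it is present regardless of the sign of the flux. The flux-dependent zero/pole counting, on the other hand, takes place \emph{after} the residue has been evaluated at $\sigma^*$, and the fact that $\Phi_5^{(I_k)}$ contributes a zero there is precisely what is already recorded in $z^{(I_k)}_{\min}=1$. So step (c) in your argument is a non-sequitur, and no extra zero at $I_k$ is produced this way.

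The paper's proof of (3) proceeds differently: it uses the cone condition in the direction $\sum_{k}Q_5^{(I_k)}$ (as in the proof of Proposition~\ref{prop:sequence_adjacent}, but now applied to the actual cone containing $\bxi$) to deduce that some charge vector $P_\pm^{(K)}$ or $\overline{P}_\pm^{(K)}$ with \emph{non-negative} $\Phi_5^{(K)}$-flux, lying in the original sequence between two adjacent negative-sequence elements, must give a pole. The extra zero is then found not at a crossing $I_k$ of the negative sequence but at a non-negative-flux crossing between $I_k$ and $K$, by the same flux-propagation mechanism as in (1) and (2). To repair your argument you would need to replace the appeal to Proposition~\ref{prop:sequence_adjacent} by this cone-direction analysis and then run the flux propagation between the negative crossings and the intervening crossing $K$.
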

\begin{proof}
(1) Consider the last charge vector $Q_+^{(I)}$ (resp. $Q_-^{(I)}$) in 
a negative sequence. 
If the $\sfr=2$ chiral field with the charge vector $\overline{P}_+^{(I)}$ 
(resp. $\overline{P}_-^{(I)}$) gives a pole, the relevant flux 
$d_{out}$ satisfies $d_{out}>n$ (resp. $d_{out}<0$). 
Since the last outgoing $\sfr=2$ chiral field has a background magnetic flux 
$f \in \{0,1,\ldots,n\}$, by \eqref{charge_c}, 
there should be a charge vector $Q_+^{(K)}$ (resp. $Q_-^{(K)}$) 
with non-negative flux in the original sequence 
after $Q_+^{(I)}$ (resp. $Q_-^{(I)}$), and 
both $\sfr=0$ chiral field with the charge vector $P_+^{(K)}$ 
(resp. $P_-^{(K)}$) and 
$\sfr=2$ chiral field with the charge vector $\overline{P}_+^{(K)}$ 
(resp. $\overline{P}_-^{(K)}$) give zeros. This shows the assertion (1).

(2) Assume that there are adjacent charge vectors 
$Q_+^{(I)}$ and $Q_+^{(J)}$ in a negative sequence. 
If both $\sfr=2$ chiral field with the charge vector $\overline{P}_+^{(I)}$ 
and $\sfr=0$ chiral field with the charge vector $P_+^{(J)}$
give pole, the former relevant flux $d_{out}$ satisfies $d_{out}>n$ whereas 
the latter relevant flux $d_{in}$ satisfies $d_{in}<n$. 
Therefore, similarly to the proof of the assertion (1), 
there should be at least one extra zero between $Q_+^{(I)}$ and $Q_+^{(J)}$. 
Similarly, the assertion for adjacent charge vectors $Q_-^{(I)}$ and $Q_-^{(J)}$ is proved.

(3) Assume that there is the cyclic sequence in the assertion. 
To have a cone which contains the vector \eqref{xi_cone} inside, 
by considering $\sum_{k=1}^{M} Q_5^{(I_k)}$ 
(cf. the proof of Proposition \ref{prop:sequence_adjacent}), 
we see that there should be a charge vector $Q_+^{(K)}$ (resp. $Q_-^{(K)}$) 
with non-negative flux in the original sequence between 
\begin{itemize}
\item[(i)]
a pair of adjacent charge vectors in the negative sequence 
ordered as $Q_+^{(I_k)}$, $Q_-^{(I_{k+1})}$, 
where the $\sfr=0$ (resp. $\sfr=2$) chiral field with the charge vector $P_+^{(K)}$ (resp. $\overline{P}_-^{(K)}$) gives a pole, or; 

\item[(ii)]
a pair of adjacent charge vectors in the negative sequence 
ordered as $Q_-^{(I_{k+1})}$, $Q_+^{(I_k)}$, 
where the $\sfr=2$ (resp. $\sfr=0$) chiral field with the charge vector $\overline{P}_+^{(K)}$ (resp. $P_-^{(K)}$) gives a pole.
\end{itemize}
In the case (i), if the $\sfr=2$ chiral field with 
the charge vector $\overline{P}_+^{(I_k)}$ gives a pole, 
the relevant flux $d_{out}$ satisfies $d_{out}>n$ 
whereas the flux $d_{in}$ relevant to $P_+^{(K)}$ (resp. $P_-^{(K)}$) 
satisfies $d_{in}<n$ (resp. $d_{in}<0$). Therefore, at least 
one extra zero between $Q_+^{(I_k)}$ and $Q_+^{(K)}$ (resp. $Q_-^{(K)}$) 
should be found. 
Similarly, in the case (ii), at least 
one extra zero between $Q_-^{(I_{k+1})}$ and $Q_+^{(K)}$ (resp. $Q_-^{(K)}$) 
should be found.
\end{proof}

When $E_v=1$, the ordered negative sequences are $(Q_{\pm}^{(1)}, Q_{\mp}^{(1)})$, and 
Proposition \ref{prop:flux_negative} implies, at least, 
two extra zeros in addition to $2N_v-1$ zeros. 
When $E_v=2$, the ordered negative sequences are 
\begin{align}
(Q_{\pm}^{(1)}, Q_{\pm}^{(2)}, Q_{\mp}^{(1)}, Q_{\mp}^{(2)}),\ 
(Q_{\pm}^{(1)}, Q_{\mp}^{(1)}, Q_{\pm}^{(2)}, Q_{\mp}^{(2)}),\ 
(Q_{\pm}^{(1)}, Q_{\mp}^{(2)}, Q_{\pm}^{(2)}, Q_{\mp}^{(1)}),\ 
(Q_{\pm}^{(1)}, Q_{\mp}^{(2)}, Q_{\mp}^{(1)}, Q_{\pm}^{(2)}),
\label{ev2a}
\end{align}
which imply, at least, three extra zeros in addition to $2N_v-2$ zeros, and 
\begin{align}
(Q_{\pm}^{(1)}, Q_{\pm}^{(2)}, Q_{\mp}^{(2)}, Q_{\mp}^{(1)}),\
(Q_{\pm}^{(1)}, Q_{\mp}^{(1)}, Q_{\mp}^{(2)}, Q_{\pm}^{(2)}),
\label{ev2b}
\end{align} 
which imply, at least, four extra zeros in addition to $2N_v-2$ zeros. 
Therefore, we have the following corollary.

\begin{cor}\label{cor:flux_negative_12}
For the stability parameters \eqref{xi_cone}, 
the cones other than $\mathrm{Cone} (\bQ_5)$ do not contribute to 
the twisted partition function in the massless limit 
when the number of negative fluxes for $\Phi_5^{(I)}$ is up to $E_v=2$. 
\end{cor}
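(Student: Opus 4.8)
The plan is to reduce the statement to the vanishing criterion isolated in the proof of Proposition~\ref{prop:flux_positive}: a cone other than $\mathrm{Cone}(\bQ_5)$ contributes nothing to the twisted partition function in the massless limit as soon as the total number $Z$ of zeros among the $5N_v$ chiral multiplets satisfies $Z>2N_v$, since exactly $N_v$ residues are taken. Writing $E_v$ for the number of crossings at which $\Phi_5^{(I)}$ has negative flux, one always has $Z\ge 2N_v-E_v$, so it suffices to produce at least $E_v+1$ extra zeros beyond this minimum for every competing cone with $E_v\le 2$. The case $E_v=0$ is precisely Corollary~\ref{cor:flux_negative_0}, so only $E_v=1$ and $E_v=2$ need attention; for these, a competing cone forces a reduced sequence, from which the negative sequence (the charge vectors $Q_\pm^{(I)}$ of the $E_v$ crossings with negative $\Phi_5^{(I)}$ flux) is extracted as in the discussion preceding Proposition~\ref{prop:flux_negative}.

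Next I would run through the finite list of ordered negative sequences. Fixing $Q_+^{(1)}$ as the first entry, which is allowed by relabeling the crossing and by the freedom in the $\pm$ assignment, there is the single sequence $(Q_\pm^{(1)},Q_\mp^{(1)})$ when $E_v=1$, and exactly the six sequences of \eqref{ev2a}--\eqref{ev2b} when $E_v=2$, these being the $3!$ orderings of the remaining entries $\{Q_-^{(1)},Q_+^{(2)},Q_-^{(2)}\}$. To each such sequence I would apply the three clauses of Proposition~\ref{prop:flux_negative}: clause~(1) gives one extra zero past the last entry; clause~(2) gives one extra zero for each adjacent equal-sign pair; and clause~(3) gives one extra zero for each cyclic chain of adjacent opposite-sign pairs, in particular for each adjacent pair $\{Q_+^{(I)},Q_-^{(I)}\}$, which is the $M=1$ case. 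For $E_v=1$, clauses~(1) and~(3) together supply $\ge 2$ extra zeros, so $Z\ge 2N_v+1>2N_v$. For each of the four sequences in \eqref{ev2a} the applicable clauses supply $\ge 3$ extra zeros, and for each of the two sequences in \eqref{ev2b} they supply $\ge 4$ extra zeros, so again $Z>2N_v$. Hence by the vanishing criterion, and since, by Proposition~\ref{prop:other_pole} and the ensuing analysis, every competing cone gives rise to such a negative sequence, no cone other than $\mathrm{Cone}(\bQ_5)$ contributes, which is the assertion.

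The step I expect to be the main obstacle is the bookkeeping of where the extra zeros sit: one must verify that the zeros furnished by the different clauses of Proposition~\ref{prop:flux_negative}, and by the same clause applied to different adjacent pairs, lie at genuinely distinct crossings, so that their counts add rather than overlap. This amounts to tracking, along the tangle, which crossing each intermediate charge vector $Q_\pm^{(K)}$ and its companion fields $\Phi^{(K)},\overline{\Phi}^{(K)}$ belong to, and to using the monotonicity~\eqref{charge_c} of the fluxes along over- and under-crossing arcs. A secondary point is confirming completeness of the $E_v=2$ enumeration, which is the $3!$ count recorded above.
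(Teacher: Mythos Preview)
Your proposal is correct and follows the same approach as the paper: enumerate the finitely many ordered negative sequences for $E_v=1,2$, apply the three clauses of Proposition~\ref{prop:flux_negative} to count extra zeros, and invoke the vanishing criterion $Z>2N_v$. The paper carries out exactly this case analysis (with the same counts of $\ge 2$, $\ge 3$, $\ge 4$ extra zeros), though it does not spell out which clause yields which zero nor address the distinctness bookkeeping you flag; your observation that the zeros lie in disjoint segments of the tangle between consecutive negative-sequence entries is the right way to resolve that point.
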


From the negative sequences \eqref{ev2a} and \eqref{ev2b} with $E_v=2$, 
we make negative sequences with $E_v=E_1+E_2$ by replacements
\begin{align}
Q_{\pm}^{(1)}\ \to \
Q_{\pm}^{(1,J_1^{\pm})},\ Q_{\pm}^{(1,J_2^{\pm})},\ \ldots,\ Q_{\pm}^{(1,J_{E_1}^{\pm})},\quad
Q_{\pm}^{(2)}\ \to \
Q_{\pm}^{(2,K_1^{\pm})},\ Q_{\pm}^{(2,K_2^{\pm})},\ \ldots,\ Q_{\pm}^{(2,K_{E_2}^{\pm})},
\label{replace_ev2}
\end{align}
where $\{J_1^{\pm}, J_2^{\pm}, \ldots, J_{E_1}^{\pm}\}=\{1,2,\ldots,E_1\}$ and 
$\{K_1^{\pm}, K_2^{\pm}, \ldots, K_{E_2}^{\pm}\}=\{1,2,\ldots,E_2\}$.
We see that Proposition \ref{prop:flux_negative} also implies the following corollary.

\begin{cor}\label{cor:flux_negative_gen}
For the stability parameters \eqref{xi_cone}, 
the cones other than $\mathrm{Cone} (\bQ_5)$ do not contribute to 
the twisted partition function in the massless limit 
when the negative sequences take the above forms 
by the replacements \eqref{replace_ev2}.
\end{cor}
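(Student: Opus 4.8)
The plan is to run through the six families of negative sequences produced from \eqref{ev2a}--\eqref{ev2b} by the blow-up \eqref{replace_ev2} and, in each case, to read off enough extra zeros from Proposition~\ref{prop:flux_negative}. Recall first that when $E_v$ of the chiral fields $\Phi_5^{(I)}$ carry negative flux the counting in the proof of Proposition~\ref{prop:flux_positive}, adapted as in the text before Proposition~\ref{prop:flux_negative}, gives $Z\ge 2N_v-E_v$ for the number of zeros; hence it is enough to exhibit, in every case, at least $E_v+1$ extra zeros beyond this minimum, since then $Z>2N_v$ and the corresponding cone contributes nothing to the twisted partition function in the massless limit.

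First I would fix notation for the sequences. After the replacement \eqref{replace_ev2} a negative sequence is a concatenation $B_1B_2B_3B_4$ of four blocks, each $B_i$ a maximal run of charge vectors of one sign belonging to one crossing family, with $|B_i|=E_1$ or $E_2$ according to whether the $i$-th atom of the underlying $E_v=2$ form carries the superscript $(1)$ or $(2)$; in every one of the six forms two blocks belong to family $1$ (length $E_1$) and two to family $2$ (length $E_2$). Next I would tally the extra zeros. Proposition~\ref{prop:flux_negative}(2) applied inside the blocks gives $2(E_1-1)+2(E_2-1)=2E_v-4$ extra zeros, one per consecutive same-sign pair inside a block, plus one more for each block boundary $B_iB_{i+1}$ carrying equal signs --- there are two such boundaries in $(Q_{\pm}^{(1)},Q_{\pm}^{(2)},Q_{\mp}^{(1)},Q_{\mp}^{(2)})$ and in $(Q_{\pm}^{(1)},Q_{\pm}^{(2)},Q_{\mp}^{(2)},Q_{\mp}^{(1)})$, one in $(Q_{\pm}^{(1)},Q_{\mp}^{(2)},Q_{\mp}^{(1)},Q_{\pm}^{(2)})$ and in $(Q_{\pm}^{(1)},Q_{\mp}^{(1)},Q_{\mp}^{(2)},Q_{\pm}^{(2)})$, and none in the other two. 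Proposition~\ref{prop:flux_negative}(1) adds one more extra zero at a non-negative-flux crossing following the last charge vector. Finally Proposition~\ref{prop:flux_negative}(3) contributes through cyclic sequences: at a block boundary joining two blocks of the same crossing family and opposite sign the abutting pair $\{Q_+^{(j,\cdot)},Q_-^{(j,\cdot)}\}$ is a length-$1$ cyclic sequence as soon as that family's block has length $1$; and in the forms $(Q_{\pm}^{(1)},Q_{\mp}^{(2)},Q_{\pm}^{(2)},Q_{\mp}^{(1)})$ and $(Q_{\pm}^{(1)},Q_{\mp}^{(2)},Q_{\mp}^{(1)},Q_{\pm}^{(2)})$ --- in which the first and last blocks are of the \emph{same} family and thus ``wrap around'' the other family --- the boundary pairs always close up into a cyclic sequence (of length $1$, $2$ or $3$, depending on the relevant permutations), exactly as in the $E_v=2$ analysis of \eqref{ev2a}.

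Assembling the contributions form by form, one verifies that the total is always $\ge E_v+1$: for the first forms of \eqref{ev2a} and \eqref{ev2b} parts (1)--(2) alone already give $\ge 2E_v-1$ and $\ge 2E_v$; in the other four forms parts (1)--(2) give $2E_v-4+s+1$ with $s\in\{0,1\}$, and the part-(3) cyclic zeros make up the rest --- the governing elementary fact being that each crossing family $j$ ends up contributing at least $E_j$ extra zeros (its $2(E_j-1)$ within-block pairs and any equal-sign boundary it touches, which is already $\ge E_j$ when $E_j\ge 2$, or else, for $E_j=1$, a cyclic-sequence zero in their stead), so that the two families together with Proposition~\ref{prop:flux_negative}(1) yield $\ge E_1+E_2+1=E_v+1$.

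I expect the main obstacle to be bookkeeping rather than anything conceptual: one must confirm that the extra zeros produced by the three parts of Proposition~\ref{prop:flux_negative}, and by distinct invocations of each part, all sit at \emph{pairwise distinct} crossings, so that they genuinely add. This is clear in outline --- a part-(2) zero lies strictly between two consecutive negative-flux charge vectors, a part-(1) zero strictly beyond the whole negative sequence, and each part-(3) zero at a non-negative-flux crossing strictly inside the span of its cyclic sequence --- but it still has to be checked in the tight cases, above all $(Q_{\pm}^{(1)},Q_{\mp}^{(2)},Q_{\pm}^{(2)},Q_{\mp}^{(1)})$ with $E_1=E_2=1$, where the count is exactly $E_v+1$ and one must simultaneously locate two cyclic sequences and verify that their zeros are distinct from each other and from the part-(1) zero.
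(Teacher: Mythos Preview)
Your proposal is essentially correct and follows the same approach the paper leaves implicit: the paper itself merely asserts that Proposition~\ref{prop:flux_negative} implies the corollary, and your case-by-case tally of the extra zeros supplied by parts~(1)--(3) is precisely what fills that gap. You have also correctly identified the only real subtlety remaining, namely that the extra zeros produced by distinct invocations of Proposition~\ref{prop:flux_negative} lie at pairwise distinct crossings.
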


\begin{figure}[t]
\centering
\includegraphics[width=125mm]{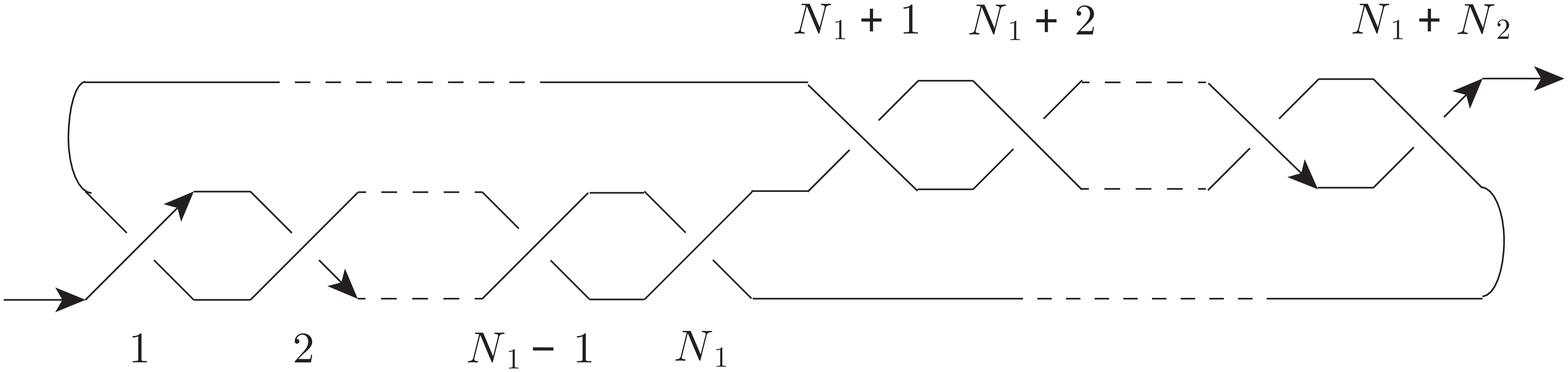}
\caption{A $(1,1)$-tangle diagram of a 2-bridge knot (rational knot) with 
$(N_1, N_2)$ twists, where $N_1 \ge N_2 \ge 0$.}
\label{fig:2_bridge}
\end{figure}

Let us consider a $(1,1)$-tangle diagram of a 2-bridge knot (rational knot) with 
$(N_1, N_2)$ twists ($N_1\ge N_2 \ge 0$) in Figure \ref{fig:2_bridge} whose 
original sequence is one of the following forms:
\begin{align}
\begin{split}
&
\big(Q_{+}^{(1)},\ Q_{-}^{(2)},\ \ldots,\ Q_{-}^{(N_1)};\ 
Q_{+}^{(N_1+N_2)},\ Q_{-}^{(N_1+N_2-1)},\ \ldots,\ Q_{-}^{(N_1+1)};
\\
& \hspace{3em}
Q_{+}^{(N_1)},\ Q_{-}^{(N_1-1)},\ \ldots,\ Q_{-}^{(1)};\ 
Q_{+}^{(N_1+1)},\ Q_{-}^{(N_1+2)},\ \ldots,\ Q_{-}^{(N_1+N_2)}\big),
\\
&
\big(Q_{+}^{(1)},\ Q_{-}^{(2)},\ \ldots,\ Q_{+}^{(N_1)};\ 
Q_{-}^{(N_1+1)},\ Q_{+}^{(N_1+2)},\ \ldots,\ Q_{+}^{(N_1+N_2)};
\\
& \hspace{3em}
Q_{-}^{(N_1)},\ Q_{+}^{(N_1-1)},\ \ldots,\ Q_{-}^{(1)};\ 
Q_{+}^{(N_1+1)},\ Q_{-}^{(N_1+2)},\ \ldots,\ Q_{-}^{(N_1+N_2)}\big),
\\
&
\big(Q_{+}^{(1)},\ Q_{-}^{(2)},\ \ldots,\ Q_{-}^{(N_1)};\ 
Q_{+}^{(N_1+N_2)},\ Q_{-}^{(N_1+N_2-1)},\ \ldots,\ Q_{+}^{(N_1+1)};
\\
& \hspace{3em}
Q_{-}^{(1)},\ Q_{+}^{(2)},\ \ldots,\ Q_{+}^{(N_1)};\ 
Q_{-}^{(N_1+1)},\ Q_{+}^{(N_1+2)},\ \ldots,\ Q_{-}^{(N_1+N_2)}\big),
\label{bridge_sq}
\end{split}
\end{align}
where the first, second, and third cases correspond to the cases with 
$(N_1, N_2)=(\textrm{even}, \textrm{even})$, 
$(N_1, N_2)=(\textrm{odd}, \textrm{even})$, and 
$(N_1, N_2)=(\textrm{even}, \textrm{odd})$, respectively. 
In particular, they include twist knots e.g. as
\begin{align}
\begin{split}
&
(N_1,N_2)=(2,1):\ \ \knot{3}{1},\quad
(N_1,N_2)=(2,2):\ \ \knot{4}{1},\quad
(N_1,N_2)=(3,2):\ \ \knot{5}{2},
\\
&
(N_1,N_2)=(4,2):\ \ \knot{6}{1},\quad
(N_1,N_2)=(5,2):\ \ \knot{7}{2},\quad
(N_1,N_2)=(6,2):\ \ \knot{8}{1},\ \ etc.,
\label{tw_list}
\end{split}
\end{align}
in the Rolfsen table. 
For the original sequences in \eqref{bridge_sq}, because the reduced sequences take the forms constructed by the replacements \eqref{replace_ev2}, 
any negative sequences for them also take the forms constructed by the replacements \eqref{replace_ev2}. 
Then, as a result of Corollary \ref{cor:flux_negative_gen}, 
we find the following proposition.

\begin{prop}\label{prop:bridge}
For the stability parameters \eqref{xi_cone}, 
the factorization of the twisted partition function on $\mathbb{S}^2 \times_{q} \mathbb{S}^1$ for 
the $(1,1)$-tangle diagram of the 2-bridge knot in Figure \ref{fig:2_bridge} 
gives the K-theoretic vortex partition function which agrees, 
in the massless limit 
and the exponentiated FI parameters $z_I \to +1$ or $-1$ limit 
depending on the prefactors in \eqref{R_limit} and \eqref{invR_limit}, 
with the normalized colored Jones polynomial of the 2-bridge knot.
\end{prop}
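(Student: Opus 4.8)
The plan is to combine the cone analysis of Section~\ref{subsec:jk_res} with the flux analysis of Section~\ref{subsec:charge_c}: I would show that, for the $(1,1)$-tangle diagrams of Figure~\ref{fig:2_bridge}, the only cone that survives in the massless limit is $\mathrm{Cone}(\bQ_5)$, and that this cone reproduces $J_n^{\cK}(q)$ by construction. First I would pin down the combinatorial input: tracing the $(1,1)$-tangle of the 2-bridge knot with $(N_1,N_2)$ twists starting from the bounded incoming arc, one records the ordered list of charge vectors $Q_{\pm}^{(I)}$; the two twist regions contribute two pairs of segments (a first and a second pass over each region), with signs alternating along a region and flipped between its two passes, and collating these while separating the three parity cases of $(N_1,N_2)$ gives precisely the forms \eqref{bridge_sq}. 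This is a direct inspection of the figure.

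Next I would run the reduction. Inside \eqref{bridge_sq} the two passes over a fixed twist region produce, for every pair of consecutive crossings $I,I+1$, a two-term cyclic sequence of adjacent pairs $\{Q_+^{(I)},Q_-^{(I+1)}\},\{Q_+^{(I+1)},Q_-^{(I)}\}$, so by Proposition~\ref{prop:sequence_adjacent} any valid cone other than $\mathrm{Cone}(\bQ_5)$ must contain a $Q_5$ from each such pair; hence the $Q_5$'s that may be dropped form an independent set within each twist region, and after dropping them the surviving crossings of a region occur with a single sign on each pass. Consequently the reduced sequences have exactly the shape obtained from the $E_v=2$ negative sequences \eqref{ev2a}, \eqref{ev2b} by the substitutions \eqref{replace_ev2}, with the two ``super-crossings'' identified with the two twist regions. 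Since a negative sequence is a subsequence of a reduced sequence that preserves this block pattern, every negative sequence is again of \eqref{replace_ev2}-form, and Corollary~\ref{cor:flux_negative_gen} then guarantees that no cone other than $\mathrm{Cone}(\bQ_5)$ contributes to the twisted partition function in the massless limit $\gamma_i\to1$.

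It then remains to evaluate the $\mathrm{Cone}(\bQ_5)$ contribution. There the JK residue is taken at the poles $\sigma_1\sigma_4=\sigma_2\sigma_3$ relevant to the $Q_5^{(I)}$; since $\bQ_5$ is a basis of $\IR^{N_v}$ these localize to $\sigma_I=\sigma_I^*=1$, and Proposition~\ref{prop:flux_positive} shows that under this the flux conditions \eqref{R_charge_c}, \eqref{invR_charge_c} hold in the massless limit. With those conditions the building blocks \eqref{vortex_pf_TR_cs}, \eqref{vortex_pf_invTR_cs}, \eqref{vortex_pf_min_max} collapse through \eqref{R_limit}, \eqref{invR_limit} --- after the specialization $z_I\to+1$ or $-1$ that absorbs the prefactors $(-1)^{d_{14}}$, $(-1)^{d_{23}}$ --- into the $R$-matrix, inverse $R$-matrix and $\mu^{\pm1}$ building blocks of \eqref{c_norm_jones_knot}, so that \eqref{k_vpf_gen} turns into \eqref{vortex_jones}, namely $I^{T[\cK]}_{\textrm{vortex}}(\bsig^*;\bz^*,\bgamma^*,q)=J_n^{\cK}(q)$. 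Feeding this into the factorization \eqref{z_factorization}, of which we retain only the bulk part, then yields the assertion.

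The hard part will be the reduction step: one must check that, for every admissible choice of which $Q_5^{(I)}$ to drop (equivalently, every vertex cover forced by Proposition~\ref{prop:sequence_adjacent}), the resulting reduced sequence --- or at least every one capable of producing a cone that contains the vector \eqref{xi_cone} inside --- collapses to a \eqref{replace_ev2}-template rather than to some more general pattern, and this uniformly over the three parity cases of \eqref{bridge_sq}. In particular one has to rule out that a longer cyclic sequence, possibly running through the junctions between the two twist regions, survives and forces dropping still more charge vectors; that bookkeeping is where the real work lies, the rest being an application of Corollary~\ref{cor:flux_negative_gen} together with Proposition~\ref{prop:flux_positive} and the building-block identities.
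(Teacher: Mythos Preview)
Your proposal is essentially the paper's (one-sentence) argument: check that every reduced, hence every negative, sequence arising from \eqref{bridge_sq} has the \eqref{replace_ev2} shape, invoke Corollary~\ref{cor:flux_negative_gen}, and finish with Proposition~\ref{prop:flux_positive} and the building-block identities \eqref{R_limit}, \eqref{invR_limit}.

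One step in your reduction needs tightening. The 2-cycles $\{Q_+^{(I)},Q_-^{(I+1)}\},\{Q_+^{(I+1)},Q_-^{(I)}\}$ you exhibit live in the \emph{original} sequence, and via Proposition~\ref{prop:sequence_adjacent} they only force the reduced-sequence crossings of a twist region to be an independent set in the path graph (no two consecutive). That by itself does not give ``single sign on each pass'': for instance $\{1,4\}$ is independent but has mixed parity, hence mixed signs. What actually forces single parity is the defining property that a reduced sequence contains no cyclic sequences \emph{of its own}: if two kept crossings $I_j<I_{j+1}$ in a region have opposite parity, then in the reduced sequence they are adjacent with opposite signs in the first pass and again (in the reverse order) in the second pass, producing a fresh 2-cycle inside the reduced sequence and contradicting its definition. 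With this observation your claim that each pass of a region contributes a single-sign block is correct, and the collapse to the \eqref{replace_ev2} templates (and hence the appeal to Corollary~\ref{cor:flux_negative_gen}) goes through as you outline.
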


In the next subsection, we describe an explicit computation for the trefoil knot $\knot{3}{1}$ as 
the simplest example of Proposition \ref{prop:bridge}.

\begin{remark}\label{rem:general_knot}
When we consider more general tangle diagrams, 
for a choice of the stability parameters in \eqref{xi_cone}, 
some cones other than $\mathrm{Cone} (\bQ_5)$ may contribute to 
the twisted partition function in the massless limit. 
For such cases, the second option described at the end of Section \ref{subsec:jk_res} (i.e. more appropriate choices of 
the stability parameters) should be considered. 
We leave them for future research.
\end{remark}

\subsection{Examples}\label{subsec:exs}

\subsubsection{Trefoil knot $\knot{3}{1}$}\label{subsec:gen_3_1}

\begin{figure}[t]
\begin{minipage}{0.30\textwidth}
\begin{center}
\includegraphics[width=35mm]{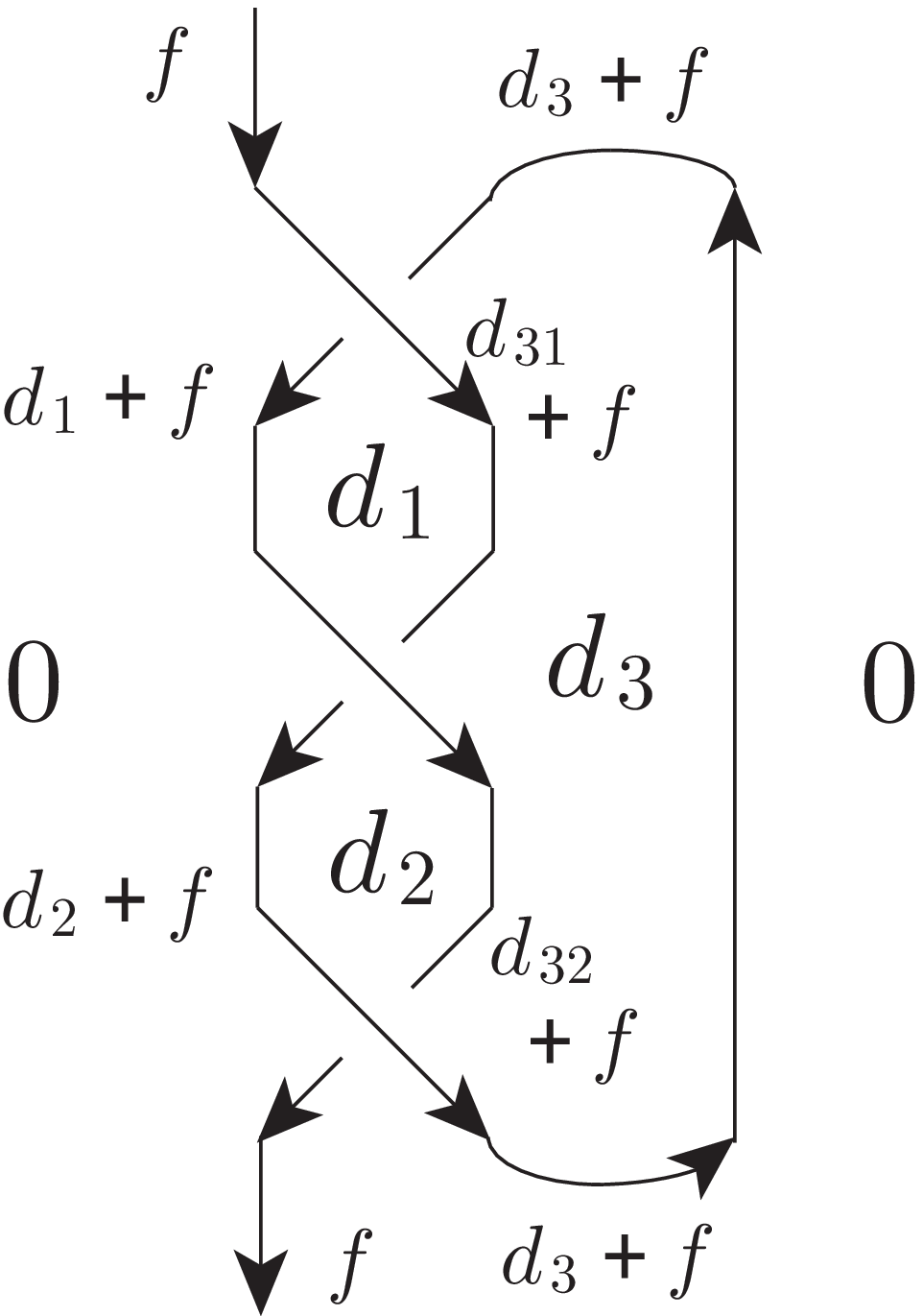}
\caption{$(1,1)$-tangle diagram of the trefoil knot $\knot{3}{1}$, 
where $f \in \{0,1,\ldots, n\}$ is fixed.}
\label{fig:3_1_add}
\end{center}
\end{minipage}
\hspace{0.5em}
\begin{minipage}{0.66\textwidth}
\begin{center}
\makeatletter
\def\@captype{table}
\makeatother
\begin{small}
\begin{tabular}{|c|c|c|c||c|c|c||c|}
\hline
Field & $U(1)_{1}$ & $U(1)_{2}$ & $U(1)_{3}$ & $U(1)_{c}$ & $U(1)_{ext}$ & mass & $U(1)_R$ 
\\ 
\hline
$\Phi_{1}^{(1)}$ & $0$ & $0$ & $0$ & $1$ & $-1$ & $\gamma_{1,1}$ & $0$ \\
$\Phi_{2}^{(1)}$ & $0$ & $0$ & $1$ & $0$ & $1$ & $\gamma_{1,2}$ & $0$ \\
$\overline{\Phi}_{3}^{(1)}$ & $-1$ & $0$ & $0$ & $0$ & $-1$ & $\gamma_{1,2}^{-1}$ & $2$ \\
$\overline{\Phi}_{4}^{(1)}$ & $-1$ & $0$ & $1$ & $-1$ & $1$ & $\gamma_{1,1}^{-1}$ & $2$ \\
$\Phi_{5}^{(1)}$ & $1$ & $0$ & $-1$ & $0$ & $0$ & $1$ & $0$ \\
\hline
\hline
$\Phi_{1}^{(2)}$ & $-1$ & $0$ & $0$ & $1$ & $-1$ & $\gamma_{2,1}$ & $0$ \\
$\Phi_{2}^{(2)}$ & $-1$ & $0$ & $1$ & $0$ & $1$ & $\gamma_{2,2}$ & $0$ \\
$\overline{\Phi}_{3}^{(2)}$ & $0$ & $-1$ & $0$ & $0$ & $-1$ & $\gamma_{2,2}^{-1}$ & $2$ \\
$\overline{\Phi}_{4}^{(2)}$ & $0$ & $-1$ & $1$ & $-1$ & $1$ & $\gamma_{2,1}^{-1}$ & $2$ \\
$\Phi_{5}^{(2)}$ & $1$ & $1$ & $-1$ & $0$ & $0$ & $1$ & $0$ \\
\hline
\hline
$\Phi_{1}^{(3)}$ & $0$ & $-1$ & $0$ & $1$ & $-1$ & $\gamma_{3,1}$ & $0$ \\
$\Phi_{2}^{(3)}$ & $0$ & $-1$ & $1$ & $0$ & $1$ & $\gamma_{3,2}$ & $0$ \\
$\overline{\Phi}_{3}^{(3)}$ & $0$ & $0$ & $0$ & $0$ & $-1$ & $\gamma_{3,2}^{-1}$ & $2$ \\
$\overline{\Phi}_{4}^{(3)}$ & $0$ & $0$ & $1$ & $-1$ & $1$ & $\gamma_{3,1}^{-1}$ & $2$ \\
$\Phi_{5}^{(3)}$ & $0$ & $1$ & $-1$ & $0$ & $0$ & $1$ & $0$ \\
\hline
\end{tabular}
\caption{
Matter content for the $(1,1)$-tangle diagram of 
the trefoil knot $\knot{3}{1}$ in Figure \ref{fig:3_1_add} 
corresponding to 
the inverse $R$-matrices for \eqref{3_1_block}, 
where $U(1)_{ext}$ and $U(1)_c$ are global symmetries.
}
\label{31_matter}
\end{small}
\end{center}
\end{minipage}
\end{figure}

As the simplest non-trivial example, 
consider a $(1,1)$-tangle diagram of 
the trefoil knot $\knot{3}{1}$ in Figure \ref{fig:3_1_add} 
(Figure \ref{fig:tangle_31}). 
The associated $U(1)^3$ knot-gauge theory $T[\knot{3}{1}]$ has 
the chiral fields in Table \ref{31_matter} and 
Chern-Simons couplings
\begin{align}
\begin{split}
k_{11}=k_{22}=k_1^{\text{g-R}}=k_2^{\text{g-R}}=1,\quad
k_{33}=-\frac32,\quad
k_{12}=k_{21}=-\frac12,\quad 
k_3^{\text{g-R}}=\frac12,
\\
k_{3c}^{\text{g-f}}=\frac32,\quad
k_{3f}^{\text{g-f}}=-3,\quad
k_c^{\text{f-R}}=2,\quad
k_{cf}^{\text{f-f}}=k_{fc}^{\text{f-f}}=3,\quad
k_{ff}^{\text{f-f}}=-6,\quad
k_{f}^{\text{f-R}}=2,
\end{split}
\end{align}
and is considered to be a coupled system of the theories labeled by
\begin{align}
\begin{split}
&
T_1=
T[\overline{R}^{\ 0\ d_{3}}_{d_{1} d_{31}}(\gamma_{1,1}, \gamma_{1,2})],\quad
T_2=
T[\overline{R}^{d_{1} d_{31}}_{d_{2} d_{32}}(\gamma_{2,1}, \gamma_{2,2})],
\\
&
T_3=
T[\overline{R}^{d_{2} d_{32}}_{\ 0\ d_{3}}(\gamma_{3,1}, \gamma_{3,2})],\quad
T_4=T[\mu_{d_3}],
\label{3_1_block}
\end{split}
\end{align}
where $d_{ij}=d_i-d_j$, and 
$\bgamma=(\gamma_{1,1},\gamma_{1,2},\gamma_{2,1},\gamma_{2,2},
\gamma_{3,1},\gamma_{3,2})$ are mass parameters. 
The K-theoretic vortex partition function is given by
\begin{align}
I^{T[\knot{3}{1}]}_{\textrm{vortex}}(\bsig;\bz,\bgamma,q)=
\sum_{d_1, d_2, d_3} z_1^{d_1}z_2^{d_2}z_3^{d_3}
\prod_{i=1}^{4} I_{n,f}^{T_i}(\bsig;q),
\label{k_vpf_3_1_gen}
\end{align}
where 
$\bsig=(\sigma_1, \sigma_2, \sigma_3)$, and 
$\bz=(z_1, z_2, z_3)$ are the exponentiated FI parameters associated with 
the $U(1)_1 \times U(1)_2 \times U(1)_3$ gauge symmetry. 
For the FI parameters $\xi_I=-\textrm{Re}(\log z_I)$, $I=1,2,3$, 
inside $\mathrm{Cone} (Q_1, Q_2, Q_3)$, only the JK residue at 
$\bsig=\bsig^*=(1,1,1)$ finally contributes, 
where $Q_1=(1,0,-1)$, $Q_2=(1,1,-1)$, and $Q_3=(0,1,-1)$ are, respectively, 
the charge vectors of $\Phi_5^{(1)}$, $\Phi_5^{(2)}$, and $\Phi_5^{(3)}$ 
for the $U(1)_1 \times U(1)_2 \times U(1)_3$ gauge symmetry. 
Then, under $\bz \to \bz^*=(1,1,-1)$ following \eqref{invR_limit} and 
$\bgamma \to \bgamma^*$ with $\gamma_{k,\ell}^*=1$, 
the K-theoretic vortex partition function \eqref{k_vpf_3_1_gen} yields 
the normalized colored Jones polynomial of $\knot{3}{1}$:
\begin{align}
& 
I^{T[\knot{3}{1}]}_{\textrm{vortex}}(\sigma^*;\bz^*,\bgamma^*,q)
\nonumber
\\
&=
q^{\frac34 n(n+2)}
\mathop{\sum_{0 \le d_{13},d_{23} \le f}}
\limits_{-d_{13}-d_{23} \le d_3 \le n-f}
\left(R^{-1}\right)^{f\ \ \ \ \ \ d_{3}+f}_{d_{1}+f\ d_{31}+f} 
\left(R^{-1}\right)^{d_{1}+f\ d_{31}+f}_{d_{2}+f\ d_{32}+f}
\left(R^{-1}\right)^{d_{2}+f\ d_{32}+f}_{f\ \ \ \ \ d_{3}+f} \, \mu_{d_3+f}
=
J_n^{\knot{3}{1}}(q)
\nonumber
\\
&=
\mathop{\sum_{0 \le d_{13},d_{23} \le f}}
\limits_{-d_{13}-d_{23} \le d_3 \le n-f}
q^{d_{13}^2+d_{23}^2+(d_{13}+d_{23}-3f+1)d_3+(d_3-d_{13}-d_{23}+3f+1)n
-3f^2+f}
\nonumber
\\
&\quad \times
\frac{\qq{q}{d_{13}+d_3+f} \qq{q}{n-f+d_{13}} 
\qq{q}{d_{23}+d_3+f} \qq{q}{n-f+d_{23}} 
\qq{q}{f} \qq{q}{n-f-d_3}}
{\qq{q}{f-d_{13}} \qq{q}{d_{13}} \qq{q}{n-f-d_{13}-d_3}
\qq{q}{f-d_{23}} \qq{q}{d_{23}} \qq{q}{n-f-d_{23}-d_3} 
\qq{q}{d_{13}+d_{23}+d_3} \qq{q}{d_3+f} \qq{q}{n-f}},
\label{jones_31f}
\end{align}
where note that $\qq{q}{d}^{-1}=0$ for $d \in {\IZ}_{<0}$, and 
we can check that the result does not depend on the constant $f \in \{0,1,\ldots,n\}$.

\subsubsection{Unknot and Reidemeister move I}\label{subsec:unknot}

\begin{figure}[t]
\begin{minipage}{0.35\textwidth}
\begin{center}
\includegraphics[width=35mm]{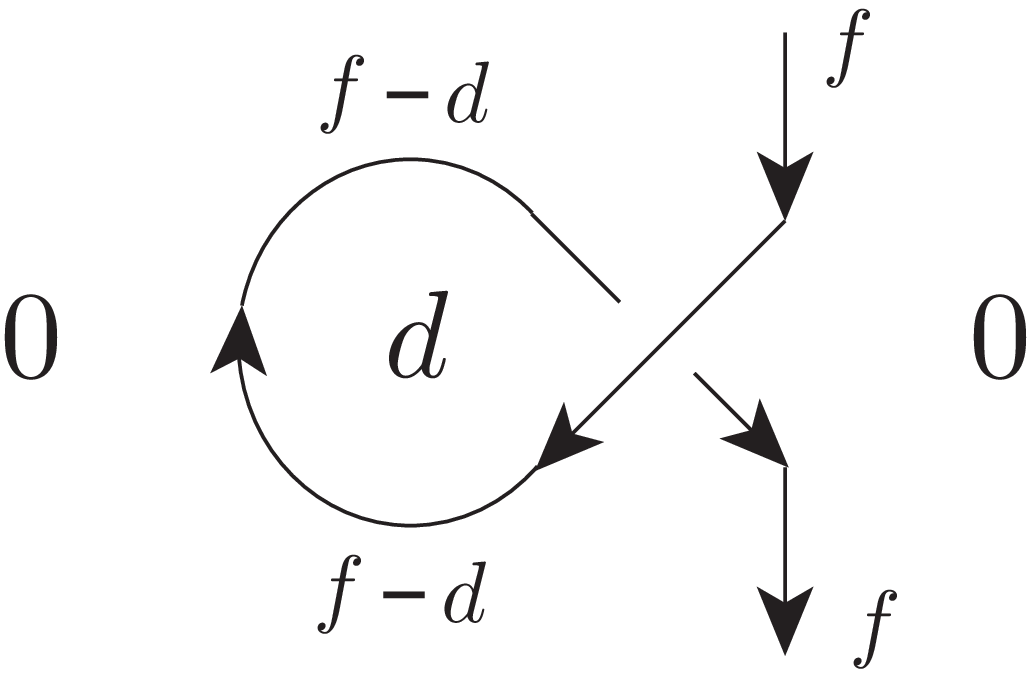}
\caption{$(1,1)$-tangle diagram of the unknot.}
\label{fig:unknot_ex}
\end{center}
\end{minipage}
\hspace{1em}
\begin{minipage}{0.6\textwidth}
\begin{center}
\makeatletter
\def\@captype{table}
\makeatother
\begin{tabular}{|c|c||c|c|c||c|}
\hline
Field & $U(1)$ & $U(1)_{c}$ & $U(1)_{ext}$ & mass & $U(1)_R$ 
\\ 
\hline
$\Phi_{1}$ & $-1$ & $0$ & $1$ & $\gamma_1$ & $0$ \\
$\Phi_{2}$ & $0$ & $1$ & $-1$ & $\gamma_2$ & $0$ \\
$\overline{\Phi}_{3}$ & $-1$ & $-1$ & $1$ & $\gamma_2^{-1}$ & $2$ \\
$\overline{\Phi}_{4}$ & $0$ & $0$ & $-1$ & $\gamma_1^{-1}$ & $2$ \\
$\Phi_{5}$ & $1$ & $0$ & $0$ & $1$ & $0$ \\
\hline
\end{tabular}
\caption{
Matter content for the tangle diagram in Figure \ref{fig:unknot_ex} 
corresponding to 
the $R$-matrix $R^{f-d\, f}_{f-d\, f}$, where 
$U(1)_{ext}$ and $U(1)_c$ are global symmetries.
}
\label{unknot_ex_R_matter}
\end{center}
\end{minipage}
\end{figure}

Consider a $(1,1)$-tangle diagram of the unknot in 
Figure \ref{fig:unknot_ex}. 
Then, the $U(1)$ knot-gauge theory $T[\mathbf{0}]$ with chiral fields in 
Table \ref{unknot_ex_R_matter} 
and Chern-Simons couplings
\begin{align}
k=\frac12,\quad
k_{c}^{\text{g-f}}=\frac12,\quad
k_{f}^{\text{g-f}}=-1,\quad
k^{\text{g-R}}=\frac32,\quad
k_{cf}^{\text{f-f}}=k_{fc}^{\text{f-f}}=-1,\quad
k_{ff}^{\text{f-f}}=2,\quad
k_{f}^{\text{f-R}}=-2,
\end{align}
which is considered to be a coupled system of two theories labeled by
\begin{align}
T_1=T[R^{-d\, 0}_{-d\, 0}(\bgamma)],\quad
T_2=T[\overline{\mu}_{-d}],
\end{align}
is associated, 
where $\bgamma=(\gamma_1, \gamma_2)$ are mass parameters. 
The K-theoretic vortex partition function is given by
\begin{align}
I_{\textrm{vortex}}^{T[\mathbf{0}]}(\sigma;z,\bgamma,q)=
\sum_{d} z^{d}\,
I_{n,f}^{T_1}(\sigma;q)\, I_{n,f}^{T_2}(q).
\end{align}
Here the exponentiated FI parameter $z$ associated with 
the $U(1)$ gauge symmetry is taken as $-\textrm{Re}(\log z)>0$, and then 
the residue at $\sigma=\sigma^*=1$ relevant to $\Phi_5$ is taken.  
As a result, the K-theoretic vortex partition function yields, 
by $z \to z^*=1$ following \eqref{R_limit} and 
$\bgamma \to \bgamma^*=(1,1)$, 
the normalized colored Jones polynomial of unknot:
\begin{align}
I_{\textrm{vortex}}^{T[\mathbf{0}]}(\sigma^*;z^*, \bgamma^*,q)
&=
q^{-\frac14n(n+2)}
\sum_{0 \le d \le f} R^{f-d\, f}_{f-d\, f}\, \mu_{f-d}^{-1}
=
J_n^{\mathbf{0}}(q)
\nonumber
\\
&=
\sum_{0 \le d \le f}
(-1)^{d}\, q^{\frac12 d(d+1) -fd - f(n-f+1)}
\frac{\qq{q}{n-f+d} \qq{q}{f}}
{\qq{q}{f-d} \qq{q}{n-f} \qq{q}{d}}
\nonumber
\\
&
=1.
\label{v_unknot}
\end{align}
Here the last equality is obvious for $f=0$ and, in general, 
follows from the fact that it equals to the one by $n \to n-1$ and $f \to f-1$, 
i.e.,
$$
\sum_{0 \le d \le f-1}
(-1)^{d}\, q^{\frac12 d(d+1) - (f-1)d - (f-1)(n-f+1)}
\frac{\qq{q}{n-f+d} \qq{q}{f-1}}
{\qq{q}{f-d-1} \qq{q}{n-f} \qq{q}{d}},
$$
which can be shown by the following $q$-Pascal relation
\begin{align}
\frac{\qq{q}{m+n}}{\qq{q}{m} \qq{q}{n}}=
\frac{\qq{q}{m+n-1}}{\qq{q}{m-1} \qq{q}{n}} +
q^{m} \frac{\qq{q}{m+n-1}}{\qq{q}{m} \qq{q}{n-1}},
\label{q_pascal}
\end{align}
with $m=f-d$ and $n=d$. 
The result \eqref{v_unknot} is understood by the Reidemeister move I.

\section{Reduced knot-gauge theory}\label{sec:ded_gauge}

In this section, we consider $(1,1)$-tangle diagrams with the external 
incoming (outgoing) constant $f=0$ 
such that the first incoming arc is over crossing and 
the last outgoing arc is under crossing. 
On this basis, the (inverse) $R$-matrices assigned to 
the first and last crossings are extremely degenerated, and
the corresponding gauge theories are simplified. 
As the examples, we describe the trefoil knot $\knot{3}{1}$, 
the figure-eight knot $\knot{4}{1}$ and the 3-twist knot $\knot{5}{2}$.

\subsection{Setup}\label{subsec:setup}

We assume the following setup for a $(1,1)$-tangle diagram of a knot $\cK$
\cite{Yokota:2011, ChoMurakami} 
(see Figure \ref{fig:3_1}, \ref{fig:4_1} and \ref{fig:5_2} for examples):
\begin{itemize}
\item[{\bf 1.}]
The tangle starts from an over crossing arc and ends with an under crossing arc. 

\item[{\bf 2.}]
Trivial external incoming (outgoing) constant $f=0$ is assigned 
at the end points of the tangle diagram.
\end{itemize}
By the last conditions in \eqref{R_charge_c} and \eqref{invR_charge_c}, 
this setup implies that 
the $R$-matrices assigned to the first and last crossings 
have degenerated forms \eqref{deg_R_matrix_3} or \eqref{deg_R_matrix_m3}. 
Instead of introducing chiral fields in 
Section \ref{subsec:gauge_knot_build}, we construct them 
with normalization factors in \eqref{c_norm_jones_knot} just 
as Chern-Simons factors in \eqref{cs_factor_vortex}, 
with non-zero Chern-Simons couplings 
$k_{1c}^{\text{g-f}}=\mp 1/2$, $k_{2c}^{\text{g-f}}=\pm 1/2$ 
and $k_c^{\text{f-R}}=\mp 1$:
\begin{align}
\begin{split}
&
I_{d_{12},n}^{\textrm{CS}_+}(\bsig;q)=
\left(\frac{\sigma_1}{\sigma_2}\right)^{-\frac12 n}
q^{-\frac12 n d_{12} - \frac12 n},
\qquad
I_{d_{12},n}^{\textrm{CS}_-}(\bsig;q)=
\left(\frac{\sigma_1}{\sigma_2}\right)^{\frac12 n}
q^{\frac12 n d_{12} + \frac12 n},
\label{cs_deg_r}
\end{split}
\end{align}
as
\begin{align}
I_{n}^{T[R^{d_{12} \ 0\ }_{\ 0\ d_{12}}]}(q)=
I_{n}^{T[R^{\ 0\ d_{12}}_{d_{12} \ 0\ }]}(q)&=
I_{d_{12},n}^{\textrm{CS}_+}(\bsig;q)
\nonumber\\
&=
q^{-\frac14 n(n+2)}\, R^{d_{12} \ 0\ }_{\ 0\ d_{12}}=
q^{-\frac14 n(n+2)}\, R^{\ 0\ d_{12}}_{d_{12} \ 0\ }
\,,
\label{vortex_pf_R_deg_a}
\\
I_{n}^{T[\overline{R}^{d_{12} \ 0\ }_{\ 0\ d_{12}}]}(q)=
I_{n}^{T[\overline{R}^{\ 0\ d_{12}}_{d_{12} \ 0\ }]}(q)&=
I_{d_{12},n}^{\textrm{CS}_-}(\bsig;q)
\nonumber\\
&=
q^{\frac14 n(n+2)} \left(R^{-1}\right)^{d_{12} \ 0\ }_{\ 0\ d_{12}}=
q^{\frac14 n(n+2)} \left(R^{-1}\right)^{\ 0\ d_{12}}_{d_{12} \ 0\ }
\,,
\label{vortex_pf_R_deg_b}
\end{align}
where $\bsig=(\sigma_1, \sigma_2)$ are the complex scalars for a gauge symmetry 
$U(1)_1 \times U(1)_2$. Here, as in footnote \ref{footnote:p_anomaly}, 
the $U(1)_1$-$U(1)_c$ and $U(1)_2$-$U(1)_c$ parity anomalies 
are shown to be absent for each loop. 


In the following, we will represent the arcs with the trivial constant $0$, like 
the first and last ones, by dashed lines.
If the second crossing after the first over crossing is also 
over crossing, the second arc is also represented by the dashed line, and 
the same applies to the subsequent and last crossings. 
Let $n_v$ be the number of loops in the tangle diagram after removing the dashed lines. 
Then a $U(1)^{n_v}$ gauge theory $T^{\text{red}}[\cK]$ 
is constructed similarly as Section \ref{sec:knot_building}, and 
Proposition \ref{prop:bridge} is also established because 
the charge vectors like $Q_{+}^{(1)}$ and $Q_{-}^{(N_1+N_2)}$ are 
just removed from the original sequences in \eqref{bridge_sq}.
The gauge theory $T^{\text{red}}[\cK]$ is simpler than the knot-gauge theories 
$T[\cK]$ in Section \ref{sec:knot_building}, and we refer to it as 
the reduced knot-gauge theory.

\subsection{Examples}\label{subsec:red_exs}

\subsubsection{Trefoil knot $\knot{3}{1}$}\label{subsec:3_1}

\begin{figure}[t]
\begin{minipage}{0.40\textwidth}
\begin{center}
\includegraphics[width=34mm]{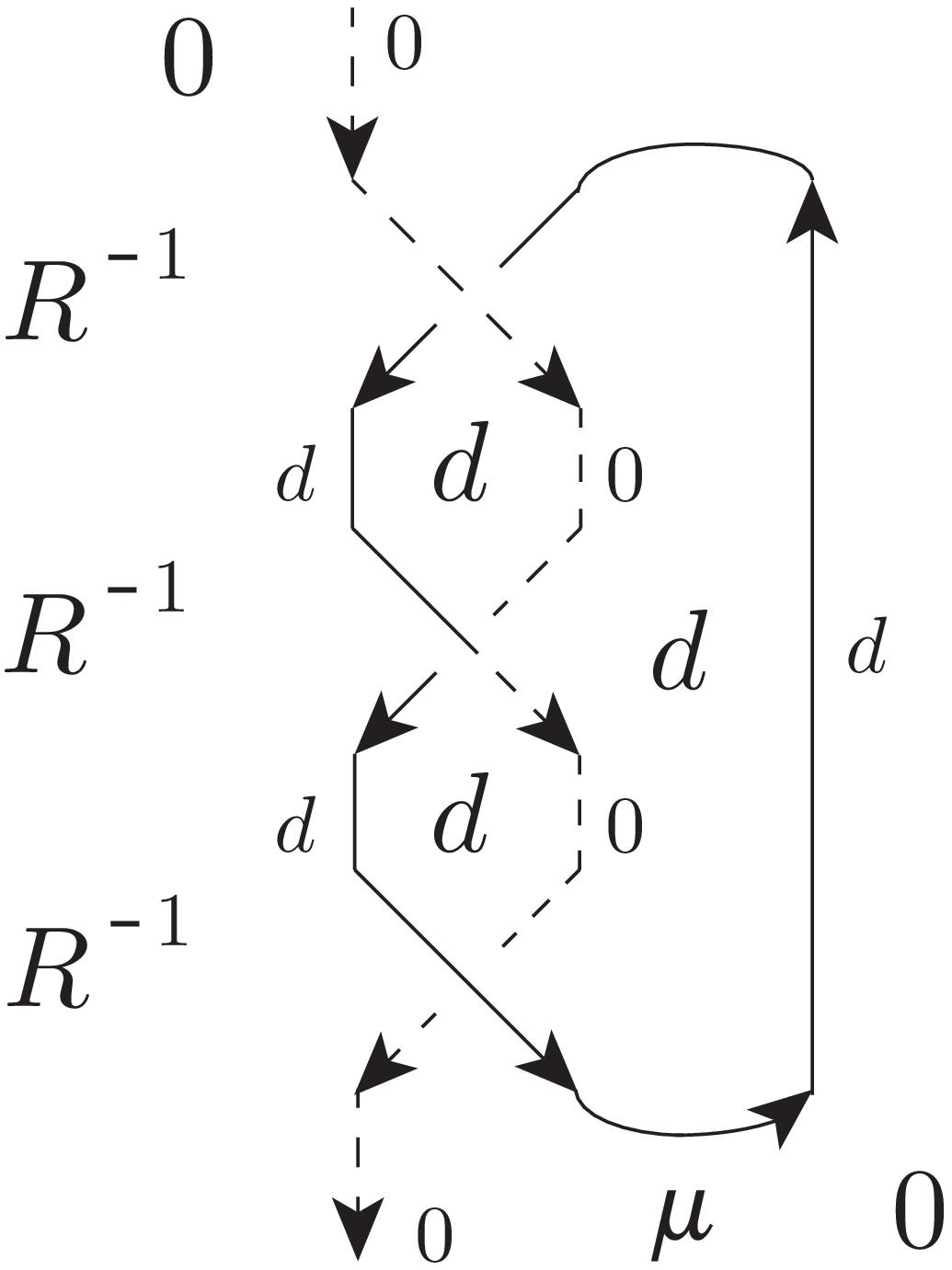}
\caption{$(1,1)$-tangle diagram of the trefoil knot $\knot{3}{1}$, 
where $0 \le d \le n$.}
\label{fig:3_1}
\end{center}
\end{minipage}
\hspace{1.5em}
\begin{minipage}{0.55\textwidth}
\begin{center}
\makeatletter
\def\@captype{table}
\makeatother
\begin{tabular}{|c|c||c|c||c|}
\hline
Field & $U(1)$ & $U(1)_{c}$ & mass & $U(1)_R$ 
\\
\hline
$\Phi_{1}^{(1)}$ & $-1$ & $1$ & $\gamma_1$ & $0$ \\
$\Phi_{2}^{(1)}$ & $0$ & $0$ & $\gamma_2$ & $0$ \\
$\overline{\Phi}_{3}^{(1)}$ & $-1$ & $0$ & $\gamma_2^{-1}$ & $2$ \\
$\overline{\Phi}_{4}^{(1)}$ & $0$ & $-1$ & $\gamma_1^{-1}$ & $2$ \\
$\Phi_{5}^{(1)}$ & $1$ & $0$ & $1$ & $0$ \\
\hline
\end{tabular}
\caption{
Matter content for the tangle diagram in Figure \ref{fig:3_1}, 
which corresponds to the $R$-matrix $\left(R^{-1}\right)^{d\, 0}_{d\, 0}$.
}
\label{31_R_matter}
\end{center}
\end{minipage}
\end{figure}

Consider the reduced $U(1)$ knot-gauge theory $T^{\text{red}}[\knot{3}{1}]$ 
for a $(1,1)$-tangle diagram of the trefoil knot $\knot{3}{1}$ 
in Figure \ref{fig:3_1} (see Section \ref{subsec:gen_3_1} for 
the $U(1)^3$ knot-gauge theory $T[\knot{3}{1}]$). 
The associated chiral fields are in 
Table \ref{31_R_matter} and the associated Chern-Simons couplings are 
\begin{align}
k=-\frac12,\quad
k_{c}^{\text{g-f}}=\frac32,\quad
k^{\text{g-R}}=\frac52,\quad
k_{c}^{\text{f-R}}=2.
\end{align}
The gauge theory is considered to be a coupled system of 
the theories labeled by
\begin{align}
T_1=T[\overline{R}^{0\, d}_{d\, 0}],\quad
T_2=T[\overline{R}^{d\, 0}_{d\, 0}(\bgamma)],\quad
T_3=T[\overline{R}^{d\, 0}_{0\, d}],\quad
T_4=T[\mu_{d}],
\end{align}
where the building block theory $T_2$ has 
mass parameters $\bgamma=(\gamma_1, \gamma_2)$ 
which are taken to be $\gamma_1, \gamma_2 \to 1$ at the end. 
The K-theoretic vortex partition function is given by%
\footnote{In this section, for simplicity we use a notation 
$I_{n}^{T_i}(\bsig;q)=I_{n,f=0}^{T_i}(\bsig;q)$.}
\begin{align}
I^{T^{\text{red}}[\knot{3}{1}]}_{\textrm{vortex}}(\sigma;z,\bgamma,q)=
\sum_{d} z^{d}\,
I_{n}^{T_1}(q)\,
I_{n}^{T_2}(\sigma;q)\,
I_{n}^{T_3}(q)\,
I_{n}^{T_4}(q).
\label{k_vpf_3_1}
\end{align}
Here $z$ is the exponentiated FI parameter for the $U(1)$ gauge symmetry 
that we take as $\xi =-\textrm{Re}(\log z)>0$, and 
the JK residue at $\sigma=\sigma^*=1$ relevant to $\Phi_5^{(1)}$ is taken. 
Then, the K-theoretic vortex partition function \eqref{k_vpf_3_1} yields, 
by $z \to z^*=-1$ following \eqref{invR_limit} and 
$\bgamma \to \bgamma^*=(1,1)$, 
the normalized colored Jones polynomial of $\knot{3}{1}$:
\begin{align}
I^{T^{\text{red}}[\knot{3}{1}]}_{\textrm{vortex}}(\sigma^*;z^*,\bgamma^*,q)
&=
q^{\frac34 n(n+2)}
\sum_{0 \le d \le n}
\left(R^{-1}\right)^{0\, d}_{d\, 0} \left(R^{-1}\right)^{d\, 0}_{d\, 0}
\left(R^{-1}\right)^{d\, 0}_{0\, d} \, \mu_{d}
= J_n^{\knot{3}{1}}(q)
\nonumber
\\
&=
\sum_{0 \le d \le n}
q^{(n+1)d+n}
\frac{\qq{q}{n}}{\qq{q}{n-d}}.
\label{jones_3_1}
\end{align}
It is easy to see that this result agrees with \eqref{jones_31f} with $f=0$.

\subsubsection{Figure-eight knot $\knot{4}{1}$}\label{subsec:4_1}

\begin{figure}[t]
\begin{minipage}{0.41\textwidth}
\begin{center}
\includegraphics[width=40mm]{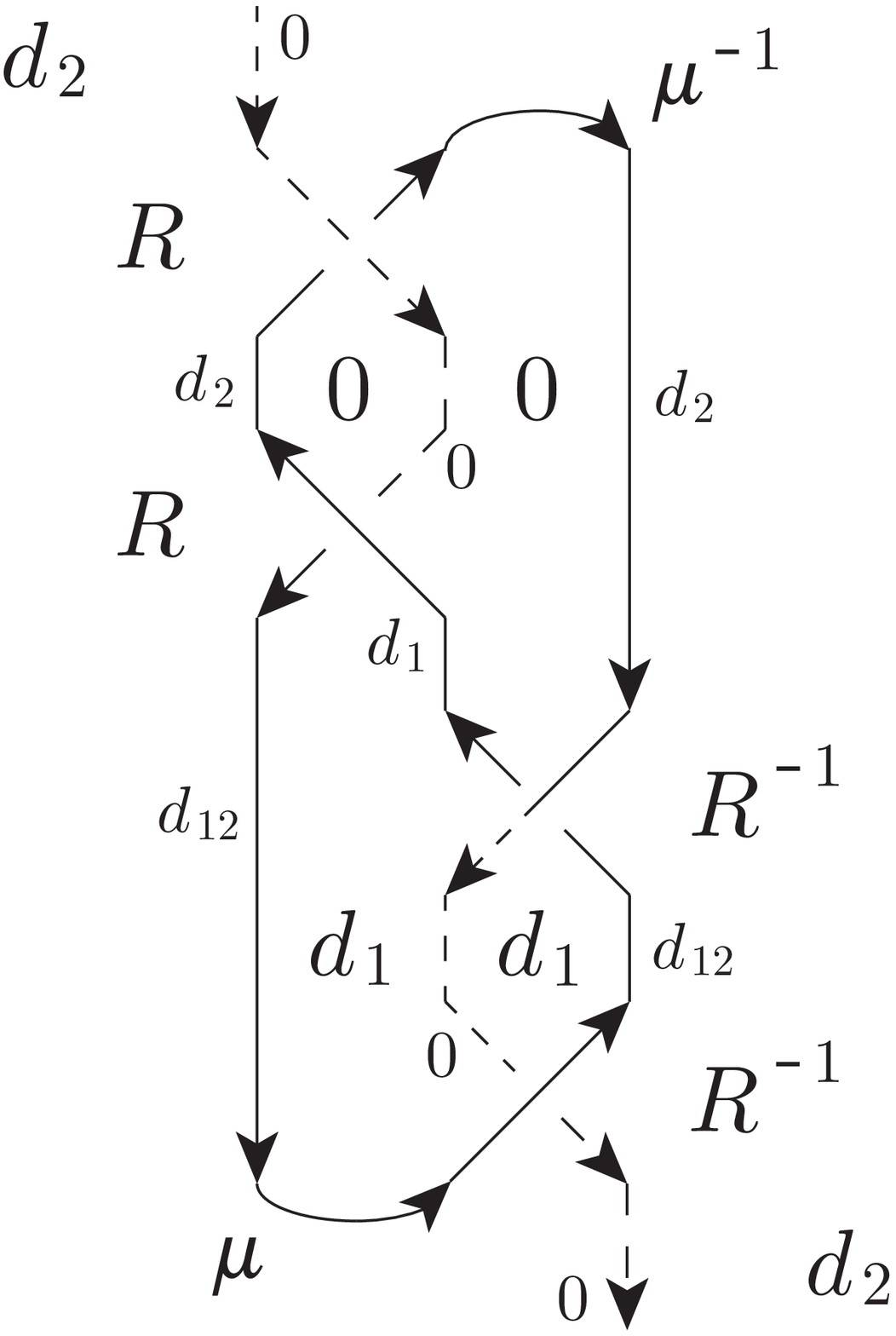}
\caption{$(1,1)$-tangle diagram of the figure-eight knot $\knot{4}{1}$, 
where $0 \le d_2 \le d_1 \le n$.}
\label{fig:4_1}
\end{center}
\end{minipage}
\hspace{1em}
\begin{minipage}{0.56\textwidth}
\begin{center}
\makeatletter
\def\@captype{table}
\makeatother
\begin{tabular}{|c|c|c||c|c||c|}
\hline
Field & $U(1)_{1}$ & $U(1)_{2}$ & $U(1)_{c}$ & mass & $U(1)_R$ 
\\ 
\hline
$\Phi_{1}^{(1)}$ & $0$ & $0$ & $0$ & $\gamma_{1,1}$ & $0$ \\
$\Phi_{2}^{(1)}$ & $-1$ & $0$ & $1$ & $\gamma_{1,2}$ & $0$ \\
$\overline{\Phi}_{3}^{(1)}$ & $0$ & $1$ & $-1$ & $\gamma_{1,2}^{-1}$ & $2$ \\
$\overline{\Phi}_{4}^{(1)}$ & $-1$ & $1$ & $0$ & $\gamma_{1,1}^{-1}$ & $2$ \\
$\Phi_{5}^{(1)}$ & $1$ & $-1$ & $0$ & $1$ & $0$ \\
\hline
%
\hline
$\Phi_{1}^{(2)}$ & $0$ & $-1$ & $1$ & $\gamma_{2,1}$ & $0$ \\
$\Phi_{2}^{(2)}$ & $1$ & $-1$ & $0$ & $\gamma_{2,2}$ & $0$ \\
$\overline{\Phi}_{3}^{(2)}$ & $-1$ & $0$ & $0$ & $\gamma_{2,2}^{-1}$ & $2$ \\
$\overline{\Phi}_{4}^{(2)}$ & $0$ & $0$ & $-1$ & $\gamma_{2,1}^{-1}$ & $2$ \\
$\Phi_{5}^{(2)}$ & $0$ & $1$ & $0$ & 1 & $0$ \\
\hline
\end{tabular}
\caption{
Matter content for the tangle diagram in Figure \ref{fig:4_1}, 
where the first (resp. second) table gives 
the $R$-matrix $R^{0\ d_1}_{d_2 d_{12}}$ 
(resp. $\left(R^{-1}\right)^{d_2 d_{12}}_{d_1\ 0}$).
}
\label{fig8_R_matter}
\end{center}
\end{minipage}
\end{figure}

Consider a $(1,1)$-tangle diagram of 
the figure-eight knot $\knot{4}{1}$ in Figure \ref{fig:4_1}. 
The reduced $U(1)^2$ knot-gauge theory $T^{\text{red}}[\knot{4}{1}]$ has 
the chiral fields in Table \ref{fig8_R_matter} and 
Chern-Simons couplings
\begin{align}
k_{11}=\frac12,\quad
k_{12}=k_{21}=-\frac12,\quad
k_{1c}^{\text{g-f}}=\frac12,\quad
k_{2c}^{\text{g-f}}=-1,\quad
k_1^{\text{g-R}}=\frac32,\quad
k_2^{\text{g-R}}=-3,
\end{align}
and described as a coupled system of the theories labeled by
\begin{align}
\begin{split}
&
T_1=T[R^{d_2 0}_{0\ d_2}],\quad
T_2=T[R^{0\ d_1}_{d_2 d_{12}}(\gamma_{1,1},\gamma_{1,2})],\quad
T_3=T[\overline{R}^{d_2 d_{12}}_{d_1\ 0}(\gamma_{2,1},\gamma_{2,2})],
\\
&
T_4=T[\overline{R}^{d_{12} 0}_{0\ d_{12}}],\quad
T_5=T[\mu_{d_{12}}],\quad
T_6=T[\overline{\mu}_{d_2}].
\end{split}
\end{align}
Here $d_{12}=d_1-d_2$, and we note local deformations of tangle as
\begin{align}
\inc{twR}
\label{twist_R}
\end{align}
for the positive crossings and also for the negative crossings as well.
Here mass parameters 
$\bgamma=(\gamma_{1,1},\gamma_{1,2},\gamma_{2,1},\gamma_{2,2})$ 
are introduced for the building block theories $T_2$ and $T_3$. 
The K-theoretic vortex partition function is given by
\begin{align}
I^{T^{\text{red}}[\knot{4}{1}]}_{\textrm{vortex}}(\bsig;\bz,\bgamma,q)=
\sum_{d_1, d_2} z_1^{d_1}z_2^{d_2}
\prod_{i=1}^{6} I_{n}^{T_i}(\bsig;q),
\label{k_vpf_4_1}
\end{align}
where $\bsig=(\sigma_1, \sigma_2)$. 
Here $\bz=(z_1, z_2)$ are the exponentiated FI parameters associated with 
the $U(1)_1\times U(1)_2$ gauge symmetry, and 
for $\xi_i=-\textrm{Re}(\log z_i)$ we take 
$(\xi_1, \xi_2)$ inside $\textrm{Cone}(Q_1, Q_2)$ which 
results in the JK residue at $\bsig=\bsig^*=(1,1)$, where
$Q_1=(1,-1)$ and $Q_2=(0,1)$ are charge vectors of 
$\Phi_5^{(1)}$ and $\Phi_5^{(2)}$, respectively. 
As a result, by $\bz \to \bz^*=(-1,-1)$ 
following \eqref{R_limit} and \eqref{invR_limit}, and 
$\bgamma \to \bgamma^*$ with 
$\gamma_{1,1}^*=\gamma_{1,2}^*=\gamma_{2,1}^*=\gamma_{2,2}^*=1$, 
the K-theoretic vortex partition function \eqref{k_vpf_4_1} yields 
the normalized colored Jones polynomial of $\knot{4}{1}$:
\begin{align}
I^{T^{\text{red}}[\knot{4}{1}]}_{\textrm{vortex}}(\bsig^*;\bz^*,\bgamma^*,q)&=
\sum_{0 \le d_2 \le d_1 \le n}
R^{d_2 0}_{0\ d_2} R^{0\ d_1}_{d_2 d_{12}}
\left(R^{-1}\right)^{d_2 d_{12}}_{d_1\ 0}
\left(R^{-1}\right)^{d_{12} 0}_{0\ d_{12}}
\mu_{d_{12}} \mu^{-1}_{d_2} 
=J_n^{\knot{4}{1}}(q)
\nonumber\\
&=
\sum_{0 \le d_2 \le d_1 \le n}
(-1)^{d_1+d_2} q^{\frac12 d_2(d_2-2d_1-2n-3)+\frac12 d_1(d_1+1)}
\frac{\qq{q}{d_1} \qq{q}{n}}
{\qq{q}{d_2} \qq{q}{d_{12}} \qq{q}{n-d_1}}.
\label{jones_4_1}
\end{align}

\begin{remark}
By a formula \cite[Lemma A.1]{MurakamiMurakami}
\begin{align}
\sum_{0 \le d \le k} (-1)^{d} q^{\frac12 d(d+\ell)}
\frac{\qq{q}{k}}{\qq{q}{d} \qq{q}{k-d}}
=
\qpoch{q^{\frac12 (\ell+1)}}{q}{k}
=
(-1)^{k} q^{\frac12 k(k+\ell)} \qpoch{q^{\frac12 (1-\ell-2k)}}{q}{k},
\label{lemma_murakami2}
\end{align}
which follows from the $q$-Pascal relation \eqref{q_pascal}, 
the colored Jones polynomial \eqref{jones_4_1} is written by 
a single sum as \cite{Murakami:00, Le:03, Habiro:02}
\begin{align}
J_n^{\knot{4}{1}}(q)=
\sum_{0 \le d_1 \le n}
q^{-(n+1)d_1}
\qpoch{q^{n+2}}{q}{d_1} \qpoch{q^{n-d_1+1}}{q}{d_1}.
\end{align}
\end{remark}

\subsubsection{3-twist knot $\knot{5}{2}$}\label{subsec:5_2}

\begin{figure}[t]
\begin{minipage}{0.35\textwidth}
\begin{center}
\includegraphics[width=48mm]{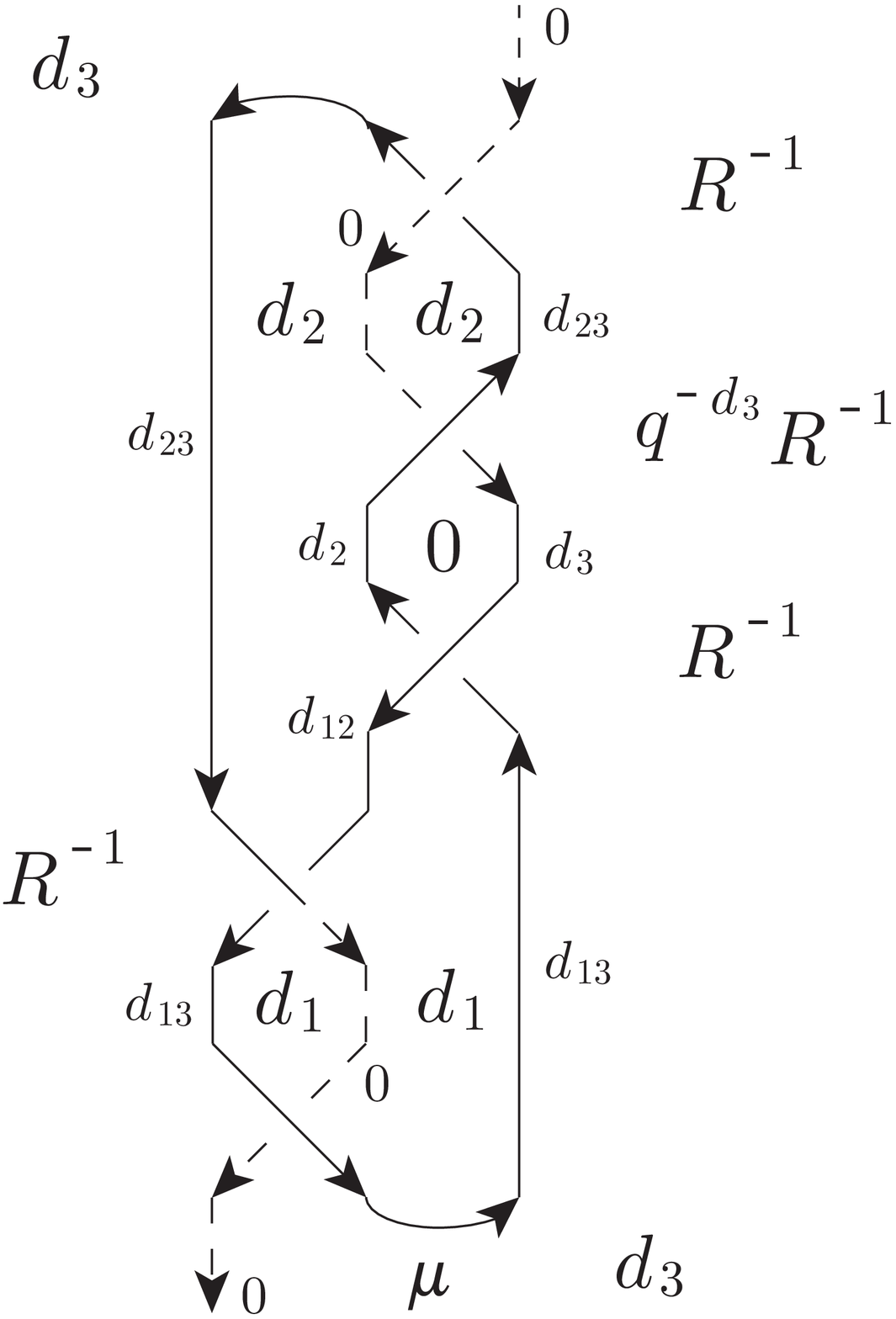}
\caption{$(1,1)$-tangle diagram of the 3-twist knot $\knot{5}{2}$, 
where $0 \le d_{12} \le d_3 \le d_2 \le n$.}
\label{fig:5_2}
\end{center}
\end{minipage}
\hspace{1em}
\begin{minipage}{0.6\textwidth}
\begin{center}
\makeatletter
\def\@captype{table}
\makeatother
\begin{tabular}{|c|c|c|c||c|c||c|}
\hline
Field & $U(1)_{1}$ & $U(1)_{2}$ & $U(1)_{3}$ & $U(1)_{c}$ & mass & $U(1)_R$ 
\\ 
\hline
$\Phi_{1}^{(1)}$ & $0$ & $-1$ & $0$ & $1$ & $\gamma_{1,1}$ & $0$ \\
$\Phi_{2}^{(1)}$ & $0$ & $0$ & $0$ & $0$ & $\gamma_{1,2}$ & $0$ \\
$\overline{\Phi}_{3}^{(1)}$ & $0$ & $0$ & $-1$ & $0$ & $\gamma_{1,2}^{-1}$ & $2$ \\
$\overline{\Phi}_{4}^{(1)}$ & $0$ & $1$ & $-1$ & $-1$ & $\gamma_{1,1}^{-1}$ & $2$ \\
$\Phi_{5}^{(1)}$ & $0$ & $0$ & $1$ & $0$ & $1$ & $0$ \\
\hline
%
\hline
$\Phi_{1}^{(2)}$ & $0$ & $0$ & $-1$ & $1$ & $\gamma_{2,1}$ & $0$ \\
$\Phi_{2}^{(2)}$ & $1$ & $0$ & $-1$ & $0$ & $\gamma_{2,2}$ & $0$ \\
$\overline{\Phi}_{3}^{(2)}$ & $0$ & $-1$ & $0$ & $0$ & $\gamma_{2,2}^{-1}$ & $2$ \\
$\overline{\Phi}_{4}^{(2)}$ & $1$ & $-1$ & $0$ & $-1$ & $\gamma_{2,1}^{-1}$ & $2$ \\
$\Phi_{5}^{(2)}$ & $-1$ & $1$ & $1$ & $0$ & $1$ & $0$ \\
\hline
%
\hline
$\Phi_{1}^{(3)}$ & $0$ & $-1$ & $1$ & $1$ & $\gamma_{3,1}$ & $0$ \\
$\Phi_{2}^{(3)}$ & $1$ & $-1$ & $0$ & $0$ & $\gamma_{3,2}$ & $0$ \\
$\overline{\Phi}_{3}^{(3)}$ & $-1$ & $0$ & $1$ & $0$ & $\gamma_{3,2}^{-1}$ & $2$ \\
$\overline{\Phi}_{4}^{(3)}$ & $0$ & $0$ & $0$ & $-1$ & $\gamma_{3,1}^{-1}$ & $2$ \\
$\Phi_{5}^{(3)}$ & $0$ & $1$ & $-1$ & $0$ & $1$ & $0$ \\
\hline
\end{tabular}
\caption{
Matter content for the tangle diagram in Figure \ref{fig:5_2}, 
where the first, second and third tables correspond to 
the inverse $R$-matrices $\left(R^{-1}\right)^{d_2 0}_{d_3 d_{23}}$, 
$\left(R^{-1}\right)^{d_3 d_{13}}_{d_2 d_{12}}$ and 
$\left(R^{-1}\right)^{d_{23} d_{12}}_{d_{13} 0}$, respectively.
}
\label{52_R_matter}
\end{center}
\end{minipage}
\end{figure}

Consider the reduced $U(1)^3$ 
knot-gauge theory $T^{\text{red}}[\knot{5}{2}]$ 
associated with a $(1,1)$-tangle diagram of 
the 3-twist knot $\knot{5}{2}$ in Figure \ref{fig:5_2}. 
The chiral fields are in Table \ref{52_R_matter}, and 
by noting the local deformations \eqref{twist_R} 
the non-zero Chern-Simons couplings are given by
\begin{align}
\begin{split}
k_{11}=k_{12}=k_{21}=-\frac12,\quad
k_{22}=1,\quad
k_{33}=k_{13}=k_{31}=\frac12,\quad
k_{23}=k_{32}=-1,
\\
k_{1c}^{\text{g-f}}=\frac32,\quad
k_{2c}^{\text{g-f}}=1,\quad
k_{3c}^{\text{g-f}}=-\frac32,\quad
k_1^{\text{g-R}}=\frac32,\quad
k_2^{\text{g-R}}=1,\quad
k_3^{\text{g-R}}=-\frac72,\quad
k_c^{\text{f-R}}=4.
\end{split}
\end{align}
The gauge theory is a coupled system of the theories labeled by
\begin{align}
\begin{split}
&
T_1=T[\overline{R}^{0\ d_{23}}_{d_{23} 0}],\quad
T_2=T[\overline{R}^{d_2 0}_{d_3 d_{23}}(\gamma_{1,1},\gamma_{1,2})],\quad
T_3=T[\overline{R}^{d_3 d_{13}}_{d_2 d_{12}}(\gamma_{2,1},\gamma_{2,2})],
\\
&
T_4=T[\overline{R}^{d_{23} d_{12}}_{d_{13} 0}(\gamma_{3,1},\gamma_{3,2})],\quad
T_5=T[\overline{R}^{d_{13} 0}_{0\ d_{13}}],
\\
&
T_6=T[\mu_{d_{13}}],\quad
T_7=T[\mu_{d_{23}}],\quad
T_8=T[\overline{\mu}_{d_2}],
\end{split}
\end{align}
where $d_{ij}=d_i-d_j$, and 
mass parameters $\bgamma=(\gamma_{1,1},\gamma_{1,2},\gamma_{2,1},\gamma_{2,2},\gamma_{3,1},\gamma_{3,2})$ 
are introduced 
in the building block theories $T_2$, $T_3$ and $T_4$. 
The K-theoretic vortex partition function is given by
\begin{align}
I^{T^{\text{red}}[\knot{5}{2}]}_{\textrm{vortex}}(\bsig;\bz,\bgamma,q)=
\sum_{d_1, d_2, d_3} z_1^{d_1}z_2^{d_2}z_3^{d_3}
\prod_{i=1}^{8} I_{n}^{T_i}(\bsig;q),
\label{k_vpf_5_2}
\end{align}
where $\bsig=(\sigma_1, \sigma_2, \sigma_3)$. 
Here $\bz=(z_1, z_2, z_3)$ are the exponentiated FI parameters for 
the $U(1)_1 \times U(1)_2 \times U(1)_3$ gauge symmetry, 
and for $\xi_i=-\textrm{Re}(\log z_i)$ we take 
$(\xi_1, \xi_2, \xi_3)$ inside $\textrm{Cone}(Q_1, Q_2, Q_3)$ 
so that only the JK residue at $\bsig=\bsig^*=(1,1,1)$ finally contributes, 
where $Q_1=(0,0,1)$, $Q_2=(-1,1,1)$, and $Q_3=(0,1,-1)$ are 
the charge vectors of $\Phi_5^{(1)}$, $\Phi_5^{(2)}$, and $\Phi_5^{(3)}$, 
respectively. 
The K-theoretic vortex partition function \eqref{k_vpf_5_2} then yields, 
by $\bz \to \bz^*=(1,-1,-1)$ following \eqref{invR_limit} 
and $\bgamma \to \bgamma^*$ with 
$\gamma_{k,\ell}^*=1$, 
the normalized colored Jones polynomial of $\knot{5}{2}$:
\begin{align}
&
I^{T^{\text{red}}[\knot{5}{2}]}_{\textrm{vortex}}(\bsig^*;\bz^*,\bgamma^*,q)
\nonumber
\\
&
=
q^{\frac54 n(n+2)}
\sum_{0 \le d_{12} \le d_3 \le d_2 \le n}
q^{-d_3}
\left(R^{-1}\right)^{0\ d_{23}}_{d_{23} 0} 
\left(R^{-1}\right)^{d_2 0}_{d_3 d_{23}}
\left(R^{-1}\right)^{d_3 d_{13}}_{d_2 d_{12}}
\left(R^{-1}\right)^{d_{23} d_{12}}_{d_{13} 0}
\left(R^{-1}\right)^{d_{13} 0}_{0\ d_{13}} \,
\mu_{d_{13}} 
\nonumber
\\
&
=
J_n^{\knot{5}{2}}(q)
\nonumber
\\
&=
\sum_{0 \le d_{12} \le d_3 \le d_2 \le n}
q^{-d_{12}d_2-d_3d_{23} + 2n(d_{12}+d_{23}) + d_{12} + d_2 - 2d_3 + 2 n}
\frac{\qq{q}{d_2} \qq{q}{n-d_{12}} \qq{q}{n}}
{\qq{q}{n-d_2} \qq{q}{n-d_3} \qq{q}{d_3-d_{12}} 
\qq{q}{d_{23}} \qq{q}{d_{12}}}.
\label{jones_5_2}
\end{align}

\subsection*{Acknowledgements}
The authors thank 
Hiroyuki Fuji
and
Masahito Yamazaki 
for valuable discussions.
This work is partially supported by Andrew Sisson Fund and by JST CREST Grant Number JPMJCR14D6 and by JSPS KAKENHI Grant Numbers 
JP17K05243, 
JP17K05414, 
JP21K03240.  

\appendix

\section{$q$-Pochhammer symbol}\label{app:pochhammer}

The $q$-Pochhammer symbol is defined by
\begin{align}
\begin{split}
\qpoch{x}{q}{d}=\frac{\qpoch{x}{q}{\infty}}
{\qpoch{x q^d}{q}{\infty}}=
\begin{cases}
\prod_{k=0}^{d-1} \left(1-x q^k \right)\ \ 
&\textrm{if}\ d \in {\IZ}_{>0},
\\
1
&\textrm{if}\ d =0,
\\
\prod_{k=d}^{-1} \left(1-x q^k \right)^{-1}\ \ 
&\textrm{if}\ d \in {\IZ}_{<0}.
\end{cases}
\label{qPoch}
\end{split}
\end{align}
The $q$-Pochhammer symbol has the following properties:
\begin{align}
&
\qpoch{x}{q}{d}
=\qpoch{q^{d-1}x}{q^{-1}}{d}
=(-x)^d q^{\frac{1}{2}d(d-1)}\qpoch{x^{-1}}{q^{-1}}{d}
=(-x)^d q^{\frac{1}{2}d(d-1)}\qpoch{q^{-d+1}x^{-1}}{q}{d}
\nonumber
\\
&\hspace{2.9em}
=\qpoch{q^d x}{q}{-d}^{-1}
=\qpoch{q^{-1} x}{q^{-1}}{-d}^{-1}
=(-x)^d q^{\frac{1}{2}d(d-1)}\qpoch{q x^{-1}}{q}{-d}^{-1},
\label{qPoch_prop1}
\\
&
\qpoch{x}{q}{d_1+d_2}
=\qpoch{x}{q}{d_1}\qpoch{q^{d_1}x}{q}{d_2}
=\frac{\qpoch{x}{q}{d_1}}{\qpoch{q^{d_1+d_2}x}{q}{-d_2}}
=\frac{\qpoch{q^{d_2}x}{q}{d_1}}{\qpoch{q^{d_2}x}{q}{-d_2}},
\label{qPoch_prop2}
\end{align}
for any $d, d_1, d_2 \in {\IZ}$ and a generic $x \in {\IC}$. 
We also use a notation $\qq{q}{d}=\qpoch{q}{q}{d}$ when $x=q$.



\end{document}